\documentclass[10pt]{amsart}

\usepackage{macros,slashed}

\linespread{1.25}

\usepackage{parskip}
\setlength{\parindent}{18pt}
\setlength{\parindent}{0cm}

\title{The holomorphic bosonic string}

%

\usepackage{tikz}
\usetikzlibrary{arrows,shapes}
\usetikzlibrary{trees}
\usetikzlibrary{matrix,arrows}
\usetikzlibrary{positioning}
\usetikzlibrary{calc,through}
\usetikzlibrary{decorations.pathreplacing}
\usepackage{pgffor}

\usetikzlibrary{decorations.pathmorphing}
\usetikzlibrary{decorations.markings}
\tikzset{
    vector/.style={decorate, decoration={snake}, draw},
	provector/.style={decorate, decoration={snake,amplitude=2.5pt}, draw},
	antivector/.style={decorate, decoration={snake,amplitude=-2.5pt}, draw},
    fermion/.style={draw=black, postaction={decorate},
        decoration={markings,mark=at position .55 with {\arrow[draw=black]{>}}}},
    fermionbar/.style={draw=black, postaction={decorate},
        decoration={markings,mark=at position .55 with {\arrow[draw=black]{<}}}},
    fermionnoarrow/.style={draw=black},
    gluon/.style={decorate, draw=black,
        decoration={coil,amplitude=4pt, segment length=5pt}},
    scalar/.style={dashed,draw=black, postaction={decorate},
        decoration={markings,mark=at position .55 with {\arrow[draw=black]{>}}}},
    scalarbar/.style={dashed,draw=black, postaction={decorate},
        dwecoration={markings,mark=at position .55 with {\arrow[draw=black]{<}}}},
    scalarnoarrow/.style={dashed,draw=black},
    electron/.style={draw=black, postaction={decorate},
        decoration={markings,mark=at position .55 with {\arrow[draw=black]{>}}}},
	bigvector/.style={decorate, decoration={snake,amplitude=4pt}, draw},
}

\begin{document}

\author{Owen Gwilliam}
\address{Max Planck Institute for Mathematics, Bonn, Germany}
\email{gwilliam@mpim-bonn.mpg.de}

\author{Brian Williams}
\address{Department of Mathematics, Northwestern University, Evanston, IL}
\email{brianwilliams2012@u.northwestern.edu}

\subjclass[2010]{Primary 81T30, 81T70, Secondary 58D27, 17B69}

\date{}

\begin{abstract}
We describe and analyze a holomorphic version of the bosonic string in the formalism of quantum field theory developed by Costello and collaborators, which provides a powerful combination of renormalization theory and the Batalin-Vilkovisky formalism. Our focus here is on the case in which the target space-time is a vector space. We identify the critical dimension as an obstruction to satisfying the quantum master equation, and when the obstruction vanishes, we construct a one-loop exact quantization. Moreover, we show how the factorization algebra of observables recovers the BRST cohomology of the string and use this perspective to give a new construction of its Gerstenhaber structure. Finally, we show how the factorization homology along closed manifolds encodes the determinant line bundle over the moduli space of Riemann surfaces.
An auxiliary goal of this paper is to give an exposition of this QFT formalism with the holomorphic bosonic string theory as the running example.
\end{abstract}

\maketitle

\setcounter{tocdepth}{1}

\tableofcontents

\section{Introduction}

Two intertwined goals govern our exposition.
First, we want to describe a two-dimensional field theory,
which we view as a holomorphic version of bosonic string theory,
and its perturbative quantization.
We'll see that this theory encodes the chiral sector of a bosonic string with linear target space,
justifying our interpretation.
Second, we want to use this theory as the running example for key ideas and techniques in the formalism for quantum field theory developed by Costello and collaborators \cite{CosBook, CG1,CG2, LL1, GG1, GLL, LiVA}.
We hope to give readers a feel for how to use this formalism by exhibiting it with a beautiful theory.

Our focus is thus on narrative rather than detailed argumentation.
That is, we work systematically according the natural flow of the formalism. 
Along the way we emphasize the motivations behind each step rather than the nitty-gritty computations. 
Precedence is given to communicating the essence of an argument, over spelling everything out.
We do give detailed citations where such arguments can be found in the literature,
but we defer some not-yet-extant details to a forthcoming work on this theory with curved target space~\cite{GWcurved}.

None of the results here about string theory is new, 
as the bosonic string has been under intensive study for several decades,
but this formalism recovers them in a single, systematic process,
often giving a novel argument or perspective.
It is compelling to have a direct path from the action functional to such sophisticated constructions as the semi-infinite cohomology of a vertex algebra.
In fact, since so many of these results are familiar,
the reader may see more clearly what's distinctive and illuminating about this approach to field theory.

There are many references on the bosonic string that have influenced us.
In the physics literature there are the classic sources \cite{GSW1, GSW2, polchinski} that explain perturbative string theory. 
In addition, there is an extensive mathematically-oriented treatment of perturbative string theory in \cite{DP}, as well as D'Hoker's notes in Volume II of \cite{IAScourse}.
Our approach, while intimately related, starts with a ``first-order" description of the bosonic string. 

Given the vastness of the string theory literature,
it should not be a surprise that there is already work along these lines,
notably by Losev, Marshakov, and Zeitlin \cite{LMZ}.
One could view this paper as attempting to communicate many of their insights to those with an intuition growing out of homotopical algebra and the functorial approach to geometry.
Again, we note that the formalism of Costello provides a mathematical articulation and verification of many ideas long known to physicists, such as the Wilsonian view of renormalization and the Batalin-Vilkovisky (BV) approach to gauge and gravity theories.\footnote{We also note that given the literature's size,
and our relative and unfortunate ignorance of much of it,
we have chosen to mention a reference when we feel its description is particularly useful for us, 
even if it is not the original or standard reference for the result.}
This machinery allows us to revisit such prior work in a manner particularly amenable to mathematicians.

\subsection{Overview} \label{sec:bvoverview}

The central figure of this paper is a holomorphic analogue of the bosonic string.
We proceed, as usual in physics, from the classical to the quantum.

Hence, we begin by introducing the classical theory, 
expressed both in the BV formalism and also in terms of an action functional.
We take some time to identify this theory as the chiral sector of a limit of the bosonic string,
where the K\"{a}hler metric of the target is made very large. 
We also interpret the theory in the language of derived geometry.

We then turn to analyzing the deformations of this classical theory,
which by Costello's work admits a nice description in terms of a type of Gelfand-Fuks cohomology.
This perspective naturally leads to a discussion of string backgrounds.

With a firm grip on the classical theory, we turn to constructing the perturbative quantization.
We first work with a disk or $\CC$ as the source manifold,
and we review relevant features of Costello's approach to renormalization.
The usual dimensional Weyl anomaly appears as an obstruction to satisfying the quantum master equation,
a key condition in the BV formalism.
At this stage, the anomaly appears as a computation with Feynman diagrams.

The next section describes the vertex algebra of the quantized theory,
using the machinery of factorization algebras of \cite{CG1, CG2}.
We find this piece of the formalism particularly illuminating,
as it lets a mathematician understand how to read off the OPE from path integral manipulations.

We then turn to the case of a compact Riemann surface as the source manifold.
Here we discuss how the formalism relates to the global approach to computing anomalies using, for instance, the Grothendieck-Riemann-Roch formula.
We also discuss conformal blocks in this formalism.

Finally, we sketch how to modify the approach here to allow a complex manifold as the target.
This paper can be viewed as an expository precursor to future work,
which pushes into new territory (particularly in describing the vertex algebra).

\subsection{Lessons to bear in mind}

Before turning to our example,
we want to expound some key ideas of the Costello formalism so that the reader is alert to them when proceeding through the text.
That is, we wish to articulate here the structural features of this BV/renormalization package that make the arguments below conceptual.

For instance, in a gauge theory we know that connections provide the ``naive'' fields and that one must identify connections that are gauge-equivalent.
A mathematician would say the true fields are a {\em stacky} quotient of the naive fields.
Similarly, the critical locus of the action functional $S$ is the zero locus of its differential $\d S$ (ignoring some subtleties of the variational set-up),
which is the intersection of $\d S$ with the zero section of the cotangent space of the fields.
But in mathematics it is better to take {\em derived} intersections.

\begin{lesson}[Part 1, \cite{CG2}]
The classical BV formalism is a method for computing the derived critical locus of the action functional on the derived stack of fields.
Ghosts appear to describe the direction along which one quotients---the stacky direction---while the antifields appear to describe the direction along which one intersects---the derived direction.
\end{lesson}

We will describe our theory in the usual way, involving fields and ghosts, 
but we will also sketch its meaning in terms of global derived geometry,
which we find illuminates the deep connections between string theory and algebraic geometry.

Path integral quantization amounts to trying to put a kind of measure or volume form on the derived stack of fields.
When the fields form a linear space, 
there is a natural quantization that is translation-invariant along the fields,
which is the analogue of the Lebesgue measure on an ordinary vector space.

\begin{lesson}[\cite{GH}]
Linear BV quantization is functorial, and it behaves much like a determinant functor.
Hence, when one takes the fiberwise quantization of a family of linear theories,
one typically obtains a determinant line bundle over the base.
\end{lesson}

This situation is relevant to us because the theory we study arises from a simple free theory,
the free $\beta\gamma$ system, which lives on any Riemann surface.
Hence the quantization of the free $\beta\gamma$ system makes sense over the moduli of Riemann surfaces and naturally produces a line bundle.

To be more specific, our classical theory of interest arises by gauging the natural action of holomorphic vector fields on the free $\beta\gamma$ system.
As holomorphic vector fields are infinitesimal biholomorphisms, 
one can say that we couple the $\beta\gamma$ system to holomorphic gravity.
But then we recognize a natural consequence of our prior lessons.

\begin{lesson}[\S 5.11, \cite{CosBook}]
Gauging a classical theory corresponds to taking a stacky quotient of the original fields. 
To quantize the gauged theory corresponds to descending the quantization to the quotient.
Hence, an anomaly that prevents quantization should be understood as an obstruction to descent.
\end{lesson}

The formalism of Costello makes this relationship manifest, 
as the anomaly that appears in trying to produce a BV quantization---which is a Feynman diagram construction---is a cocycle in a dg Lie algebra determined by the classical field theory.
Thus, the anomaly determines an element of a natural Lie algebra cohomology group (in this case, Gelfand-Fuks cohomology),
whose descent-theoretic meaning is typically easy to recognize. 
Here we will discover the famed Weyl, or conformal, anomaly, which requires the target space to be real 26-dimensional. 

Anomalies are often characteristic classes, and this BV/renormalization package offers a structural explanation.
Most classical field theories---at least most of broad interest---make sense on a class of manifolds,
and so the anomaly ought to be determined by the local geometry of this class.
In more mathematical language we have the following.

\begin{lesson}[\cite{GGW}]
If a classical theory determines a sheaf on some site of manifolds (such as the site of Riemann surfaces and local biholomorphisms), 
then to quantize the theory over the whole site, 
it suffices to check on a generating cover (typically given by disks with geometric structure) but compatibly with all automorphisms.
\end{lesson}

In particular, the BV anomaly is a cocycle for the Lie algebra of automorphisms of the {\em formal} disk equipped with such geometric structures.
In other words, it lives in some kind of Gelfand-Fuks cohomology, which gives deep and informative connections with foliation theory and topology.

So far, everything we have mentioned is well-known in field theory, 
albeit often expressed in a different dialect of mathematics.
We now turn to the main new notion of this framework:
factorization algebras, which provide an efficient and powerful way to organize the local-to-global structure of the observables of a field theory.

\begin{lesson}[\cite{CG1,CG2}]
Every BV theory produces a factorization algebra. 
The local structure encodes the OPE algebra, so that for a chiral CFT, one recovers a vertex algebra. 
On compact manifolds, the global structure often has finite-dimensional cohomology because solutions to the equations of motion are typically finite-dimensional.
For a chiral CFT, one recovers the conformal blocks in this way.
\end{lesson}

A technical result of \cite{CG1} gives a precise articulation of this lesson,
and we will apply it to identify the vertex algebra arising from our holomorphic version of the bosonic string.

\subsection{Acknowledgements}

We learned this approach to perturbative field theory as students of Kevin Costello.
He guided us towards this theory of the holomorphic string,  
and he pointed out key results and features visible through this BV/renormalization formalism. 
OG spent some time on this theory in graduate school,
partly in collaboration with Yuan Shen,
whom he thanks for illuminating discussions and computations.
The authors also wish to thank Si Li for his typical incisive comments and insight on CFT,
which clarified some of the trickier technical aspects.
We thank the referee for their careful reading of an earlier version of this manuscript, and for their revision requests that improved the overall exposition. 
BW acknowledges support of IH\'{E}S during his visit in 2017 and
funding from the European Research Council (ERC) under the European
Union's Horizon 2020 research and innovation program (QUASIFT grant
agreement 677368).
Finally, we must express deep gratitude toward the Max Planck Institute for Mathematics,
which supported OG throughout his work on this project and which also allowed BW to visit repeatedly and thus contributed substantially to the efficacy of this collaboration.

\section{The classical holomorphic bosonic string} \label{sec:classical}

There is a basic format for a string theory, at least in the perturbative approach. 
One starts with a nonlinear $\sigma$-model, 
whose fields are smooth maps from a Riemann surface to a target manifold $X$;
in this setting we want the theory to make sense for an arbitrary Riemann surface as the source manifold.
In the usual bosonic string theory, 
this nonlinear $\sigma$-model picks out the harmonic maps from a Riemannian 2-manifold to a Riemannian manifold.
In our holomorphic setting,
the nonlinear $\sigma$-model picks out holomorphic maps from a Riemann surface to a complex manifold.
One then quotients the space of fields (and solutions to the equations of motion) with respect to reparametrization.
In the usual bosonic string,
one quotients by diffeomorphisms and Weyl rescalings (i.e., multiplying the metric by a positive real function), 
which can thus change the metric on the source.
In our setting, we quotient by biholomorphisms, which act on the complex structure on the source.

In this section we begin by describing our theory in the BV formalism.
We do not expect the reader to find the action functional immediately clear,
so we devote some time to analyzing what it means and how it arises from concrete questions.
We then turn to interpreting this classical BV theory using dg Lie algebras and derived geometry
(i.e., we identify the moduli space it encodes).
Finally, we conclude by sketching how our theory appears as the chiral sector of a degeneration of the usual bosonic string when the target is a complex manifold with a Hermitian metric.
Our theory thus does provide insights into the usual bosonic string;
moreover, it clarifies why so many aspects of the bosonic string,
like the anomalies or $B$-fields, 
have holomorphic analogues.

\subsection{The theory we study} 

Let $V$ denote a complex vector space (the target),
and let $\langle-,-\rangle_V$ denote the evaluation pairing between $V$ and its linear dual~$V^\vee$.
Let $\Sigma$ denote a Riemann surface (the source).
Let $T_\Sigma^{1,0}$ denote the holomorphic tangent bundle on $\Sigma$, 
let $\langle-,-\rangle_T$ denote the evaluation pairing between $T_\Sigma^{1,0}$ and its vector bundle dual~$T_\Sigma^{1,0*}$. 
These are the key geometric inputs.

In a BV theory, the fields are $\ZZ$-graded;
we call this the {\em cohomological grading}.
We have four kinds of fields:
\[
\begin{array}{ccccc}
\text{field} & -1 & 0 & 1 & 2\\
\hline
\gamma & & \Omega^{0,0}(\Sigma) \otimes V & \Omega^{0,1}(\Sigma) \otimes V & \\
\beta & & \Omega^{1,0}(\Sigma) \otimes V^\vee & \Omega^{1,1}(\Sigma) \otimes V^\vee & \\
c & \Omega^{0,0}(\Sigma, T^{1,0}_\Sigma) & \Omega^{0,1}(\Sigma, T^{1,0}_\Sigma) & \\
b & & & \Omega^{1,0}(\Sigma, T^{1,0 *}_\Sigma) & \Omega^{1,1}(\Sigma, T^{1,0 *}_\Sigma)
\end{array}
\]
More accurately, we have eight different kinds of fields, 
but we view each row as constituting a single type 
since each given row consists of the Dolbeault forms of a holomorphic vector bundle.
For instance, the field $\gamma$ is a $(0,*)$-form with values in the trivial bundle with fiber~$V$,
and the field $b$ is a $(0,*)$-form with values in the bundle $T^{1,0 *} \otimes~T^{1,0 *}$.

To orient oneself it is helpful to start by examining the fields of cohomological degree zero,
since these typically have a manifest physical meaning.
For instance, the degree zero $\gamma$ field is a smooth $V$-valued function
and hence the natural field for the nonlinear $\sigma$-model into~$V$.
The degree zero $c$ field is a smooth $(0,1)$-form with values in vector field ``in the holomorphic direction,''
and hence encodes an infinitesimal change of complex structure of~$\Sigma$.
The degree $-1$ part of $c$ contains the gauge fields of the theories, vector fields. 
The equations of motion dictate that these vector fields be holomorphic, so we are seeing the infinitesimal version of the symmetry by biholomorphisms we mentioned above.
These constitute the obvious fields to introduce for a holomorphic version of the bosonic string.
The fields $\beta$ and $b$ are less obvious but appear as ``partners'' (or antifields)
whose role is clearest once we have the action functional and hence equations of motion.

The action functional is
\begin{equation}\label{bosaction}
S(\gamma,\beta,c,b) = 
\int_\Sigma \langle \beta, \dbar \gamma \rangle_V 
+ \int_\Sigma \langle b, \dbar c \rangle_T 
+ \int_\Sigma \langle \beta, [c,\gamma] \rangle_V 
+ \int_\Sigma \langle b, [c,c] \rangle_T.
\end{equation}
(We discuss below how to think about fields with nonzero cohomological degrees as inputs.)
The equations of motion are thus
\begin{alignat*}{2}
0 &= \dbar \gamma + [c,\gamma] & \quad\quad  0 &= \dbar \beta + [c,\beta] \\
0  &= \dbar c + \tfrac{1}{2} [c,c] & \quad\quad  0 &= \dbar b + [c,b] 
\end{alignat*}
Note that these equations are familiar in complex geometry.
For instance, the equation purely for $c$ encodes a deformation of complex structure on $\Sigma$; concretely, it modifies the $\dbar$ operator to $\dbar + c$.
The other equations then amount to solving for holomorphic sections (of the relevant bundle) withe respect to this deformed complex structure.
For instance, the equation in $\gamma$ picks out holomorphic maps from $\Sigma$,
with the $c$-deformed complex structure, to~$V$.

The field $b$ can be understood as an ``antifield" to the ghost field $c$; in other words, it is an {\em antighost}.
In that sense, $b$ does not have any intrinsic, physical meaning by itself.  

\begin{rmk}
\label{rmk:bcbg}
Just looking at this action functional, one might notice that if one drops the last two terms,
which are cubic in the fields, then one obtains a free theory
\begin{equation}
S_{free}(\gamma,\beta,c,b) = 
\int_\Sigma \langle \beta, \dbar \gamma \rangle_V 
+ \int_\Sigma \langle b, \dbar c \rangle_T,
\end{equation}
which is known as the {\em free $bc\beta\gamma$ system}.
Thus, one may view the holomorphic bosonic string as a deformation of this free theory
by ``turning on'' those interaction terms.
We will repeatedly try a construction first with this free theory before tackling the string itself,
as it often captures important information with minimal work.
For instance, we will examine the vertex algebra for the free theory before seeing how the interaction affects the operator products.
Similarly, one can identify the anomaly already at the level of the free theory.

This viewpoint of arriving at the bosonic string as a deformation of a free CFT is central to the analysis of the string in the physics literature \cite{GSW1} and Chapter 2 of \cite{polchinski}. 
See also the work in \cite{Scherk}. 
\end{rmk}

\begin{rmk}
\label{rmk:curved}
It is easy to modify this action functional to allow a curved target,
i.e., one can replace the complex vector space $V$ with an arbitrary complex manifold~$X$. 
The fields $b,c$ remain the same.
The degree 0 field $\gamma$ still encodes smooth maps into $X$, but now the degree 1 field is a section of $\Omega^{0,1}(\Sigma, \gamma^*T^{1,0}_X)$.
Similarly, $\beta$ is now a section of $\Omega^{1,*}(\Sigma, \gamma^*T^{1,0*}_X)$.
The action is then
\begin{equation}\label{curved action}
S(\gamma,\beta,c,b) = 
\int_\Sigma \langle \beta, \dbar \gamma \rangle_{T_X}
+ \int_\Sigma \langle b, \dbar c \rangle_{T_\Sigma} 
+ \int_\Sigma \langle \beta, [c,\gamma] \rangle_{T_X}
+ \int_\Sigma \langle b, [c,c] \rangle_{T_\Sigma}.
\end{equation}
In Section \ref{sec:curved} we will indicate how the results with linear target generalize to this situation.
\end{rmk}

\subsection{From the perspective of derived geometry}

We would like to explain what this theory is about in more conceptual terms,
rather than simply by formulas and equations.
Thankfully this theory is amenable to such a description.
We will be informal in this section and not specify a particular geometric context (e.g., derived analytic stacks),
except when we specialize to the deformation-theoretic situation (i.e., perturbative setting) that is our main arena.

\def\Maps{\operatorname{Maps}}

Let $\cM$ denote the moduli space of Riemann surfaces,
so that a surface $\Sigma$ determines a point in~$\cM$.
Let $\Maps_{\dbar}(\Sigma,V)$ denote the space of holomorphic maps from $\Sigma$ to $V$,
and hence a bundle $\Maps_{\dbar}(-,V)$ over $\cM$ by varying~$\Sigma$.
For our equations of motion, the $\gamma$ and $c$ fields of a solution determine a point in this bundle~$\Maps_{\dbar}(-,V)$. 
The commutative algebra $\cO(\Maps_{\dbar}(\Sigma,V))$ of functions on the space encodes the  observables of the classical theory.

\def\RS{{\cR\cS}}

This construction makes sense on noncompact Riemann surfaces as well.
Let $\RS$ denote the category whose objects are Riemann surfaces and whose morphisms are holomorphic embeddings.
There is a natural site structure: a cover is a collection of maps $\{S_i \to \Sigma\}_i$ such that the union of the images is all of~$\Sigma$.
Then $\Maps_{\dbar}(-,V)$ defines a sheaf of spaces over~$\RS$.
The observables for the classical theory is, in essence, the {\em co}\/sheaf of commutative algebras~$\cO(\Maps_{\dbar}(-,V))$,
and hence provides a factorization algebra.

In fact, it is better to work with the derived version of these spaces.
One important feature of derived geometry is that the appropriate version of a tangent space at a point is, in fact, a cochain complex.
In our setting, a point $(c,\gamma)$ in $\Maps_{\dbar}(-,V)$ determines a complex structure $\dbar + c$ on $\Sigma$---we denote this Riemann surface by $\Sigma_c$---and $\gamma$ a $V$-valued holomorphic function on~$\Sigma_c$.
The tangent complex of $\Maps_{\dbar}(-,V)$ at $(c,\gamma)$ is precisely 
\[
\Omega^{0,*}(\Sigma_c,T^{1,0})[1] \oplus \Omega^{0,*}(\Sigma_c,V).
\]
The first summand is the usual answer from the theory of the moduli of surfaces 
(recall, for example, that the ordinary tangent space is the sheaf cohomology $H^1(\Sigma,T_\Sigma)$ of the holomorphic tangent sheaf),
and the second is usual elliptic complex encoding holomorphic maps.

\begin{rmk}
It is useful to bear in mind that the degree zero cohomology of the tangent complex will recover the ``naive'' tangent space. 
In our case, we have 
\[
H^1(\Sigma_c,T_{\Sigma_c}) \oplus H^0(\Sigma_c,V),
\]
which encodes deformations of complex structure and holomorphic maps.
Negative degree cohomology of the tangent complex detects infinitesimal automorphisms (and automorphisms of automorphisms, etc) of the space.
For instance, here we see $H^0(\Sigma_c,T_{\Sigma_c})$ appear in degree -1, 
since a holomorphic vector field is an infinitesimal automorphism of a complex curve.
These negative directions are called ``ghosts'' (or ghosts for ghosts, and so on) in physics.
The positive degree cohomology detects infinitesimal relations (and relations of relations, and so on).
\end{rmk}

Note that the underlying graded spaces of this tangent complex are the $c$ and $\gamma$ fields from the BV theory described above.
We emphasize that the tangent complex is only specified up to quasi-isomorphism,
but it is compelling that a natural representative is the BV theory produced by the usual physical arguments.
This behavior, however, is typical of the relationship between derived geometry and BV theories:
when physicists write down a classical BV theory, 
the underlying free theory is essentially always the tangent complex of a nice derived stack.

The reader has probably noticed that, yet again, we have postponed discussing the $\beta$ and $b$ fields.
From a derived perspective, the full BV theory describes the shifted cotangent bundle $\TT^*[-1]\Maps_{\dbar}(-,V)$.
At the level of a tangent complex, the shifted cotangent direction contributes
\[
\Omega^{1,*}(\Sigma_c,T^{1,0*})[-1] \oplus \Omega^{1,*}(\Sigma_c,V^\vee),
\]
whose underlying graded spaces are the $\beta$ and $b$ fields.
These ``antifields'' are added so that the overall space of fields has a 1-shifted symplectic structure  when $\Sigma$ is closed, and a shifted Poisson structure when $\Sigma$ is open.

\subsection{Relationship to the Polyakov action functional}

This holomorphic bosonic string has a natural relationship with the usual bosonic string.
We sketch it briefly, only considering a linear target.

We begin with a bosonic string theory where the source is a 2-dimensional smooth oriented Riemannian manifold $\Sigma$ and the target is a Hermitian vector space~$(V,h)$. 
The ``naive'' action functional is
\ben
S^{\text{naive}}_{Poly}(\varphi, g) = \int_\Sigma h(\varphi, \Delta_{g} \varphi)\, \dvol_g
\een
where the field $g$ is a Riemannian metric on $\Sigma$ and the field $\varphi$ is a smooth map from $\Sigma$ to~$V$.
The notation $\Delta_g$ denotes the Laplace-Beltrami operator on~$\Sigma$. 

Note that $S^{naive}_{Poly}$ is invariant under the diffeomorphism group ${\rm Diff}(\Sigma)$ and under rescalings of the metric
(i.e., the theory is classically conformal).
Typically we express these Weyl rescalings as $g \mapsto e^{f} g$ with $f \in~C^\infty(\Sigma)$.
As we are interested in a string theory, we want to gauge these symmetries.
In geometric language, we want to think about the quotient stack 
obtained by taking solutions to the equations of motion and quotienting by these symmetry groups.

Our focus is perturbative, so that we want to study the behavior of this action near a fixed solution to the equations of motion.
In other words, we want to work with the Taylor expansion of the true action near some solution.
Hence, we work around a fixed metric $g_0$ on $\Sigma$, 
and we substitute for the field $g$,
the term $g_0+\alpha$ where $\alpha \in \Gamma(\Sigma,\Sym^2(T_\Sigma))$.
That is, we will consider deformations of~$g_0$.
As $\varphi$ is linear, we just consider expanding around the zero map.
Thus our new fields are $\varphi \in C^\infty(\Sigma,V)$ and $\alpha \in \Gamma(\Sigma,\Sym^2(T_\Sigma))$.

There are also ghost fields associated to the symmetries we gauge.
First, there are infinitesimal diffeomorphisms,  which are described by vector fields on~$\Sigma$.
We denote this ghost field by $X \in \Gamma(\Sigma,T_\Sigma)$.
It acts on the fields by the transformation 
\[
(\varphi,\alpha) \mapsto (\varphi + X \cdot \varphi, \alpha + L_X \alpha), 
\]
where $L_X$ denotes the Lie derivative on tensors.
Second, there are infinitesimal rescalings of the metric, 
such as $\alpha \mapsto \alpha + f \alpha$, 
with ghost field $f \in~C^\infty(\Sigma)$.
The rescaling does not affect $\varphi$.
The two symmetries are compatible: 
given $f$ and $X$, then $L_{X} (f \alpha) = X(f) \alpha + f L_X \alpha$ for any $\alpha \in \Sym^2(T_\Sigma)$.

To summarize, we have the following graded vector space of fields:
\[
\begin{array}{ccccc}
\text{field/antifield} & -1 &0 & 1 & 2\\
\hline
\varphi, \varphi^\vee & & \Omega^{0}(\Sigma) \tensor V & \Omega^2(\Sigma) \tensor V & \\
\alpha, \alpha^\vee & &  \Omega^0(\Sigma,\Sym^2(T_\Sigma)) & \Omega^2(\Sigma ; \Sym^{2}(T^*_\Sigma)) & \\
X, X^\vee & {\rm Vect}(\Sigma) & & & \Omega^2(\Sigma ; T^*_\Sigma)\\
f, f^\vee & C^\infty(\Sigma) &&& \Omega^2(\Sigma) .
\end{array}
\]
The BV action functional is of the form:
\begin{align}\label{polyakov bv}
S^{BV}_{Poly} (\varphi, \alpha, X, f) = 
& \int_\Sigma h(\varphi, \Delta_{g_0} \varphi)\, \dvol_{g_0} + \sum_{n \geq 1} \frac{1}{n!} \int_{\Sigma} h(\varphi, D_n(\alpha) \varphi) \dvol_{g_0} \\
& +  \int_{\Sigma} h(\varphi, X \cdot \varphi) \dvol_{g_0} \\
& +  S'(X, f, \alpha) 
\end{align}
The right hand side of the first line amounts to expanding out the Laplace-Beltrami operator $\Delta_{g_0 + \alpha}$ as a function of $\alpha$.
Hence, the $D_n$ are differential operators of the form 
$D_n : \left(\Sym^2(T_\Sigma)\right)^{\tensor n} \to {\rm Diff}^{\leq 2} (\Sigma)$ 
where ${\rm Diff}^{\leq 2}(\Sigma)$ are order $\leq 2$ differential operators on $\Sigma$.
Thus, for each section $\alpha$ of $\Sym^2(T_\Sigma)$, we get a second-order differential operator $D_n(\alpha)$ acting on functions on $\Sigma$. (This term is the $n$th term in the Taylor expansion, so its homogeneous of order $n$: $D_n(t \alpha) = t^n D_n(\alpha)$ for a scalar $t$.)
The second line encodes how vector fields act on the maps of the $\sigma$-model.
The third line $S'(X, f, \alpha)$ is independent of $\varphi$
and only depends on the fields $f$, $X$, $\alpha$ and their antifields (denoted with checks $(-)^\vee$).
It is of the form
\ben
S'(f,X,\alpha) = \int_\Sigma \<\alpha^\vee, L_X (g_0 + \alpha) + f (g_0 + \alpha)\> + \int_\Sigma \<X^\vee, [X,X]\> + \int_\Sigma \<f^\vee, X \cdot f\> .
\een
The first term encodes how vector fields and Weyl transformations act on the perturbed metric $g_0 + \alpha$ and the remaining terms are required to ensure the gauge symmetry is consistent (satisfies the classical master equation). 

An explicit formula for $D_n(\alpha,\ldots,\alpha)$ is a rather involved exercise (and not needed here).
For instance, if we are working locally on $\Sigma = \RR^2$ with the $g_0$ the flat metric, 
then the operator $D_1(\alpha)$ is sum of a first-order and a second-order differential operator
\ben
D_1(\alpha) = \frac{1}{2} \frac{\partial}{\partial x^i} ({\rm tr}(\alpha)) \frac{\partial}{\partial x^i} + \frac{1}{2} {\rm tr} (\alpha) \frac{\partial}{\partial x^i} \frac{\partial}{\partial x^i}, 
\een
or in a more coordinate-free notation, 
\[
D_1(\alpha) = \frac{1}{2} \star \d \left({\rm tr}(\alpha) \star \d\right).
\] 
Here, we use the natural trace map ${\rm tr} : \Sym^2 T\Sigma \to C^\infty(\Sigma)$ of symmetric $2\times2$ matrices. 

There is an important parameter in this action functional: the Hermitian inner product $h$.
We can consider scaling it $h \to t h$, with $t \in (0,\infty)$.
The ``infinite volume limit'' as $t \to \infty$ admits a nice description,
provided one rewrites the action functional in a first-order formalism
(i.e., adjoins fields so that only first-order differential operators appear in the action,
which is a sort of action functional analogue of working with phase space).

\begin{lem} 
In this infinite volume limit
the bosonic string becomes equivalent to a BV theory whose action functional has the form
\ben
S(\beta, \gamma, b,c) + \Bar{S}(\Bar{\beta}, \Bar{\gamma}, \Bar{b}, \Bar{c}),
\een
where $S(\beta, \gamma, b,c)$ is the action functional for the holomorphic bosonic string in Equation (\ref{bosaction}) and $\Bar{S}$ is its anti-holomorphic conjugate. 
\end{lem}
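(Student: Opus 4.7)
The plan is to pass the Polyakov action to a first-order formulation and then take the $t\to\infty$ limit, where the chiral/antichiral split emerges cleanly. Using the complex structure $J_{g_0}$ on $\Sigma$ determined by the conformal class of $g_0$, we introduce auxiliary one-form fields $\beta \in \Omega^{1,0}(\Sigma)\otimes V^\vee$ and $\Bar{\beta}\in \Omega^{0,1}(\Sigma)\otimes \Bar{V}^\vee$, and replace the second-order kinetic term $t\int_\Sigma h(\varphi,\Delta_{g_0}\varphi)\dvol_{g_0}$ by
\begin{equation*}
\int_\Sigma \langle \beta,\dbar \varphi\rangle_V + \int_\Sigma \langle \Bar{\beta},\partial\varphi\rangle_V - \frac{1}{t}\int_\Sigma h^{-1}(\beta,\Bar{\beta}).
\end{equation*}
Integrating out $\beta,\Bar{\beta}$ returns the original action, so this rewriting is a standard BV-equivalence. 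Sending $t \to \infty$ kills the mass-like term, and after using the complex structure on $V$ to write $\varphi = \gamma + \Bar{\gamma}$, the residual kinetic piece splits into $\int \langle \beta,\dbar \gamma\rangle_V$ plus its anti-holomorphic conjugate.

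Next we apply the analogous procedure in the ghost sector. Decompose the diffeomorphism ghost $X = c + \Bar{c}$ into its $(1,0)$ and $(0,1)$ parts, and decompose the perturbation $\alpha \in \Sym^2(T_\Sigma)\otimes_{\RR} \CC$ into $(T^{1,0})^{\otimes 2}\oplus (T^{1,0}\otimes T^{0,1})\oplus (T^{0,1})^{\otimes 2}$. The trace (the $T^{1,0}\otimes T^{0,1}$ summand) is exactly the Weyl direction, gauged away by the ghost $f$; the two remaining chiral components of $\alpha$ combine with $c$ and $\Bar{c}$ via the Beltrami-differential identification into sections of $\Omega^{0,*}(\Sigma, T^{1,0})$ and its conjugate, which are precisely the (degree $\leq 0$ parts of the) string ghosts $c$ and $\Bar{c}$. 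The corresponding antighosts $X^\vee$ and $\alpha^\vee$ repackage into $b$ and $\Bar{b}$, and the term $S'(X,f,\alpha)$ becomes $\int \langle b,\dbar c + \tfrac{1}{2}[c,c]\rangle_T$ plus conjugate. Meanwhile the cubic vertex $\int h(\varphi,X\cdot\varphi)$ passes in the limit to $\int \langle \beta,[c,\gamma]\rangle_V$ plus conjugate, and the higher $D_n(\alpha)$ vertices drop out because they enter $S^{BV}_{Poly}$ through $\Delta_{g_0+\alpha}$ and hence carry explicit factors of $1/t$ once $\beta,\Bar{\beta}$ are integrated out. Assembling the surviving terms yields exactly $S + \Bar{S}$.

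The main obstacle is to verify that each of these manipulations is a genuine BV-equivalence, not merely an equality of classical actions: that introducing $\beta,\Bar{\beta}$ preserves the $(-1)$-shifted symplectic pairing, that the $t \to \infty$ limit is a well-defined limit of BV theories (with the mass term lying in a non-degenerate direction so that the limit is smooth), and that the Weyl/Beltrami repackaging of the geometric ghosts is compatible with the BV bracket and the classical master equation. The cleanest route is probably to recognize that the first-order, $t=\infty$ theory manifestly realizes $\TT^*[-1]\Maps_{\dbar}(-,V)$ fibered over $\cM$ together with its conjugate, in the sense of the derived-geometric description of the previous subsection, and then to appeal to the fact that the BV theory attached to such a $(-1)$-shifted symplectic derived stack is determined up to equivalence by its underlying derived moduli problem.
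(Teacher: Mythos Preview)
Your overall strategy is the same as the paper's: pass to a first-order formulation, take the $t\to\infty$ limit so the $h$-dependent quadratic term dies, and then split everything chirally using the complex structure determined by~$g_0$. The paper in fact introduces a single auxiliary $B\in\Omega^1(\Sigma)\otimes V^\vee$ and splits afterwards, whereas you introduce $\beta,\bar\beta$ already split, but this is cosmetic. Your identification in the ghost sector (trace of $\alpha$ eaten by the Weyl ghost, chiral parts of $\alpha$ as Beltrami differentials filling out the degree-$0$ part of $c$, chiral parts of $X$ as the degree~$-1$ part) is exactly the paper's.

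There is one genuine bookkeeping slip. You claim the $D_n(\alpha)$ vertices disappear in the limit and that the cubic $\int\langle\beta,[c,\gamma]\rangle_V$ comes entirely from $\int h(\varphi,X\cdot\varphi)$. The paper derives the degree-$0$ piece of that cubic \emph{from} the $D_n(\alpha)$ terms. The point is that the first-order kinetic term $\int\langle B,\d\varphi\rangle$ is metric-independent, but its decomposition into $\int\langle\beta,\dbar\gamma\rangle$ plus conjugate uses the complex structure of $g_0+\alpha$; writing $\dbar_{g_0+\alpha}=\dbar_{g_0}+c^{(0)}+\cdots$ with $c^{(0)}$ the Beltrami differential of $\alpha$ is precisely what produces $\int\langle\beta,c^{(0)}\cdot\gamma\rangle$. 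Your version, which fixes $J_{g_0}$ before splitting, pushes all the $\alpha$-dependence into the $1/t$ term and so honestly loses the matter--Beltrami coupling in the limit. You would then need an extra BV-equivalence step --- integrating out the now-decoupled $\alpha,\alpha^\vee$ against the diffeomorphism/Weyl symmetry --- to recover that coupling, and you have not supplied it. The $X\cdot\varphi$ term only accounts for the degree~$-1$ (ghost) piece of~$\int\langle\beta,[c,\gamma]\rangle$.
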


\begin{rmk} 
The action functional $\Bar{S}$ is similar to $S$ where the fields $\gamma,\beta,b,c$ are replaced by sections in the the relevant conjugate bundles. 
For example, $\beta \in \Omega^{1,*}(\Sigma)$ becomes $\Bar{\beta} \in \Omega^{*,1} (\Sigma)$. 
Moreover, the operator $\dbar$ is replaced by the holomorphic Dolbeault operator $\partial$. 
Another way of saying this is that $\Bar{S}$ is the holomorphic string on $\Bar{\Sigma}$, which is the conjugate complex structure to $\Sigma$. 
\end{rmk}

\begin{rmk}
For physics references that study the holomorphic splitting of the Polyakov action (and more general conformal field theories), 
we refer to the original sources~\cite{Belavin, Laz}.
\end{rmk}

\begin{proof}[Outline of proof] 
There are two things that may cause alarm in the statement of the claim. 
First, the space of fields of the Polyakov string (in the BV language) and those of the holomorphic bosonic string do not match up. 
Second, the infinite volume limit $t \to \infty$ is naively ill-defined using the action functional (\ref{polyakov bv}). 
It turns out that these two issues are solved by the same maneuver. 

We begin with the first term in the first line of (\ref{polyakov bv}). 
Notice that it is simply the action functional for the $\sigma$-model of maps from $(\Sigma, g_0)$ to $(V, h)$. 
It is shown in Appendix 21 of \cite{GGW} how to make sense of the infinite volume limit of this usual $\sigma$-model. 
The idea is to rewrite this theory in the {\em first order formalism}.
This amounts to introducing a new field $B \in \Omega^1(\Sigma) \tensor V^\vee$ and action functional 
\ben
\int_\Sigma \<B, \d \varphi\>_V - \frac{1}{2} \int_\Sigma h^\vee(B, \star B)
\een
where $\<-,-\>_V$ represents the evaluation pairing between $V$ and its dual, 
$\star$ is the Hodge star operator for the metric $g_0$, 
and $h^\vee$ denotes the dual metric on $V$. 
This action functional is equivalent to the original $\sigma$-model;
one can compare the equations of motion. 
Moreover, since $(th)^\vee = (1/t)h^\vee$, 
the dual $(th)^\vee$ goes to $0$ in the infinite volume limit $t \to \infty$, 
and hence kills the second term in the first order action. 
The remaining theory splits as the direct sum of the free $\beta\gamma$ system with target $V$ and its anti-holomorphic conjugate. 
At the level of fields, the original field $\varphi$ corresponds to $\gamma + \Bar{\gamma}$ in the first order description,
and $B$ corresponds to~$\beta+\Bar{\beta}$. 

We now consider the remaining terms in the first line of the action (\ref{polyakov bv}). 
Note that this action only depends on the conformal class of the metric, 
i.e., on the metric up to Weyl rescaling.
Hence this feature remains true in the infinite volume limit,
which simplifies the situation considerably, as we now explain.

It is convenient to work in holomorphic coordinates for the complex structure determined by the background metric $g_0$. 
With respect to this complex structure, the tensor square of the cotangent bundle splits as
\ben
T^* \tensor T^* = T^{*0,1}\tensor T^{1,0} \oplus T^{*1,0} \tensor T^{0,1} \oplus T^{*1,0} \tensor T^{*0,1} \oplus T^{*0,1} \tensor T^{*1,0} .
\een
Sitting inside of this bundle is the symmetric square, where the field $\alpha$ lives. 
With respect to this splitting, we write sections as $\alpha = c + \Bar{c} + f g_0$, where $f \in C^\infty(\Sigma)$. 
But since the action only depends on the conformal class of the metric, 
only the fields $c$ and $\Bar{c}$ are relevant.
In the first-order formalism of the preceding paragraph, 
we thus find that the remaining terms in the first line of (\ref{polyakov bv}) read 
\ben
\int_\Sigma \<\beta , [c, \gamma]\>_V + \int_\Sigma \<\Bar{\beta} , [\Bar{c}, \bar{\gamma}]\>_V.
\een
This first term is precisely the third term in the holomorphic string action functional (\ref{bosaction}), 
which describes how deformations of complex structure couple to the fields of the $\sigma$-model. 

In the infinite volume limit, the term $S'(f, X, \alpha)$ recovers the terms
\[
\int_\Sigma \langle b, \dbar c \rangle_T + \int_\Sigma \langle b, [c,c] \rangle_T
\]
in the action of the holomorphic string, plus their conjugates. 
The arguments are similar to those we have just sketched. 
\end{proof}

\begin{rmk} Another approach to arrive at the holomorphic theory we consider comes from considering supersymmetry. 
Without gravity, the pure holomorphic $\sigma$-model can be viewed as the {\em holomorphic twist} of the $N=(2,0)$ supersymmetric $\sigma$ model (in this case the target is required to be K\"{a}hler). 
Moreover, the $\beta\gamma bc$ system is the holomorphic twist of the $N=(2,2)$ model. 
Conjecturally, we expect the holomorphic theory of gravity we consider to be the holomorphic twist of two-dimensional $N=2$ supergravity.
\end{rmk}

\begin{rmk} In this infinite volume limit, one can put the dependence of the target metric back into the theory by choosing a certain background to work in. 
In the BV formalism this amounts to choosing a certain deformation parameter, which in this instance corresponds to infinitesimal deformations of the target metric.
Note that to deform the metric on the target we leave the world of ``holomorphic field theory" as the deformation involves both $z$ and $\zbar$ dependent terms. 
It would be interesting to study how to formulate the theory with finite target metric in the BV formalism.
\end{rmk}

\section{Deformations of the theory and string backgrounds}
\label{sec: moduli}

Whenever one is studying a theory,
it is helpful to understand how it can be modified 
and how features of the theory change as one adjusts natural parameters of the theory,
such as coupling constants of the action functional.
In other words, one wants to understand the theory in the moduli space of classical theories.

In the BV formalism, because we are working homologically, this moduli space is derived,
and there is a tangent complex to our theory in the moduli of classical BV theories.
We call it the {\em deformation complex} of the theory.
A systematic discussion can be found in Chapter 5 of~\cite{CosBook}.

As a gloss, the underlying graded vector space of this deformation complex consists of the local functionals on the jets of fields, i.e., Lagrangian densities.
(Note that we allow local functionals of arbitrary cohomological degree.) 
There is also a shifted Lie bracket $\{-,-\}$, 
which arises from the pairing $\int_\Sigma \langle-,-\rangle$ on the fields.
It is, in essence, the shifted Poisson bracket corresponding to that shifted symplectic pairing on the fields.
The differential on the local functionals is then $\{S,-\}$, where $S$ is the classical action. 
All together, the deformation complex forms a shifted dg Lie algebra. 
Observe that if we find a degree zero element $I$ such that
\[
0=\{S +I,S +I\}=2\{S,I\}+\{I,I\},
\]
then $I$ is a shifted Maurer-Cartan element and 
hence determines a new classical BV theory whose action functional is $S + I$. 
In particular, degree 0 cocycles determine first-order deformations of the classical BV theory. Cocycles in degree -1 encode local symmetries of the classical theory; 
and obstructions to satisfying the quantum master equation end up being degree 1 cocycles.

In this section, we will explain why the deformation complex $\Def_{\rm string}$ of the holomorphic string 
can be expressed in terms of Gelfand-Fuks cohomology \cite{Fuks}. 
Along the way we will see how the usual backgrounds for the bosonic string (a target metric, dilaton term, and so on) appear as elements in this complex of local functionals and hence as deformations of the classical action. 

Right now, we will focus on the case $\Sigma = \CC$, 
and in Section \ref{sec: conformalblock} we will consider arbitrary Riemann surfaces.
We restrict ourselves to examining {\em translation-invariant} local functionals (which will allow us to descend to a theory defined on an elliptic curve).
Unpacking what this means will lead swiftly to Gelfand-Fuks cohomology.

\subsection{Deformations for the classical theory}

As a local functional is given by integration of a Lagrangian density,
translation invariance requires the density to be the Lebesgue measure $\d^2 z$, up to rescaling,  
and requires the Lagrangian to be specified by its behavior at one point.
Hence, a translation-invariant local functional on $\CC$ is determined by a function of the jet (i.e., Taylor expansion) of the fields at the origin in~$\CC$. 

It is particularly easy to understand what we mean in the case of the free $bc\beta\gamma$ system.
For instance, the $\gamma$ fields live in the Dolbeault complex $\Omega^{0,*}(\CC ; V)$,
and their jets at the origin are $(V [[z,\zbar]] [\d \zbar] , \dbar)$,
where $\dbar$ is the formal Dolbeault differential. 
An example of an element is thus $\widehat{\gamma} = \sum_{m,n} \frac{1}{m!n!}g_{mn} z^m \zbar^n$,
which is just a formal power series with values in $V$.
An example of a functional is
\[
F(\widehat{\gamma}) = g_{10} + g_{21} = \left( \partial_z \widehat{\gamma}\right)|_0 + \left( \partial_z^2 \partial_{\zbar} \widehat{\gamma}\right)|_0,
\]
which corresponds to the local functional
\[
F(\gamma) = \int_\CC \partial_z \gamma + \partial_z^2 \partial_{\zbar} {\gamma}\,\d^2 z.
\]
We call the first kind of term a {\em chiral} interaction, as it only depends on holomorphic derivatives.

By the $\dbar$-Poincar\'{e} lemma, 
this complex $(V [[z,\zbar]] [\d \zbar] , \dbar)$ is quasi-isomorphic to $V[[z]]$, concentrated in degree zero. 
This observation is actually quite concrete:
it simply says that for a solution $\gamma$ to the equation of motion $\dbar \gamma = 0$, 
its Taylor expansion is just a power series in $z$ and it is independent of $\zbar$.
In consequence, if we consider translation-invariant Lagrangians depending only on the $\gamma$ field, then up to quasi-isomorphism these are $\Sym(V^\vee[z^\vee])$.
In other words, only chiral interactions yield distinct modifications of the action,
when one takes into account the equation of motion.

Note that we have chosen to work with functionals of the fields
that are polynomials built out of continuous linear functionals $V^\vee[z^\vee]$ of the jets.
This choice is the standard and natural one for variational problems.
We note as well that constant functionals are irrelevant,
so we want to use $\Sym^{>0}(V^\vee[z^\vee])$ to describe translation-invariant local functionals.

An analogous argument applies to the $c$ field. 
It shows there is a quasi-isomorphism of dg Lie algebras 
between the jet at the origin of the Dolbeault complex $\Omega^{0,*}(\CC ; T^{1,0}_\CC)$ of holomorphic vector fields 
and the Lie algebra of formal vector fields $\wone = \CC[[z]]\partial_z$.
The translation-invariant Lagrangians depending only on the $c$ field 
are thus quasi-isomorphic to $\cred^*(\wone)$,
by which we mean the (reduced) {\em continuous} Lie algebra cohomology,
often known as the Gelfand-Fuks cohomology 
Similar arguments work for the $\beta$ and $b$ fields.

If we take all the fields into account together and consider the full equations of motion 
for the holomorphic string,
which couple the $c$ field to the others,
then these arguments yield the following.

\begin{lem}\label{lem: gf}
There is a quasi-isomorphism 
\[
\Def_{\rm string}(\CC,V)^\CC \simeq \cred^*(\wone, \Sym(V^\vee[z^\vee] \oplus V[z^\vee] \d z^\vee \oplus W_1^{\rm ad}[2])) [2]
\]
between the deformation complex of translation-invariant local functionals for the holomorphic string and a certain Gelfand-Fuks cochain complex.
\end{lem}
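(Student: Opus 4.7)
The plan is a four-stage reduction to Gelfand--Fuks cochains.

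\emph{Stage 1: Translation-invariance reduces to jets at $0$.} Any translation-invariant top-form on $\CC$ is a multiple of $\d^2 z$, and a translation-invariant Lagrangian density is determined by its germ at the origin. Hence a translation-invariant local functional is a continuous polynomial on the $\infty$-jet of the fields at $0\in\CC$, and $\Def_{\rm string}(\CC,V)^\CC$ is identified with (a $[2]$-shift of) the reduced continuous polynomial Chevalley--Eilenberg cochains of the jet $L_\infty$-algebra of the theory, using the standard local-functionals-to-cochains dictionary of \cite{CG2,CosBook} (the $[2]$-shift reflecting the complex dimension of the source).

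\emph{Stage 2: Formal Dolbeault lemma.} Row by row in the field table, the $\dbar$-Poincar\'e lemma replaces each Dolbeault jet complex by its holomorphic jets: the $c$-jets become $\wone=\CC[[z]]\partial_z$ as a dg Lie algebra; the $\gamma$-jets become $V[[z]]$; the $\beta$-jets become $V^\vee[[z]]\,\d z$; and the $b$-jets, which are the holomorphic jets of $T^{1,0*}\otimes T^{1,0*}$, identify via the residue pairing with the continuous dual of $\wone$. Since the BV kinetic terms contribute only these formal $\dbar$ differentials on jets, this yields a quasi-isomorphism of the free part of the jet $L_\infty$-algebra.

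\emph{Stage 3: Interactions give the Lie bracket and the module action.} The cubic interaction $\int\langle b,[c,c]\rangle_T$ recovers the Lie bracket of $\wone$ on itself, while $\int\langle\beta,[c,\gamma]\rangle_V$ recovers the natural action of $\wone$ by Lie derivative on $V[[z]]$ together with the dual action on $V^\vee[[z]]\,\d z$. Because $(\beta,b)$ play the role of shifted cotangent fibres to $(\gamma,c)$ in the BV formalism, the jet $L_\infty$-algebra becomes quasi-isomorphic to the shifted cotangent of the semidirect product $\wone\ltimes\bigl(V[[z]]\oplus V^\vee[[z]]\,\d z\bigr)$. Passing to reduced continuous Chevalley--Eilenberg cochains and dualizing each coefficient summand then produces exactly $\cred^*\bigl(\wone,\Sym(V^\vee[z^\vee]\oplus V[z^\vee]\,\d z^\vee\oplus W_1^{\rm ad}[2])\bigr)[2]$, where the $[2]$-shift on the adjoint factor absorbs the cohomological degree of $b$ together with the shifted-cotangent shift.

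\emph{Main obstacle.} The substantive work is Stage 3: keeping the cohomological shifts aligned. One must verify that the ghost degree of $c$, the antifield degree of $b$, the $(-1)$-shifted symplectic pairings implicit in $\int\langle b,c\rangle_T$ and $\int\langle\beta,\gamma\rangle_V$, and the global $[2]$-shift combine to place the adjoint piece of $\wone$ in degree $+2$, and that the $\wone$-module structure on the coefficient system is precisely the one induced by the cubic interactions. Once this bookkeeping is done, the remainder of the argument is a formal application of the $\dbar$-Poincar\'e lemma together with the standard BV-to-CE translation for holomorphic field theories developed in \cite{GGW}.
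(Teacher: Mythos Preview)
Your proposal is correct and follows essentially the same approach as the paper. The paper's argument, given informally in the paragraphs preceding the lemma, proceeds by the same three steps: translation invariance reduces to functions of jets at the origin, the formal $\dbar$-Poincar\'e lemma replaces each Dolbeault jet complex by holomorphic power series (so that the $c$-jets become $\wone$, the $\gamma$-jets become $V[[z]]$, and so on), and then the full equations of motion---which couple $c$ to the other fields via the cubic interactions---produce the $\wone$-module structure on the coefficients, yielding the stated Gelfand--Fuks complex.
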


This lemma already substantially simplifies our lives, 
as one can invoke the literature on Gelfand-Fuks cohomology.
But before we do,
we will take advantage of another symmetry condition to simplify the situation.

\subsection{Dilating cotangent fibers}

We have already seen how to think of the holomorphic bosonic string theory 
as corresponding to the shifted cotangent bundle $\TT^*[-1]{\rm Maps}_{\dbar}(-, V)$, 
as a bundle over the moduli of Riemann surfaces. 
There is a natural action of the group $ \CC^\times$ on this space
by scaling the shifted cotangent fibers,
and we will use the notation $\CC^\times_{\rm cot}$ to indicate this appearance of the multiplicative group.

This group action can be seen on the level of the field theory as follows: 
we give the $\gamma$ and $c$ fields---the base of the cotangent bundle---weight $0$ and give the $\beta$ and $b$ fields---the cotangent fiber---weight~$1$. 
Note that, in consequence, the pairing $\langle-,-\rangle$ on fields thus has weight -1.
In these terms, the classical action functional is weight 1. 
Thus, we focus on weight 1 deformations of the action for the holomorphic bosonic string,
as we are interested in local functionals of the same kind.
That means we consider the subcomplex of weight 1 local functionals inside the deformation complex.

\begin{rmk}\label{rmk: classical weights}
Although this action $S$ has weight 1, 
its role in the cochain complex of classical observables is to define the differential $\{S,-\}$.
Observe that the shifted Poisson bracket $\{-,-\}$ has weight -1, 
because it is determined by the pairing, 
and so the differential has weight 0. 
\end{rmk}

This subcomplex admits a nice description in terms of the geometry of the target.

\def\wt{{\rm wt}}

\begin{lem}
\label{lem: def complex wt zero} 
There is a $\GL(V)$-equivariant quasi-isomorphism
\[
\Def_{\rm string}(\CC)^{\CC, \wt(1)} \simeq \Sym(V^*) \tensor V[1]
\]
between the weight 1, translation-invariant deformation complex 
and the polynomial vector fields on $V$, placed in degree~-1.
\end{lem}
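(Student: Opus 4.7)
The plan is to extract the weight-$1$ part of the Gelfand-Fuks description of Lemma~\ref{lem: gf} and identify its cohomology, guided by the derived-geometric heuristic that weight-$1$ functions on a shifted cotangent bundle $\TT^*[-1]X$ are (shifted) vector fields on~$X$. First I would identify weights: under the cotangent scaling $\CC^\times_{\rm cot}$, the generators $V^\vee[z^\vee]$ in the coefficient module of Lemma~\ref{lem: gf} (dual to the $\gamma$-jets) have weight $0$, while the generators $V[z^\vee]\d z^\vee$ (dual to the $\beta$-jets) and $W_1^{\rm ad}[2]$ (dual to the $b$-jets) both have weight $1$. The weight-$1$ subcomplex is therefore obtained by restricting to symmetric-algebra monomials containing exactly one weight-$1$ generator:
\[
\Def_{\rm string}(\CC)^{\CC,\wt(1)} \simeq \cred^*\bigl(W_1,\ \Sym(V^\vee[z^\vee]) \otimes (V[z^\vee]\d z^\vee \oplus W_1^{\rm ad}[2])\bigr)[2].
\]
This decomposes as the direct sum of a $\beta$-summand and a $b$-summand, which I would analyze separately.

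For the $\beta$-summand, the coefficient module $\Sym(V^\vee[z^\vee]) \otimes V[z^\vee]\d z^\vee$ is, dually, the space of polynomial vector fields on the formal mapping space $V[[z]]$, so the Gelfand-Fuks complex computes $W_1$-equivariant polynomial vector fields on $V[[z]]$. I would filter by polynomial degree in $V^\vee[z^\vee]$ and apply the classical Fuks computations of continuous $W_1$-cohomology with coefficients in tensor modules on the formal disk (see \cite{Fuks}): at each filtration degree $n$ the cohomology is concentrated in a single cohomological degree and yields a copy of $\Sym^n(V^*)\otimes V$, obtained from the $W_1$-invariant pairing between the evaluation at $z=0$ of the $\gamma$-polynomials and the residue of the $\beta$-1-form. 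Summing over $n$ and tracking the cohomological shifts (the overall $[2]$ from the Lemma combined with the internal degree of the $\beta$-generator) produces $\Sym(V^*)\otimes V[1]$. This matches the derived-geometric expectation: $W_1$-equivariance collapses the space of polynomial vector fields on the jet space $V[[z]]$ to polynomial vector fields on $V$ itself.

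For the $b$-summand, a parallel spectral-sequence argument reduces the problem to computing $\cred^*(W_1, \Sym^n(V^\vee[z^\vee])\otimes W_1^{\rm ad})$ at each polynomial degree $n$. The classical Fuks theorem that $\cred^*(W_1, W_1^{\rm ad})=0$, combined with the observation that the nontrivial $W_1$-action on the tensor factors prevents new invariants from appearing, is expected to force this summand to be acyclic. The $\GL(V)$-equivariance is manifest throughout since $V$ and $V^\vee$ appear only as trivial $W_1$-modules and $\GL(V)$ commutes with the $W_1$-action. The principal obstacle is executing the Gelfand-Fuks computations with full care: verifying the acyclicity of the $b$-summand and pinning down the cohomological degrees so that the final shift works out to $[1]$. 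Both steps reduce to standard tensor-module cohomology results for $W_1$ in the continuous/topological setting appropriate to formal power series.
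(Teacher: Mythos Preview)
The paper does not actually supply a proof of this weight-$1$ lemma; the ``Sketch of proof'' that follows is devoted entirely to the companion weight-$0$ statement. So there is no paper argument to compare against directly. Your overall strategy---extract the weight-$1$ piece from the Gelfand--Fuks description of Lemma~\ref{lem: gf} and split into a $\beta$-summand and a $b$-summand---is exactly the right starting point and is consistent with what the authors do in the weight-$0$ case.

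Where your route diverges from the paper's method is in the actual cohomology computation. You propose filtering by polynomial degree and invoking classical Fuks results on tensor modules, whereas the paper's technique (as exhibited for weight~$0$) is the contracting-homotopy reduction to the conformal-dimension-zero subcomplex via $\iota_{L_0}$. The latter is both more elementary and more explicit: once you restrict to $L_0$-weight zero, only finitely many $\lambda^n$'s and $\zeta_k$'s can appear, and the complex becomes a small diagram you can write down by hand. For the $\beta$-summand this would immediately isolate $\Sym(V^\vee)\otimes V$ in the correct degree, and for the $b$-summand it would make the acyclicity visible directly rather than through a spectral sequence.

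The main gap in your proposal is the treatment of the $b$-summand. You cite the vanishing $\cred^*(W_1, W_1^{\rm ad}) = 0$ and then assert that ``the nontrivial $W_1$-action on the tensor factors prevents new invariants from appearing.'' That inference is not automatic: tensoring a module with acyclic cohomology against another nontrivial module can easily produce nonzero cohomology, so the step needs an argument. The conformal-dimension trick handles this cleanly and would also let you pin down the cohomological shift to $[1]$ without the ambiguity you flag at the end.
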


Concretely, this result says that there are no weight zero interactions that are not not trivialized by an automorphism of the theory.
This claim is a consequence of the fact that the zeroth cohomology group vanishes.
On the other hand, this lemma says the theory admits a large group of symmetries,
namely diffeomorphisms of the target, 
which appears as the degree -1 cohomology.

The $\GL(V)$ equivariance takes into account the natural symmetries of the target. 
It also is the first step in the approach to studying the deformation complex with general curved target. 
We will discuss this further in the section on string backgrounds. 

\subsection{Interaction terms that appear at one loop}

As we will see in Section \ref{sec: quantization}, 
the quantization of the holomorphic string only involves local functional of weight zero for this $\CC^\times_{\rm cot}$-action.
(Concretely, this restriction appears because the one-loop Feynman diagrams only have external legs for $c$ and $\gamma$ fields.)
Hence, it behooves us to compute the weight zero subcomplex of the deformation complex as well.

\begin{lem}
\label{lem: def complex wt zero} 
There is a $\GL(V)$-equivariant quasi-isomorphism
\[
\Def_{\rm string}(\CC)^{\CC, \wt(0)} \simeq \CC[-1] \oplus \Omega^2_{cl}(V)[1] \oplus \Omega^1(V) \oplus \Omega^1_{cl}(V)[-1] 
\]
between the weight 0, translation-invariant deformation complex 
and natural complexes related to the geometry of the target.
\end{lem}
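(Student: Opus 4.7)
The plan is to apply Lemma~\ref{lem: gf} and then restrict to weight~$0$ under the $\CC^\times_{\rm cot}$-action. By construction, the $\beta$- and $b$-coefficients $V[z^\vee]\d z^\vee$ and $W_1^{\rm ad}[2]$ sit in weight~$1$, while the $\gamma$-coefficients $V^\vee[z^\vee]$ sit in weight~$0$, so the weight-$0$ subcomplex is obtained by dropping the $\beta$ and $b$ factors entirely:
\[
\Def_{\rm string}(\CC)^{\CC,\wt(0)} \simeq \cred^*\bigl(\wone,\, \Sym(V^\vee[z^\vee])\bigr)[2].
\]
Thus the task reduces to computing this reduced continuous Gelfand-Fuks cohomology.

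Next, I would recognize $\Sym(V^\vee[z^\vee])$ as polynomial functions on $V[[z]]$, the formal holomorphic maps from the formal $1$-disk into~$V$, with its natural $\wone$-module structure coming from infinitesimal reparametrization of the source. The pair $(\wone,\,V[[z]])$ is a Harish-Chandra pair of exactly the sort Gelfand and Kazhdan use to reconstruct natural tensorial objects on a complex manifold from formal jet data. Via this descent, classes in the Gelfand-Fuks complex correspond to $\GL(V)$-natural differential-geometric objects on~$V$, which explains a priori why one expects the answer to be built out of forms on~$V$.

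The computation itself I would carry out by filtering the coefficient module by polynomial degree in $V^\vee[z^\vee]$ and, at each stage, decomposing $\wone = z\wone \oplus \CC\cdot\partial_z$. The first step reduces coefficients to polynomial tensors on~$V$ (pulled back along the formal exponential), and the second introduces a differential which, under this identification, is essentially the de Rham differential on~$V$. Standard Gelfand-Fuks vanishing in intermediate degrees kills most possible classes, and the survivors assemble precisely into constants, $1$-forms, closed $1$-forms, and closed $2$-forms on~$V$. After the overall $[2]$ shift these land in degrees $1$, $0$, $1$, and $-1$ respectively, producing the four summands $\CC[-1]$, $\Omega^1(V)$, $\Omega^1_{cl}(V)[-1]$, and $\Omega^2_{cl}(V)[1]$ listed in the claim; the $\GL(V)$-equivariance is automatic from the Gelfand-Kazhdan descent.

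The main obstacle is pinning down the closure conditions. One has to verify that the induced differential in the second stage of the filtration is exactly $\d$ on forms, so that the relevant kernels are $\Omega^*_{cl}(V)$ rather than the full $\Omega^*(V)$. In practice this amounts to a single explicit Gelfand-Fuks coboundary computation pairing $\partial_z \in \wone$ against the modules $V^\vee \otimes \Sym^{\geq 1}(V^\vee)$ and checking that the induced map is the exterior derivative. Once this identification is nailed down and combined with vanishing of $H^*_{cts}(\wone,\Sym^k(V^\vee[z^\vee]))$ in the degrees that do not appear in the statement, the four pieces assemble as claimed.
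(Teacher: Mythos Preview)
Your opening reduction is correct and matches the paper exactly: dropping the cotangent-weight~$1$ generators leaves
\[
\Def_{\rm string}(\CC)^{\CC,\wt(0)} \simeq \cred^*\bigl(\wone;\,\Sym(V^\vee[z^\vee])\bigr)[2],
\]
and the Gelfand--Kazhdan heuristic you give for why the answer should be built from forms on $V$ is the right intuition.

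The gap is in the actual computation. Your proposed tool, the decomposition $\wone = z\wone \oplus \CC\cdot\partial_z$, is only a splitting of vector spaces: neither summand is an ideal (for instance $[L_1,L_{-1}]=-2L_0$), so there is no Hochschild--Serre spectral sequence attached to it, and it is not clear what ``taking $z\wone$ first, then $\partial_z$'' means as a cohomological operation. Likewise ``standard Gelfand--Fuks vanishing in intermediate degrees'' is not a specific result one can invoke for these coefficient modules. As written, the middle of the argument is a plausibility sketch rather than a proof, and in particular the step where the de~Rham differential is supposed to emerge is not justified.

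The paper's route is different and much more direct. One observes that the Euler element $L_0=z\partial_z$ grades everything by \emph{conformal dimension}, and contraction $\iota_{L_0}$ is a chain homotopy showing that the inclusion of the dimension-zero subcomplex is a quasi-isomorphism. That subcomplex is tiny: on the $\wone$ side only $\lambda_{-1},\lambda_0,\lambda_1$ can appear, and on the module side only $\zeta_0$ and a single $\zeta_1$. Writing it out, one sees a two-row complex whose diagonal maps are literally $\d_{dR}:\sO_{red}(V)\to\Omega^1(V)$; the identification with $\Omega^2_{cl}(V)[1]\oplus\Omega^1(V)\oplus\Omega^1_{cl}(V)[-1]$ (plus the $\CC[-1]$ from $H^3_{red}(\wone)$) then falls out by inspection. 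If you want to salvage your approach, the substantive content you are missing is precisely this $L_0$-homotopy; once you have it, the Gelfand--Kazhdan picture you describe becomes a genuine computation rather than an analogy.
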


Before explaining the key steps of the proof, 
we remark that there is another, more structural way to see that only weight zero local functionals should be relevant.
A quick physical argument would say that we want the path integral measure $\exp(-S/\hbar)$ to be weight zero,
which forces $\hbar$ to have weight one to cancel out with the weight of the classical action.
But the one-loop term $I_1$ in the quantized action $S^\q = S + \hbar I_1 + \cdots$ must then have weight zero.

There is a BV analogue of this argument.
It notes that the differential of the quantum observables has the form $\{S^\q,-\} + \hbar \Delta$,
where $\Delta$ denotes the BV Laplacian.
(See Section \ref{subsec: QME} for a discussion of these objects.)
As the BV Laplacian has weight -1 because it is determined by the bracket,
we must give $\hbar$ weight 1 to ensure the total differential has weight zero.
Again the one-loop interaction is forced to have weight zero.

\subsubsection{Sketch of proof}
We have already mentioned that we can identify the full translation-invariant deformation complex with a certain Gelfand-Fuks cohomology. 
In terms of this Gelfand-Fuks cohomology we find that the cotangent weight zero piece is identified with 
\ben
\Def_{\rm string}(\CC)^{\CC, \wt(0)} = \cred^*\left(\wone ; \Sym(V^\vee[z^\vee]) \right) [2]. 
\een 
We will drop the overall shift by $2$ until the end of the proof. 

Any symmetric algebra has a natural maximal ideal: 
for any vector space $W$,
\[
\Sym(W) = \CC \oplus \Sym^{\geq 1}(W).
\] 
Thus, we can decompose our complexes as
\ben
 \cred^*\left(\wone ; \Sym(V^\vee[z^\vee]) \right) =   \cred^*(\wone) \oplus \clie^*\left(\wone ; \Sym^{\geq 1} (V^\vee[z^\vee]) \right) .
 \een
The first summand is the reduced Gelfand-Fuks cohomology of formal vector fields with values in the trivial module.
It is well-known that $H^3_{\rm red} (\wone) \cong \CC[-3]$, 
i.e., this cohomology is one-dimensional and concentrated in degree~$3$. 

We now proceed to computing the second summand. 
Denote by $\{L_n = z^{n+1} \partial_z\}$ the standard basis for the Lie algebra of formal vector fields $\wone$. 
Notice that the Euler vector field $L_0 = z \partial_z$ induces a grading on $\wone$,
that we will call {\em conformal dimension}.
Note that $L_n$ has conformal dimension $n$. 
This grading extends naturally to the Chevalley-Eilenberg complex of $\wone$ with coefficients in any module. 

Let $\lambda_n \in \wone^\vee$ be the dual vector to $L_n$. 
(We work with the continuous dual vector space, as in the setting of Gelfand-Fuks cohomology.) 
An arbitrary element of $V [[z]]$ is linear combination of vectors of the form $v \tensor z^k$. 
Write $\zeta_k$ for the dual element $(z^k)^\vee$. 
Thus an element of $(V [[z]])^\vee$ is a linear combination of the vectors of the form $v^\vee \tensor \zeta_k$. 

 \begin{lem} \label{lem: gf}
 Let $M$ be any $\wone$-module. 
The inclusion of the subcomplex of conformal dimension zero elements
\ben
\clie^*(\wone ; M)^{\wt(0)} \xto{\simeq} \clie^*(\wone ; M)
\een
is a quasi-isomorphism.
\end{lem}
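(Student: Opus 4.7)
My plan is to run the standard Cartan--homotopy argument, using that the conformal dimension grading on $\clie^*(\wone; M)$ is implemented by the inner derivation coming from the Euler element $L_0 \in \wone$.

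First I would observe that $L_0 = z\partial_z$ acts on $\wone$ by the adjoint action $[L_0, L_n] = n L_n$ and on $M$ by the hypothesized conformal weight grading. These two actions combine to a derivation $\mathcal{L}_{L_0}$ of the Chevalley--Eilenberg complex whose eigenspaces are exactly the conformal dimension subspaces. Because $\d_{CE}$ is assembled from the bracket on $\wone$ and the module action on $M$, both of which are $\mathcal{L}_{L_0}$-equivariant, the differential commutes with $\mathcal{L}_{L_0}$. Hence the complex splits as a direct sum indexed by conformal dimension, and it suffices to show that the subcomplex of conformal dimension $d$ is acyclic for every $d \neq 0$.

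Next I would invoke Cartan's magic formula. Writing $\iota_{L_0}$ for the contraction operator of cohomological degree $-1$, the standard identity
\[
\mathcal{L}_{L_0} \;=\; \d_{CE} \circ \iota_{L_0} \,+\, \iota_{L_0} \circ \d_{CE}
\]
can be checked directly from the definitions of $\iota_{L_0}$ and $\d_{CE}$. On the conformal dimension $d$ subcomplex the left-hand side acts as $d \cdot \mathrm{id}$, so $\iota_{L_0}/d$ provides a contracting chain homotopy, forcing that subcomplex to be acyclic. Summing over $d \neq 0$ then yields the desired quasi-isomorphism.

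The only part requiring genuine care is making the weight decomposition precise in the continuous setting of Gelfand--Fuks cohomology, where $\wone$ is infinite-dimensional and cochains live in a completed symmetric algebra on $\wone^\vee$. I expect this to be bookkeeping rather than substantive: each continuous cochain is built from finitely many of the dual basis vectors $\lambda_n$ together with a factor in $M$, so it carries a well-defined finite conformal dimension, and the apparent product decomposition reduces to a literal direct sum once restricted to eigenspaces of $\mathcal{L}_{L_0}$. Given this caveat, the Cartan-homotopy step above is formal.
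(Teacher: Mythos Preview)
Your proposal is correct and is essentially the same argument the paper gives: the paper also introduces the contraction operator $\iota_{L_0}$ and uses the Cartan identity $[\d,\iota_{L_0}] = \mathcal{L}_{L_0}$ to kill the nonzero-weight pieces. Your write-up is in fact more explicit than the paper's (which compresses the whole thing into two sentences), and your caveat about the direct-sum decomposition in the continuous setting is exactly the right bookkeeping point to flag.
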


\begin{proof} 
For each $p-1 \geq 0$, 
define the operator $\iota_{L_0} : \clie^{p}(\wone ; M) \to \clie^{p-1}(\wone ; M)$ by sending a cochain $\varphi$ to the cochain
\ben
(\iota_{L_0}\varphi)(X_1,\ldots,X_p) = \varphi(L_0, X_1,\ldots,X_p) .
\een 
Let $\d$ be the differential for the complex $\clie^*(\wone ; M)$. It is easy to check that the difference $\d \iota_{L_0} - \iota_{L_0} \d$ is equal to the projection onto the dimension zero subspace. 
\end{proof}

The underlying graded vector space of this conformal dimension zero subcomplex splits as follows:
\be\label{splitting}
\clie^{\#}(\wone)^{\wt(0)} \tensor \left(\Sym^{\geq 1}\left(V [[z]]\right)^\vee \right)^{\wt(0)} \oplus \clie^{\#}(\wone)^{\wt(1)} \tensor \left(\Sym^{\geq 1}\left(V [[z]]\right)^\vee\right)^{\wt(-1)}
\ee
In the first component, the purely dimension zero part of the reduced symmetric algebra is simply $\Sym^{\geq 1}(V^\vee)$, i.e., power series 
on $V$ with no constant term.
We denote this algebra concisely as $\cO_{red}(V)$, for reduced functions on $V$.
Similarly, in the second component, 
the dimension one part of $\Sym^{\geq 1}\left(V[[z]]\right)^\vee$ is of the form ${\rm Sym}(V^\vee) \tensor z^\vee V^\vee$, which is naturally identified with~$\Omega^1(V)$. 

The differential in this Gelfand-Fuks complex has the form
\[
\xymatrix{
\overset{\ul{0}}{1 \otimes \sO_{red}(V)} \ar[rd]^{\d_{dR}} & \overset{\ul{1}}{\lambda^0 \otimes \sO_{red}(V)} \ar[r] \ar[rd]^{\d_{dR}} & \overset{\ul{2}}{\lambda^{-1} \wedge \lambda^1 \otimes \sO_{red}(V)} & \overset{\ul{3}}{\lambda^{-1} \wedge \lambda^1 \wedge \lambda^0 \otimes \sO_{red}(V)} \\
 & \lambda^{-1} \otimes \Omega^1(V) \ar[r] & \lambda^{-1} \wedge \lambda^0 \otimes\Omega^1(V) &
}
\]
The top line comes from the first summand in (\ref{splitting}) and the bottom line corresponds to the second summand.
The top horizontal map sends $\lambda^0$ to $2 \cdot \lambda^{-1} \wedge \lambda^1$, 
and the bottom horizontal map sends $\lambda^{-1}$ to $\lambda^{-1} \wedge \lambda^0$ (both are the identity on $V$). 
The diagonal maps are given by the de Rham differential $\d_{dR} : \sO_{red}(V) \to \Omega^1(V)$. 
This complex is quasi-isomorphic to 
\[
\xymatrix{
1 \otimes \sO_{red}(V) \ar[rd]^{\d_{dR}} & & & \lambda^{-1} \wedge \lambda^1 \wedge \lambda^0 \otimes \sO_{red}(V) \\
 & \lambda^{-1} \otimes \Omega^1(V) & \lambda^{-1} \wedge \lambda^0 \otimes\Omega^1(V) &
}
\]
which, in turn, is identified with $\Omega^{2}_{cl}(V)[-1] \oplus \Omega^1(V)[-2] \oplus \Omega^1_{cl}(V)[-3]$. 
After accounting for the overall shift by $2$, 
we arrive at the identification of the $\CC_{\rm cot}^\times$-weight zero component of the translation-invariant deformation complex.

\subsection{Interpretation as string backgrounds}

We now discuss, in light of the calculations above, how to interpret string backgrounds in our approach. 
Since $V$ is flat,
we will see that the following deformations will be trivializable. 
Note that this trivializations will {\em not} be equivariant for the obvious $\GL(V)$ action (or for non-flat targets, general diffeomorphisms of the target). 
Thus, these deformations are relevant for the case of a curved target, and we can give an interpretation of them in terms of the usual perspective of {\em string backgrounds}. 

We have already mentioned that we should think of the $\CC^\times_{\rm cot}$ weight $1$ local functionals as deformations of the classical theory as a cotangent theory.
The cohomological degree zero deformations of the weight one deformations is $H^1(V ; T_V)$. 
Given any such element $\mu \in H^1(V ; T_V)$ we can consider the following local functional
\ben
\int_\Sigma \<\beta, \mu(\gamma)\>_V .
\een 
The element $\mu$ determines a deformation of the complex structure of $V$, and we have prescribed an action functional encoding this deformation. 
We propose that this an appearance of the ordinary curved background in bosonic string theory from the perspective of the holomorphic model we work with.

There are interesting deformations that go outside of the world of cotangent theories. 
Consider the cohomological degree zero part of the weight 0 complex. 
There is a term of the form $H^1(V ; \Omega^2_{cl}(V))$.
It is shown in Part 2 Section 8.5 of \cite{GGW} how closed holomorphic two-forms determine local functionals of the $\beta\gamma$ system with curved target. 
A sketch of this construction goes as follows.
Locally we can write a closed holomorphic 2-form as $\d \theta$ for some holomorphic one-form $\theta \in \Omega^1(V)$. 
If $\gamma : \Sigma \to V$ is a map of the $\sigma$-model there is an induced map (when $\gamma$ satisfies the equations of motion) $\gamma^* : \Omega^1(V) \to \Omega^1(\Sigma)$. 
We can then integrate $\gamma^* \theta$ along any closed cycle $C$ in $\Sigma$ and one should think of this as a residue along $C$. 
In \cite{GGW} we write down a local functional that realizes this residue, and one can show that it only depends on the corresponding class in $H^1(V ; \Omega^2_{cl}(V))$. 
We posit that this is the appearance of the $B$-field deformation of the ordinary bosonic string. 

In future work we aim to study how our description of holomorphic string backgrounds compares to the approaches of string backgrounds in the physics literature. 
See for instance \cite{CFMP} for an overview.

\section{Quantizing the holomorphic bosonic string on a disk} 
\label{sec: quantization}

For us, quantization will mean that we use perturbative constructions in the setting of the BV formalism.
Concretely, this means that we enforces the gauge symmetries using the homological algebra of the BV formalism and that we use Feynman diagrams and renormalization to obtain an approximation for the desired, putative path integral. 
There are toy models for this approach where one can see very clearly how it gives asymptotic expansions for finite-dimensional integrals \cite{GJF}.
In particular, these toy models show that this approach need not recover the true integral
but does know important information about it;
a similar relationship should hold between this quantization method and the putative path integral, 
but in this case there is no {\em a priori} definition of the true integral in most cases.

This notion of quantization applies to any field theory arising from an action functional,
and the algorithm one applies to obtain a quantization is the following:
\begin{enumerate}
\item Write down the integrals labeled by Feynman diagrams arising from action functional.
\item Identify the divergences that appear in these integrals and add ``counterterms'' to the original action that are designed to cancel divergences.
\item Repeat these steps until no more divergences appear in Feynman diagrams.
We call this the ``renormalized action.''
\item Check if the renormalized action satisfies the quantum master equation. 
If it does, you have a well-posed BV quantum theory, and we call the result a {\em quantized action}. If not, guess a way to adjust the renormalized action and begin the whole process again.
\end{enumerate}
It should be clear that along the way, one makes many choices;
hence if a quantization exists, it may not be unique.
It is also possible that a BV quantization may not exist.

In this section we will apply the algorithm in the case of $\Sigma~=~\CC$.
For this theory we are lucky, however:
at one-loop the integrals that appear in our quantization from the Feynman diagrams do not have divergences,
so that renormalized action is easy to compute.
This aspect is the subject of the first part of this section.
(In Section \ref{sec: conformalblock} we will provide an argument based on deformation theory as to why quantizations exist on arbitrary Riemann surfaces.)
Moreover, it is easy to check whether the quantum master equation is satisfied,
and the answer is simple.
This aspect is the subject of the second part.
The results can be summarized as follows.

\begin{prop}
The holomorphic bosonic string with source $\CC$ and target $\CC^d$ admits a BV quantization
if $d = 13$.
This quantized action only has terms of order $\hbar^0$ and $\hbar$ (i.e., it quantizes at one loop).
\end{prop}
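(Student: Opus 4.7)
The plan is to execute the four-step algorithm at the start of the section, using the deformation-theoretic computations of Section~\ref{sec: moduli} to organize what would otherwise be a lengthy diagrammatic analysis.

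First, I would verify that the classical action $S$ satisfies the classical master equation $\{S,S\}=0$; this is a direct unpacking of the Jacobi identity for the bracket of vector fields and the compatibility of $c$ acting by Lie derivative on the other fields. I would then deduce one-loop exactness purely from a weight argument for the $\CC^\times_{\rm cot}$-action. Each cubic interaction has weight $1$, the BV bracket $\{-,-\}$ has weight $-1$, and by Remark~\ref{rmk: classical weights} one assigns $\hbar$ weight $1$ so that $\{S^\q,-\}+\hbar\Delta$ is weight $0$. Writing $S^\q=\sum_{L\ge 0}\hbar^L I_L$ and demanding that $S^\q/\hbar$ be weight $0$ forces $I_L$ to be a local functional of weight $1-L$. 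But any monomial in the fields has weight equal to its combined number of $\beta$ and $b$ insertions, which is nonnegative; hence $I_L=0$ for $L\ge 2$, and the quantization is automatically one-loop exact.

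Next, I would check that $I_1$ can be defined without counterterms. Because $I_1$ must have weight $0$, the only graphs that can contribute are wheels whose external legs are $\gamma$ and $c$ only, with internal propagators given by the heat-kernel regularization of the holomorphic Green kernel for $\dbar$ on $\CC$. Power counting, combined with the purely holomorphic two-dimensional character of the theory, shows that the relevant integrals are absolutely convergent as the ultraviolet cutoff is removed; this is a variant of the finiteness argument used for the free $\beta\gamma$ system in~\cite{LiVA, CG1}, supplemented with the analogous $bc$ ghost wheel. The renormalized one-loop interaction $I_1$ is thus well defined.

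The substantive step, and the main obstacle, is step four: the quantum master equation at order $\hbar$. This condition says precisely that the one-loop anomaly cocycle $\Theta$ is a coboundary in the weight-zero, translation-invariant deformation complex $\Def_{\rm string}(\CC)^{\CC,\,\wt(0)}$. By Lemma~\ref{lem: def complex wt zero}, the first cohomology of this complex is one-dimensional, coming from the summand $\CC[-1]$ and generated by the distinguished degree-three class in $H^*_{\rm red}(\wone)$. Therefore $\Theta$ is automatically a scalar multiple of this class, and the QME reduces to computing that scalar. The anomaly splits as a sum of two wheel contributions: a $\beta\gamma$ wheel whose coefficient is proportional to the trace over the target, hence to $\dim_\CC V = d$, and a $bc$ ghost wheel whose coefficient is a universal constant coming purely from the ghost Lie structure. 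The concrete hard work is evaluating these two constants in the heat-kernel regularization; this is the BV-theoretic incarnation of the standard central-charge calculation $c_{\rm matter}+c_{\rm ghost}=2d-26$. The expected outcome is $\Theta = \kappa(d-13)\,\Theta_0$ for a nonzero universal constant $\kappa$ and the distinguished cocycle $\Theta_0$, so that the QME is satisfied precisely when $d=13$.
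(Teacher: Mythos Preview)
Your approach is essentially the paper's, with one stylistic difference and one genuine imprecision worth fixing.

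The stylistic difference is your argument for one-loop exactness. The paper argues purely combinatorially: every trivalent vertex has exactly one outgoing ($\beta$ or $b$) leg, so a connected graph with $V$ vertices can have at most $V$ internal edges, hence at most one loop. Your $\CC^\times_{\rm cot}$-weight argument is the algebraic shadow of the same fact---``weight of a vertex is $1$'' is exactly ``each vertex has one $\beta/b$ leg''---so the two arguments are equivalent, and yours is fine.

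The imprecision is your claim that $H^1$ of the weight-zero, translation-invariant deformation complex is one-dimensional. Lemma~\ref{lem: def complex wt zero} gives
\[
\Def_{\rm string}(\CC)^{\CC,\wt(0)} \simeq \CC[-1]\ \oplus\ \Omega^2_{cl}(V)[1]\ \oplus\ \Omega^1(V)\ \oplus\ \Omega^1_{cl}(V)[-1],
\]
so the degree-one cohomology is $\CC \oplus \Omega^1_{cl}(V)$, which is infinite-dimensional. To force the anomaly into the $\CC$ summand you need an extra step. Either observe (as the paper does) that every one-loop wheel has \emph{only} $c$ external legs---not ``$\gamma$ and $c$''---since the internal edges of a wheel are all of one type and the remaining leg at each vertex is a $c$; hence the anomaly lies in $\cred^*(\wone)[2]$, whose $H^1$ is indeed $\CC$. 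Or, more cheaply, invoke the $\GL(V)$-equivariance recorded in that Lemma: the anomaly is $\GL(V)$-invariant, and $\Omega^1_{cl}(V)^{\GL(V)}=0$. With either fix your reduction to a single scalar goes through, and the remaining heat-kernel computation of the $bc$ and $\beta\gamma$ wheel coefficients is exactly what the paper carries out in its appendix to obtain the factor $(\dim_\CC V - 13)$.
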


\subsection{The Feynman diagrams}

Let us describe the combinatorics of the Feynman diagrams that appear here
before we describe the associated integrals.

\subsubsection{}

The procedure constructs graphs out of a prescribed type of vertices and edges;
we must consider all graphs with such local structure.
The classical action functional determines the allowed kinds of vertices and edges.
The quadratic terms of the action tell us the edges;
each quadratic term yields an edge whose boundary is labeled by the two fields appearing in the term.
For us there are thus two types of edges: 
an edge that flows from $\beta$ to $\gamma$, 
and an edge that flows from $b$ to~$c$ displayed in Figure \ref{fig:props}.
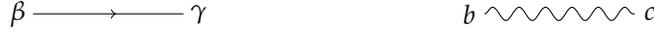
\begin{figure}
\begin{tikzpicture}
		\draw[fermion] (-4,0) -- (-2,0);
		\draw[vector] (2,0) -- (4,0);
		\draw (-4.2, 0) node {$\beta$};
		\draw (-1.8,0) node {$\gamma$};
		\draw (1.8,0) node {$b$};
		\draw (4.2,0) node {$c$};
\end{tikzpicture}
\caption{The $\beta\gamma$ and $bc$ propagators}
\label{fig:props}
\end{figure}

The nonquadratic terms tell us the vertices:
each $n$-ary term yields a vertex with $n$ legs,
and the legs are labeled by the $n$ types of fields appearing in the term.
For us there are thus two types of trivalent vertices:
a vertex with two $c$ legs and a $b$ leg, 
and a vertex with a $c$ leg, a $\gamma$ leg, and a $\beta$ leg.
It helpful to picture these legs as directed,
so that $c$ and $\gamma$ legs flow into a vertex
and $b$ and $\beta$ legs flow out. 
These vertices are displayed in Figure \ref{fig:verts}.

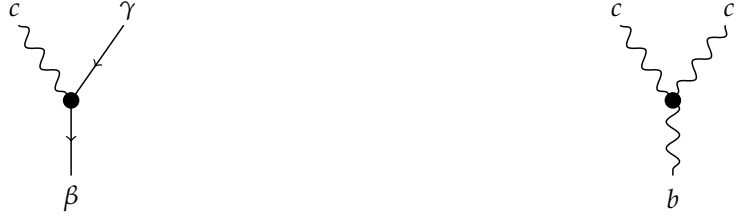
\begin{figure}
\begin{tikzpicture}[line width=.2mm, scale = 1]
		\draw[vector](-4.7,1)--(-4,0);
		\draw[fermion](-3.3,1)--(-4,0);
		\draw[fermion](-4,0)--(-4,-1);
		\filldraw[color=black]  (-4,0) circle (.1);
		\draw (-4.75,1.2) node {$c$};
		\draw (-3.25,1.2) node {$\gamma$};
		\draw (-4,-1.3) node {$\beta$};	
		
		\draw[vector](3.3,1)--(4,0);
		\draw[vector](4.7,1)--(4,0);
		\draw[vector](4,0)--(4,-1);
		\filldraw[color=black]  (4,0) circle (.1);
		\draw (3.25, 1.2) node {$c$};
		\draw (4.75,1.2) node {$c$};
		\draw (4, -1.3) node {$b$};
\end{tikzpicture}
\caption{The trivalent vertices for $\int \langle \beta, [c,\gamma] \rangle$ and $\int \langle b, [c,c] \rangle$}
\label{fig:verts}
\end{figure}

The kinds of graphs one can build with such vertices and edges are limited.
We focus on connected graphs, since an arbitrary graph is just a union of connected components.

A tree (i.e., a connected graph with no loops) must have at most one outgoing leg,
which must be either a $b$ or a~$\beta$;
the other legs are incoming, so each must be labeled by a $c$ or a~$\gamma$. 
An example of such a tree is given in Figure \ref{fig:tree}.

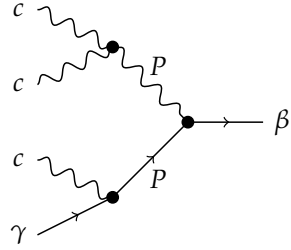
\begin{figure}
\begin{tikzpicture}[line width=.2mm, scale = 1]
		\draw[vector] (-2,3) -- (-1,2.5);
		\draw[vector] (-2, 2) -- (-1, 2.5);
		\draw[vector] (-1,2.5) --(0, 1.5);
		\draw (-2.25, 3) node {$c$};
		\draw (-2.25, 2) node {$c$};
		\draw(-.4, 2.25) node {$P$};
		\filldraw[color=black] (-1,2.5) circle (.075);
		\draw[vector] (-2, 1) -- (-1, .5);
		\draw[fermion] (-2, 0) -- (-1,.5);
		\draw[fermion] (-1,.5) -- (0,1.5);
		\draw (-2.25, 1) node {$c$};
		\draw (-2.25, 0) node {$\gamma$};
		\draw (-.4, .75) node {$P$};
		\filldraw[color=black] (-1,.5) circle (.075);
		\draw[fermion] (0,1.5) -- (1,1.5);
		\draw (1.25, 1.5) node {$\beta$};
		\filldraw[color=black] (0,1.5) circle (.075);
\end{tikzpicture}
\caption{An example of a tree with four inputs and one output}
\label{fig:tree}
\end{figure}
		
Note that there are two types of trees.
If there is a $\gamma$ leg, then there is a $\beta$ leg,
and there is a chain of $\gamma\beta$ edges connecting them;
all other external legs are of $c$~type.
If there is a $b$ leg, then the only other legs are $c$~type.

A one-loop graph will consist of a wheel (i.e., a sequence of edges that form an overall loop)
with trees attached.
The outer legs are all of $c$~type.
Every edge along a wheel will have the same type.
It is not possible to build a connected graph with more than one loop.
This combinatorics is the essential reason that we can quantize at one loop.
For an example of such a wheel see Figure \ref{fig:wheel}.

\begin{figure}
\begin{tikzpicture}[line width=.2mm, scale=1.5]
		\draw[vector](145:1) -- (145:.3cm);
			\node at (145:1.15) {$c$};
		\draw[vector](215:1) -- (215:.3cm);
			\node at (215:1.2) {$c$};
		\draw[vector](35:1) -- (35:.3cm);
			\node at (35:1.15) {$c$};
		\draw[vector](-35:1) -- (-35:.3cm);
			\node at (-35:1.2) {$c$};
		\draw[fill=black] (0,0) circle (.3cm);
		\draw[fill=white] (0,0) circle (.29cm);
	    	\clip (0,0) circle (.3cm);
\end{tikzpicture}
\caption{An example of a wheel with four inputs}
\label{fig:wheel}
\end{figure}
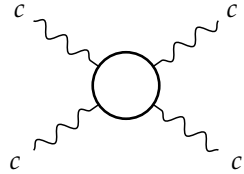

We write ${\bf Graph}_{\rm string}$ for the collection of connected graphs just described,
namely the directed trees and 1-loop graphs allowed by the string action functional.
Let ${\bf Graph}_{\rm string}^{(0)}$ denote the 0-loop graphs (i.e., trees) and let ${\bf Graph}_{\rm string}^{(1)}$ denote the 1-loop graphs (i.e., wheels with trees attached).

\subsubsection{}

These graphs describe linear maps associated to the field.
More precisely, a graph with $k$ legs describes a linear functional on the $k$-fold tensor product of the space of fields.
One builds this linear functional out of the data of the action functional.

As an example, a $k$-valent vertex corresponds to a $k$-ary term in the action,
which manifestly takes in $k$ copies of the fields and outputs a number.
Thus, the vertex labels an element of a (continuous) linear dual of the $k$-fold tensor product of fields.
In fact, one restricts to {\em compactly-supported} fields,
since the action functional is rarely well-defined on all fields when the source manifold is non-compact.
(Note this domain of compactly-supported fields is all one needs for making variational arguments or for constructing a BV quantization.)

An edge corresponds an element $P$ of the 2-fold tensor product of the space of fields,
often called a {\em propagator}.
More precisely, the edge should correspond to
the Green's function for the linear differential operator 
appearing in the associated quadratic term of the action;
hence the propagator is an element of the {\em distributional completion} of the 2-fold tensor product.
For us the $\beta\gamma$ leg should be labeled by $\dbar^{-1} \otimes {\rm id}_V$,
where $\dbar^{-1}$ denotes an inverse to the Dolbeault operator on functions.
The $bc$ leg should be labeled by $\dbar^{-1}_T$, 
the inverse of the Dolbeault operator on the bundle~$T^{1,0}$.

Given a graph~$\Gamma$, one should contract the tensors associated to the vertices and edges.
We denote the linear functional for this graph by~$w_\Gamma(P,I)$,
where $w$ stands for ``weight,'' the term $P$ indicates we label edges by the propagator~$P$,
and the term $I$ indicates we label vertices by the ``interaction'' term of the action~$S$ 
(i.e., the terms that are cubic and higher).

This contraction is not always well-posed, unfortunately.
Each vertex labels a distributional section of some vector bundle on~$\Sigma$,
and each edge labels a distributional section of a vector bundle on~$\Sigma^2$.
Thus the desired contraction can be written {\em formally} as an integral over the product manifold~$\Sigma^{v}$,
where $v$ denotes the number of vertices.
In most situations this contraction is ill-defined, 
since one cannot (usually) pair distributions.
Concretely, one sees that the integral expression is divergent.

Thus, to avoid these divergences, one labels the edges by a smooth replacement of the Green's functions. 
(Imagine replacing a delta function $\delta_0$ by a bump function.)
Since one can pair smooth functions and distributions,
each graph yields a linear functional on fields using these mollified edges.
Thus we have {\em regularized} the divergent expression.

But now this linear functional depends on the choice of mollifications.
Hence the challenge is to show that 
if one picks a sequence of smooth replacements that approaches the Green's function,
there is a well-defined limit of the linear functionals.

\subsubsection{}

We will now sketch one method well-suited to complex geometry
that allows us to see that no divergences appear for the holomorphic bosonic string.
Our approach is an example of the renormalization method developed by Costello in ~\cite{CosBook},
which applies to many more situations.

Our primary setting in this section is $\Sigma=\CC$.
For this Riemann surface, 
a standard choice of Green's function for the $\dbar$ that acts on functions is
\[
P(z,w) = \frac{1}{2 \pi i} \frac{\d z + \d w}{z-w}.
\]
It is a distributional one-form on $\CC^2$ that satisfies $\dbar \otimes 1(P) = \delta_\Delta$, 
where $\delta_\Delta$ is the delta-current supported along the diagonal $\Delta: \CC \hookrightarrow \CC^2$ and providing the integral kernel for the identity.
In terms of our discussion above,
we view this one-form as a distributional section of the fields $\gamma$ and~$\beta$: 
for example, for fixed $w$, the one-form $\d z/(z - w)$ is a $\beta$ field in the $z$-variable 
as it is a $(1,0)$-form.
(This propagator is for the $\beta\gamma$ fields---and one must tensor with a kernel for the identity on $V$---but a similar formula provides a propagator for the $bc$ fields.)

\subsubsection{}

We will now describe the integral associated to a simple diagram.
For simplicity, we assume $V = \CC$ so that the $\gamma$ and $\beta$ fields are simply functions and $1$-forms on $\CC$, respectively.
Consider a ``tadpole'' diagram, Figure \ref{fig:tadpole}, $\Gamma_{\rm tad}$ whose outer legs are $c$~fields 
(i.e., vector fields on~$\CC$).

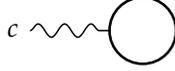
\begin{figure}
\begin{tikzpicture}[line width=.2mm, scale=1.5]
		\draw[vector](180:1) -- (180:.3cm);
			\node at (180:1.15) {$c$};
		\draw[fill=black] (0,0) circle (.3cm);
		\draw[fill=white] (0,0) circle (.29cm);
	    	\clip (0,0) circle (.3cm);
\end{tikzpicture}
\caption{The tadpole diagram $\Gamma_{\rm tad}$}
\label{fig:tadpole}
\end{figure}

There is only one vertex here, corresponding to the cubic function on fields
\[
w_{\Gamma_{\rm tad}}(P,I_{\rm string}) = \int_{z \in \CC} \beta \wedge c\gamma.
\]
If the field $c$ is of the form $f(z) \d \zbar \partial_z$,
with $f$ compactly supported, 
then our integral is
\[
\int_{z \in \CC} \beta \wedge f(z)(\partial_z\gamma) \d \zbar.
\]
(Note that a general cubic function could be described as an integral over $\CC^3$,
but our function is supported on the small diagonal $\CC \hookrightarrow \CC^3$.)
The linear functional for this tapole diagram should be given by inserting the propagator $P$ in place of the $\beta$ and $\gamma$ fields. 
Hence it ought to be given by the following integral over~$\CC$:
\[
\int_{z \in \CC} c(z)P(z,w)|_{z = w}  
= \int_{z \in \CC} f(z) \partial_z \left(  \frac{1}{2 \pi i} \frac{\d z + \d w}{z-w}\right)|_{z = w}\, \d \zbar.
\]
This putative integral is manifestly ill-defined,
since the distribution is singular along the diagonal.

\subsubsection{}

We smooth out the propagator $P$ using familiar tools from differential geometry.
Fix a Hermitian metric on $\Sigma$, 
which then associates provides an adjoint $\dbar^*$ to the Dolbeault operator~$\dbar$.
For the usual metric on $\CC$, we have
\[
\dbar^* = -2 \frac{\partial}{\partial (\d \zbar)} \frac{\partial}{\partial z}.
\]
In physics one calls a choice of the operator $\dbar^*$ a {\em gauge-fix}.
The commutator $[\dbar,\dbar^*]$, which we will denote $D$, 
is equal to $\tfrac{1}{2} \Delta$, where $\Delta$ is the Laplace-Beltrami operator for this metric.
In the physics literature, explicit gauge fixes for the bosonic string can be found in~\cite{Bochicchio}.

We introduce a smoothed version of the propagator using the heat kernel~$e^{-tD}$,
which is a notation that denotes a solution to the heat equation $\partial_t f(t,z) + D f(t,z) = 0$.
For $\CC$ with the Euclidean metric, the standard heat kernel is
\[
e^{-tD}(z,w) =  \frac{1}{4\pi t} e^{-|z-w|^2/4t} (\d z - \d w) \wedge (\d\zbar - \d\wbar) . 
\]
For $0 < \ell < L < \infty$, we define
\[
P_\ell^L = \dbar^* \int_{\ell}^L e^{-tD}\d t.
\]
We compute
\[
\dbar P_\ell^L = D \int_{\ell}^L e^{-tD}\d t =  \int_{\ell}^L \frac{d}{dt} e^{-tD}\d t = e^{-LD} - e^{-\ell D}.
\]
In the limit as $\ell \to 0$ and $L \to \infty$, the operator $P_\ell^L$ goes to a propagator (or Green's function) $P$ for~$\dbar$.
To see this, consider an eigenfunction $f$ of $D$ where $Df=\lambda f$ where $\lambda$ is a non-negative real number. 
Then
\[
(\dbar P_\ell^L) f = (e^{-L\lambda} - e^{-\ell \lambda})f, 
\]
which goes to $f$ as $L \to \infty$ and $\ell~\to~0$.
Thus, if one works with the correct space of functions, 
$P_\ell^L$ is almost an inverse to $\dbar$;
moreover, it is a smooth function on $\Sigma~\times~\Sigma$. 

\subsubsection{}

We now return to the tadpole diagram and put $P_\ell^L$ on the edge instead of~$P$.
(We again assume $V = \CC$ for simplicity.)
The propagator is
\begin{align}\label{propagator}
P_\ell^L(z,w) &= \int_{\ell}^L \d t \, \frac{\partial}{\partial (\d \zbar)} \frac{\partial}{\partial z}\left( \frac{1}{4\pi t} e^{-|z-w|^2/4t} (\d z - \d w) \wedge (\d\zbar - \d\overline{w})\right)\\
&= \int_{\ell}^L \d t \frac{1}{4\pi t} \frac{\zbar - \overline{w}}{2t} e^{-|z-w|^2/4t} (\d z - \d w).
\end{align}
Note that it is smooth everywhere on~$\CC^2$.
The integral for the tadpole diagram is 
\begin{align*}
w_{\Gamma_{\rm tad}}(P_\ell^L,I_{\rm string})
&= \int_{z \in \CC} c(z)P_\ell^L(z,w)|_{z = w}  \\
&= \int_{z \in \CC} \int_{\ell}^L \d t f(z) \partial_z \left(\frac{1}{4\pi t} \frac{\zbar - \overline{w}}{2t} e^{-|z-w|^2/4t} (\d z - \d w) \right)|_{z = w}\, \d \zbar\\
&= \int_{z \in \CC} \int_{\ell}^L \d t f(z) \left(\frac{1}{4\pi t} \left(\frac{\zbar - \overline{w}}{2t}\right)^2 e^{-|z-w|^2/4t} (\d z - \d w) \right)|_{z = w}\, \d \zbar\\
&= 0,
\end{align*}
since the integrand vanishes along the diagonal.
Note that this integral is independent of $\ell$ and $L$ and hence the limit is zero.

\subsubsection{}

By explicitly analyzing the $\ell \to 0$ limit for the integral associated to every Feynman diagram,
we find the following result.

\begin{prop}\label{prop: no counterterms}
For any graph $\Gamma \in {\bf Graph}_{\rm string}$ allowed by the combinatorics of the string action functional and for any $L > 0$,
there is a well-defined limit $\lim_{\ell \to 0} w_{\Gamma}(P_{\ell}^L,I_{\rm string})$.
\end{prop}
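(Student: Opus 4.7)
The plan is to split ${\bf Graph}_{\rm string}$ into trees and one-loop wheels and handle the two cases separately, exploiting the explicit Gaussian form of the heat-kernel propagator.

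For a tree $\Gamma \in {\bf Graph}_{\rm string}^{(0)}$ with $v$ vertices, the weight $w_\Gamma(P_\ell^L, I_{\rm string})$ is an integral over $\CC^v$ of a product of $v-1$ regularized propagators $P_\ell^L$ paired against compactly-supported external fields and finitely many of their derivatives. As $\ell \to 0$, each $P_\ell^L(z, w)$ converges in $L^1_{\rm loc}(\CC^2)$ to the full Green's function $P_0^L(z,w)$, whose only short-distance singularity is of order $|z - w|^{-1}$, which is locally integrable since $\CC$ has real dimension 2. Combined with the compact support of the external fields and the absence of cycles in $\Gamma$, this furnishes an $\ell$-uniform $L^1$ bound on the integrand, and dominated convergence yields the existence of $\lim_{\ell \to 0} w_\Gamma(P_\ell^L, I_{\rm string})$.

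For a wheel $\Gamma \in {\bf Graph}_{\rm string}^{(1)}$, I would first absorb tree tails into effective external insertions (justified by the tree case) and reduce to a pure wheel of $n$ vertices with $n$ internal propagators $P_\ell^L(z_i, z_{i+1})$, indices taken mod $n$. Using the explicit formula
\[
P_\ell^L(z,w) \;=\; \int_\ell^L \frac{1}{4\pi t}\cdot \frac{\zbar - \wbar}{2t}\, e^{-|z-w|^2/4t}\,(\d z - \d w)\,\d t,
\]
the wheel integrand on $\CC^n \times [\ell, L]^n$ splits into a scalar Gaussian piece, a cyclic wedge of holomorphic one-forms $\bigwedge_{i=1}^n (\d z_i - \d z_{i+1})$, and a wedge of $(0,1)$-forms from the external $c$-fields, on which the vertex derivatives from the interactions $\int\langle \beta, [c,\gamma]\rangle$ and $\int\langle b,[c,c]\rangle$ act throughout. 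The key observation is that, absent any vertex derivatives, the cyclic wedge vanishes identically: a direct expansion shows that only the two ``aligned'' choices produce top holomorphic forms on $\CC^n$, contributing $+\d z_1 \wedge \cdots \wedge \d z_n$ and $-\d z_1 \wedge \cdots \wedge \d z_n$ respectively, which cancel. Consequently any surviving contribution must arise from a vertex derivative hitting one of the propagator factors, and each such derivative brings down an extra factor of $(\zbar_i - \zbar_{i+1})/t_i$ from differentiating the Gaussian---providing additional antiholomorphic vanishing on the diagonal, exactly as in the tadpole calculation in the excerpt. The resulting integrand is absolutely integrable uniformly in $\ell$, and dominated convergence again supplies the limit.

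The main obstacle is the combinatorial bookkeeping: one must track systematically how vertex derivatives redistribute the cyclic form product and verify that every surviving term carries enough antiholomorphic vanishing at the diagonal to dominate the $1/t^2$ singular growth of the propagator scalars uniformly in $\ell$. This is a substantial generalization of the single-vertex tadpole calculation but follows the same structural pattern. A more structural route is to invoke the general UV-finiteness theorem for holomorphic field theories on $\CC$ (Chapter 10 of \cite{CosBook} and related results in \cite{CG1, CG2}), which encapsulates exactly this combination of cyclic form cancellation and Gaussian damping and applies verbatim here, giving a proof with essentially no further analytic work.
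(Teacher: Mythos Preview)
Your tree argument and the paper's differ in style but reach the same conclusion. The paper views a tree as a multilinear operator from leaves to root, built by composing the vertex operations with the edge operators; since each edge operator (convolution with $P_0^L$, or even with the full Green's function) sends smooth sections to smooth sections, the composite is well-defined. This cleanly handles the vertex derivatives $\partial_z$ coming from the interactions, which can turn a $1/(z-w)$ kernel into a $1/(z-w)^2$ kernel---and the latter is \emph{not} locally $L^1$ on $\CC^2$. Your dominated-convergence argument as stated therefore has a small gap; it can be repaired by integrating by parts to move the $\partial_z$ off the propagator and onto the compactly supported external field, or by simply adopting the paper's operator-composition viewpoint.

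For wheels your approach diverges from the paper's, and your core observation is sharper than what the paper uses. The paper writes out the integral for a pure wheel with $n$ vertices and, for $n=2$, bounds the scalar part directly: after the change of variables $u = z_1 - z_2$, the Gaussian moment in $u$ contributes a factor $(t_1 t_2/(t_1+t_2))^5$, and an AM--GM estimate shows the $t$-integral has a finite $\ell \to 0$ limit. Your observation that the cyclic wedge $\bigwedge_{i=1}^n (\d z_i - \d z_{i+1})$ vanishes (the $n$ one-forms sum to zero, hence are linearly dependent) is correct and in fact shows something stronger: every pure-propagator wheel is \emph{identically zero}, so the limit exists trivially. The paper does not exploit this.

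However, your follow-up is confused. You write that ``any surviving contribution must arise from a vertex derivative hitting one of the propagator factors,'' but the vertex derivative $\partial_{z_i}$ acts only on the \emph{scalar} part of $P_\ell^L$ and leaves the form factor $(\d z_i - \d z_{i+1})$ untouched. So the cyclic wedge still vanishes after all differentiations, there are no surviving contributions, and your subsequent damping analysis is unnecessary. The clean completion of your argument is simply: the holomorphic top-form factor on $\CC^n$ is identically zero, hence so is the wheel weight. Once you say that, the ``substantial combinatorial bookkeeping'' you flag as the main obstacle disappears, and the appeal to general UV-finiteness theorems becomes superfluous.
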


We denote this limit by~$w_{\Gamma}(P_{0}^L,I_{\rm string})$. 
The necessary manipulations and inequalities referenced below are very close to those used in~\cite{wg2, GGW}.

\begin{proof}[Outline of proof]
When $\Gamma$ is a tree, there is never an issue with divergences; 
we could even use the Green's function $\dbar^{-1}$ on each edge.
To see this, note that one can view a tree as having a distinguished root,
given by the leg that is either of $\beta$ or $b$~type.
One can then see the tree as describing a multilinear map from the leaves (i.e., legs that are not roots) to the root.
Indeed, one can view each cubic vertex as such an operator.
For instance, $\langle b, [c,c]\rangle$ corresponds to the Lie bracket of vector fields,
since we view $\langle b,-\rangle$ as an element of the $c$~fields.
For a tree, one can then input arbitrary elements into the leaves, 
apply the operations labeled by the vertices,
apply the operator labeled by the edge, and so on,
until one reaches the root.
The composite multilinear operator sends smooth sections to smooth sections,
even if the edges are labeled by distributional sections,
since the associated operator sends smooth sections to smooth sections.

When $\Gamma$ is a one-loop graph, it consists of a wheel with trees attached to the outer legs.
By the preceding argument, we know those trees do not introduce singularities;
hence any divergences are due solely to the wheel.
It thus suffices to consider pure wheels (i.e., those with no trees attached).

Let the wheel have $n$ vertices. 
The $k$th vertex has a coordinate $z_k$ on $\CC$;
the $k$th external leg has input $c_k = f_k(z_k,\zbar_k) \d\zbar_k\, \partial_{z_k}$, 
where $f_k$ is a compactly-supported smooth function.
Then the integral has the form
\[
\int_{(z_1,\ldots,z_n) \in \CC^n}\d^n \zbar \,(f_1  \partial_{z_1} P_\ell^L(z_1,z_n))(f_2 \partial_{z_2} P_\ell^L(z_2,z_1)) \cdots (f_n\partial_{z_n} P_\ell^L(z_n,z_{n-1})),
\] 
since the $k$th input will act on one of the propagators entering the $k$th vertex.
One needs to show that this expression has a finite $\ell \to 0$ limit.

Let us prove this limit exists for the case $n=2$.
Then we have
\begin{align*}
\int_{z_1,z_2 \in \CC} \d\zbar_1\d\zbar_2 \int_{\ell}^L \d t_1 \int_{\ell}^L \d t_2\, 
& f_1(z_1)f_2(z_2) 
\partial_{z_1} \left(\frac{1}{4\pi t_1} \frac{\zbar_1 - \zbar_2}{2t_1} e^{-|z_1-z_2|^2/4t_1} (\d z_1 - \d z_2) \right)\\
& \times \partial_{z_2} \left(\frac{1}{4\pi t_2} \frac{\zbar_1 - \zbar_2}{2t_2} e^{-|z_1-z_2|^2/4t_2} (\d z_2 - \d z_1) \right),
\end{align*}
which is already a bit lengthy.
As our focus is on showing a limit exists, we will throw out unimportant factors and simplify the expression.
First, note that taking the partial derivative $\partial_{z_i}$ will simply multiply the integrand by $(\zbar_1 - \zbar_2)/2t_i$.
Moreover, we change coordinates to $u = z_1 - z_2$ and $v = z_2$. 
Then the integral is proportional to
\[
\int_{\ell}^L \d t_1 \int_{\ell}^L \d t_2\int_{\CC^2} \d^2 u \, \d^2 v \, f_1 f_2 \frac{\overline{u}^4}{t_1^3 t_2^3} e^{-|u|^2(\tfrac{1}{t_1} + \tfrac{1}{t_2})}.
\]
We take the integral over $v$ last;
it will be manifestly well-behaved after we take the other integrals.

Thus consider the integral just over $u \in \CC$,
so that we are computing the expected value of $F=f_1 f_2$ against a Gaussian measure 
whose variance is determined by $t_1$ and~$t_2$.
(Namely, the variance is~$t_1 t_2/(t_1+t_2)$.)
We might as well focus on values of $t_i$ that are very small, 
as those would be the source of divergences when $\ell \to 0$.
For small $t_i$, we only care about the behavior of $F$ near the origin as the measure is concentrated near the origin.
Thus, consider a partial Taylor expansion of $F$.
The polynomial part can be computed quickly since the expected values of monomials against a Gaussian measure (i.e., the moments) have a simply expression in terms of the variance.
The first nonzero contribution would come from the $u^4$ term in the Taylor expansion of $F$,
and it contributes a factor of the form $(t_1 t_2/(t_1+t_2))^5$,
up to constant that we ignore.
We are left with
\[
\int_{\ell}^L \d t_1 \int_{\ell}^L \d t_2 \frac{(t_1t_2)^{3}}{(t_1 + t_2)^{5}} 
\leq \int_{\ell}^L \d t_1 \int_{\ell}^L \d t_2 \, 2^{-5}\sqrt{t_1t_2} 
= 2^{-5}(L^{3/2} - {\ell}^{3/2})^2,
\]
where we use the arithmetic-geometric mean inequality $\sqrt{t_1t_2}/(t_1+t_2)\leq 1/2$ in the middle.
This expression has a finite limit as $\ell \to 0$.
The higher terms in the Taylor expansion contribute bigger powers of the variance 
and hence have $\ell \to 0$ limits.
Finally, the expected value of the error term of our partial Taylor expansion, 
which vanishes to some positive order at the origin,
can be bounded in such as way that an $\ell \to 0$ limit exists.
\end{proof}

We can now define the effective theory that we consider for the string. 

\begin{dfn}
The {\em renormalized action functional} at scale~$L$ for the holomorphic bosonic string is
\[
I[L] = \sum_{\Gamma \in {\bf Graph}_{\rm string}^{(0)}} w_{\Gamma}(P_{0}^L,I_{\rm string}) + \hbar\sum_{\Gamma \in {\bf Graph}_{\rm string}^{(1)}} w_{\Gamma}(P_{0}^L,I_{\rm string}).
\]
We denote the first summand---the tree-level expansion---by $S_0[L]$ 
and the second summand---the one-loop expansion---by~$S_1[L]$.
We use the notation $S[L] = S_{free} + I[L]$ where $S_{free}$ is the classical free part of the action functional. 
\end{dfn}

\begin{rmk} For any functional $J$, let $w(P_{\ell}^L, J)$ denote the sum over all graphs as above with the smooth propagator $P_{\ell}^L$ placed at the edges and $J$ placed at the vertices. 
Then, the family $\{I[L]\}$ satisfies the {\em homotopy RG equation}
\ben
I[L] = w(P_\ell^L , I[\ell]).
\een
The operator $w(P_\ell^L,-)$ defines a homotopy equivalence between the theory at scale $\ell$, defined using $S[\ell]$, and the theory at scale $S$, defined using $S[L]$. 
\end{rmk}

\subsection{The quantum master equation}
\label{subsec: QME}

In the BV formalism the basic idea is to replace integration against a path integral measure $e^{-S(\phi)/\hbar} \cD \phi$ with a cochain complex.
In this cochain complex, we view a cocycle as defining an observable of the theory,
and its cohomology class is viewed as its expected value against the path integral measure.
For toy models of finite-dimensional integration, see \cite{GJF};
these examples are always cryptomorphically equivalent to a de Rham complex,
which is a familiar homological approach to integration.

Hence the content of the path integral, in this approach, is encoded in the differential. 
A key idea is that the differential is supposed to behave like a divergence operator for a volume form:
recall that given a volume form $\mu$ on a manifold, 
its divergence operator maps vector fields to functions by the relationship
\[
{\rm div}_\mu({\mathcal X}) \mu = L_{\mathcal X} \mu.
\] 
This relationship, in conjunction with Stokes lemma, 
implies that if a function~$f$ is a divergence ${\rm div}_\mu({\mathcal X})$,
then $\int f \mu = 0$,
i.e., its expected value against the measure~$\mu$ is zero.
The BV formalism axiomatizes general properties of divergence operators;
a putative differential must satisfy these properties to provide a BV quantization.

When following the algorithm of Section~\ref{sec:bvoverview},
we want the renormalized action
\[
S = S^{\rm cl} + \hbar S_1 + \hbar^2 S_2 + \cdots
\]
to determine a putative differential $\d^q_S$ on the graded vector space of observables.
To explain this operator, we need to describe further algebraic properties on the observables
that the BV formalism uses.

First, in practice, the observables are the symmetric algebra generated by the continuous linear duals to the vector spaces of fields.
There is also a pairing on fields that is part of the data of the classical BV theory,
between each field and its ``anti-field.''
(This pairing is a version of the action of constant vector fields on functions in the toy models.)
In our case, there is the pairing between $b$ and $c$ and between $\beta$ and~$\gamma$, respectively.
It behaves like a ``shifted symplectic'' pairing as it has cohomological degree~$-1$,
and hence it determines a degree~1 Poisson bracket $\{-,-\}$ on the graded algebra of observables.
Finally, the pairing also determines a second-order differential operator $\Delta_{BV}$ on the algebra of observables by the condition that
\[
\Delta_{BV}(FG) = (\Delta_{BV}F)G + (-1)^F F(\Delta_{BV}G) + \{F,G\}.
\]
(This equation is a characteristic feature of divergence operators with respect to the product of polyvector fields.)

With these structures in hand, we can give the formula
\[
\d^q_S=\{S,-\} + \hbar\Delta_{BV}
\]
for the putative differential.
As $S$ has cohomological degree~0, the operator $\{S,-\}$ has degree~1.
We remark that modulo~$\hbar$, one recovers the differential $\{S^{\rm cl},-\}$ on the classical observables;
the zeroth cohomology of the classical observables is functions on the critical locus of the classical action~$S^{\rm cl}$.

By construction, this putative differential $\d_S^q$ satisfies the conditions of behaving like a divergence operator.
The only remaining condition to check is that it is square-zero.
This condition ends up being equivalent to $S$ satisfying the {\em quantum master equation}
\begin{equation}
\hbar \Delta_{BV} S + \frac{1}{2}\{S,S\} = 0.
\end{equation}
More accurately, $\d^q_S$ is a differential if and only if the right hand side is a constant.

\subsubsection{}

We now turn to examining this condition in our setting.
It helps to understand it is diagrammatic terms.

As the bracket is determined by a linear pairing,
it admits a simple diagrammatic description as an edge.
For instance, given an observable $F$ that is a homogeneous polynomial of arity~$m$
and an observable $G$ of arity~$n$, 
then $\{F,G\}$ has arity~$m+n-2$.
It can be expressed as a Feynman diagram 
where the edge connecting $F$ and $G$ is labeled by a 2-fold tensor~$K$.

The BV Laplacian acts by attaching an edge labeled by~$K$ as a loop in all possible ways.
This diagrammatic behavior corresponds to the fact that $\Delta_{BV}$ is a constant-coefficient second-order differential operator.

The tensor~$K$ determined by the pairing on fields is distributional.
As one might expect from our discussion of divergences above,
these diagrammatic descriptions of the BV bracket and Laplacian are thus typically ill-defined.
In other words, the quantum master equation is {\em a priori} ill-posed for the same reason that the initial Feynman diagrams are ill-defined.
We can apply, however, the same cure of mollification.

\subsubsection{}

Costello's framework \cite{CosBook} provides an approach to renormalization built to be compatible with the BV formalism.
A key feature is that for each ``length scale''~$L>0$, 
there is a BV bracket $\{-,-\}_L$ and BV Laplacian $\Delta_L$.
The scale~$L$ renormalized action $S[L]$ satisfies the scale~$L$ quantum master equation~(QME)
\[
\hbar \Delta_{L} S[L] + \frac{1}{2}\{S[L],S[L]\}_L = 0
\]
if and only if $S[L']$ satisfies the scale~$L'$ quantum master equation for every other scale~$L'$, see Lemma 9.2.2 in \cite{CosBook}.
Hence, we say a renormalized action satisfies the quantum master equation if its solves the scale~$L$ equation for some~$L$.

Thus it remains for us to describe the scale~$L$ bracket and BV Laplacian in our setting,
so that we can examine whether the renormalized action satisfies the quantum master equation.

\begin{dfn}
The {\em scale~$L$ bracket} $\{-,-\}_L$ is given by pairing with the scale~$L$ heat kernel
\[
K_L(z,w) = \frac{1}{4\pi L} e^{-|z-w|/4L} (\d z - \d w) \wedge (\d\zbar - \d\overline{w}). 
\]
The {\em scale~$L$ BV Laplacian} $\Delta_L$ is given by the contraction~$\partial_{K_L}$.
\end{dfn}

These definitions mean that testing the quantum master equation leads to diagrams whose integrals are similar to those we encountered earlier.
We explain the diagrammatics and sketch the relevant integrals in the proof of the following result,
which characterizes when the string action admits a BV quantization.

We emphasize that up to now, we have not indicated explicitly which vector space $V$ is the target space for our string.
But the action functional explicitly depends on this choice,
so here we will write $S_V$ for the action with target~$V$.

\begin{prop} \label{prop anomaly}
The obstruction to satisfying the quantum master equation is the functional
\[
Ob_V[L] = \hbar \Delta_{L} S_V[L] + \frac{1}{2}\{S_V[L],S_V[L]\}_L.
\]
It has the form
\[
Ob_V[L] = \hbar (\dim_\CC(V) - 13) F[L],
\]
where $F[L]$ is a functional independent of~$V$.
\end{prop}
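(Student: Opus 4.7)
The plan is to decompose the obstruction diagrammatically, separate contributions by propagator type, identify the $V$-dependence, and then pin down a single numerical coefficient.

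First I would observe that the classical action $S_V$ is constructed precisely so that the classical master equation $\{S_V,S_V\} = 0$ holds: this encodes the closure of the gauge algebra and amounts to the Jacobi identity for the Lie bracket of holomorphic vector fields acting on $V$-valued functions and on $T^{1,0}_\Sigma$, together with $\dbar^2 = 0$. Consequently the order $\hbar^0$ term of $Ob_V[L]$ vanishes, so the obstruction is purely a one-loop effect. By the combinatorial observation that no connected graph with genus $\geq 2$ can be built from the string vertices, there are no higher-loop contributions either, so the entire obstruction is of order $\hbar$.

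Second, I would expand $Ob_V[L]$ as a sum of weights of connected one-loop graphs in which one distinguished edge is labeled by the scale-$L$ heat kernel $K_L$ (coming from either $\Delta_L$ or $\{-,-\}_L$) and the remaining edges are labeled by $P_0^L$. By the combinatorial analysis that precedes Proposition~\ref{prop: no counterterms}, every such graph is a wheel with trees attached, and along each wheel all edges have the same type---either all $\beta\gamma$ or all $bc$. Moreover the legs on the wheel itself are all of $c$-type. This yields the decomposition
\[
Ob_V[L] = Ob^{\beta\gamma}_V[L] + Ob^{bc}[L].
\]

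Third, I would isolate the $V$-dependence. The $\beta\gamma$ propagator carries the factor $\mathrm{id}_V$, and contracting these factors around the loop produces the trace $\mathrm{tr}_V(\mathrm{id}_V) = \dim_\CC V$; no other $V$-dependent factor enters the integrand. Hence $Ob^{\beta\gamma}_V[L] = \dim_\CC V \cdot F_{\beta\gamma}[L]$ for a functional $F_{\beta\gamma}[L]$ independent of $V$. The $bc$-wheels couple only to $T^{1,0}_\Sigma$, so $Ob^{bc}[L]$ is $V$-independent; write it as $F_{bc}[L]$. Combined, $Ob_V[L] = \hbar\bigl(\dim_\CC V \cdot F_{\beta\gamma}[L] + F_{bc}[L]\bigr)$.

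Finally I would show $F_{bc}[L] = -13\, F_{\beta\gamma}[L]$. By Lemma~\ref{lem: def complex wt zero}, the space of $\CC^\times_{\mathrm{cot}}$-weight zero, translation-invariant anomaly cocycles supported on $c$-inputs alone is one-dimensional, so $F_{bc}[L] = \kappa\, F_{\beta\gamma}[L]$ for some scalar $\kappa$, and it only remains to compute $\kappa$. The hard part is pinning down this number. I would do so by evaluating both sides on the minimal wheel producing a nonzero $\wone$-cocycle. Since the tadpole vanishes (as shown in the integral computation preceding Proposition~\ref{prop: no counterterms}), the first contributing wheel has three vertices and gives a functional proportional to the universal Virasoro-type cocycle $\int c\, \partial_z^3 c\, \d^2z$. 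The $bc$ combinatorics---the antisymmetric bracket $[c,c]$, the extra $z$-derivative at each trivalent vertex coming from Lie differentiation of a vector field, and the symmetry factor of the wheel---produce an additional numerical factor relative to the $\beta\gamma$ side whose explicit evaluation against the heat kernel yields precisely $-26$ in real dimensions, equivalently $-13$ in complex dimensions. Setting $F[L] = F_{\beta\gamma}[L]$ then gives $Ob_V[L] = \hbar(\dim_\CC V - 13)F[L]$, as claimed.
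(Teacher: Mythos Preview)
Your first three steps—reducing to order $\hbar$, decomposing into $bc$-wheels and $\beta\gamma$-wheels, and extracting the factor $\dim_\CC V$ from the $\beta\gamma$ loop trace—are correct and match the paper's argument closely. The observation that the anomaly cocycle lives in a one-dimensional space (via the Gelfand--Fuks computation) so that $F_{bc}$ and $F_{\beta\gamma}$ must be proportional is also sound, and in fact the paper uses essentially the same structural reduction.

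The genuine gap is your final paragraph, where you claim to determine the constant $\kappa = -13$. Two problems. First, you assert that the first nonvanishing wheel has \emph{three} vertices; in fact the paper extracts the anomaly from the \emph{two}-vertex wheel (one edge carries the heat kernel $K_\ell$, the other the propagator $P_\ell^L$), and this is what the Appendix computes in detail. Second, and more seriously, you do not actually compute anything: the sentence ``the $bc$ combinatorics \ldots\ produce an additional numerical factor \ldots\ whose explicit evaluation against the heat kernel yields precisely $-26$ in real dimensions'' is an assertion of the answer, not a derivation. The phrase ``$-26$ in real dimensions, equivalently $-13$ in complex dimensions'' reads as imported CFT central-charge folklore rather than a diagrammatic calculation. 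The paper's proof requires evaluating four separate Gaussian integrals (the four terms arising from the Leibniz rule applied to $[c,-]$ acting on vector fields at each vertex), applying a Wick-type lemma, and tracking the $t$-integrals; the individual coefficients $\tfrac{1}{12}, \tfrac{3}{8}, \tfrac{1}{8}, \tfrac{1}{2}$ combine to give $-\tfrac{13}{12}$ for the $bc$ side versus $\tfrac{1}{12}$ for the $\beta\gamma$ side. None of this is visible from symmetry or combinatorics alone. To complete your argument you must carry out this heat-kernel computation (or an equivalent one), which is precisely the content the paper defers to its Appendix.
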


In short, the failure to satisfy the QME is a linear function of the dimension of the target space~$V$.
In particular, when $V \cong \CC^{13}$, 
the obstruction vanishes and the renormalized action {\em does} satisfy the QME, 
giving us an immediate corollary.
(Note that we do {\em not} need to know $F[L]$ to recognize that the obstruction vanishes!)

\begin{cor}
When the target vector space is 13-dimensional (i.e., has 26 real dimensions),
the holomorphic bosonic string admits a BV quantization.
\end{cor}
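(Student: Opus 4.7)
My plan is to combine the diagrammatic expansion of the obstruction with the cohomological classification of weight-zero deformations from Section~\ref{sec: moduli}. First I would verify that $Ob_V[L]$ vanishes at order $\hbar^0$: by Proposition~\ref{prop: no counterterms}, the renormalized action has the form $S_V[L] = S_V^{\rm cl} + \hbar S_{1,V}[L]$ with no higher-loop terms (the combinatorics of this theory precludes connected multi-loop graphs), and $S_V^{\rm cl}$ defines a classical BV theory, so $\{S_V^{\rm cl}, S_V^{\rm cl}\} = 0$. Propagating this identity through the scale-$L$ bracket via the homotopy RG flow shows the entire obstruction sits at order $\hbar^1$.

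Next I would unpack the resulting one-loop diagrammatics. The $\hbar$-coefficient of $Ob_V[L]$ is a sum over connected one-loop graphs, each a wheel with tree attachments, arising either from $\Delta_L$ closing a loop at a single vertex of $S_{1,V}[L]$ or from $\{-,-\}_L$ joining two tree vertices by a regularized edge. Because the propagators are directional ($\beta \to \gamma$ and $b \to c$), every wheel must have all edges of a single type; no mixed wheel exists. Consequently the obstruction decomposes as
\[
Ob_V[L] = \hbar\, \tilde{O}_V^{\beta\gamma}[L] + \hbar\, \tilde{O}^{bc}[L],
\]
in which the second summand is manifestly independent of $V$.

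For the first summand, the $V$-indices in a $\beta\gamma$ wheel are contracted around the cycle through the vertices $\int \langle \beta, [c,\gamma]\rangle_V$, each of which pairs one $V^\vee$ leg with one $V$ leg; closing the wheel therefore produces a single overall factor ${\rm tr}_V({\rm id}) = \dim_\CC V$, multiplying a $V$-independent local functional $F[L]$ built from the heat-kernel-regularized propagator. The $bc$ wheel is built from essentially the same analytic data (both propagators invert $\dbar$, just on different holomorphic bundles over $\CC$), so it contributes $c \cdot F[L]$ for a universal numerical constant $c$. Putting these together, $Ob_V[L] = \hbar(\dim_\CC V + c)\, F[L]$.

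To finish, it remains to determine $c = -13$. Rather than grinding through the integrals wheel-by-wheel, I would invoke Lemma~\ref{lem: def complex wt zero}: the degree-$+1$ cohomology of the weight-zero, translation-invariant deformation complex is one-dimensional, so the cohomology class of the obstruction is an affine function of $\dim_\CC V$, and both slope and intercept can be extracted from a single convenient diagram (for instance the two-vertex wheel). The hard part is exactly this numerical computation: the $\langle b, [c,c]\rangle$ vertex differs from $\langle \beta, [c,\gamma]\rangle$ by an extra derivative (coming from the Lie bracket of vector fields acting on the $T^{1,0}$-valued propagator) and by a symmetry factor from its two $c$-legs, and the relative coefficient extracted from a careful Wick contraction is the renormalization-theoretic shadow of the central-charge balance $c_{bc} + (\dim_\CC V)\, c_{\beta\gamma} = 0$ governing the critical bosonic string.
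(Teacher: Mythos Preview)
Your approach is essentially the paper's: reduce the obstruction to $\hbar^1$, split the one-loop contribution into $\beta\gamma$-wheels and $bc$-wheels, pull out the factor $\dim_\CC V$ from the former, and defer the numerical constant to an explicit heat-kernel computation. The paper organizes the $\hbar^1$ term slightly differently---as $\{S_0[L],S_1[L]\}_L + \Delta_L S_0[L]$, arguing the first piece vanishes as $L\to 0$---but the outcome is the same decomposition into two wheel-families.

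Two points to tighten. First, your claim that the degree~$+1$ cohomology of the weight-zero, translation-invariant deformation complex is one-dimensional is not what Lemma~\ref{lem: def complex wt zero} says: in degree~$+1$ one finds $\CC \oplus \Omega^1_{cl}(V)$. What rescues you is the observation (which you made earlier) that every external leg of a wheel is a $c$-leg, so the obstruction lands in the pure Gelfand--Fuks summand $\cred^*(\wone)[2]$, whose degree~$+1$ cohomology is indeed the single copy of~$\CC$. Equivalently, impose $\GL(V)$-equivariance. Second, the cohomological lemma only tells you that both wheel contributions are scalar multiples of a common cocycle; it cannot fix the constant $c=-13$. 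You correctly flag that a Wick computation is unavoidable. In the paper this is the content of Appendix~\ref{sec:calculation}: the $bc$ two-vertex wheel breaks into four integrals (one for each term of the Lie-bracket action of $c$ on the $T^{1,0}$-valued propagator), the $\beta\gamma$ wheel is just one of those four, and summing gives the ratio~$-13$. Without actually carrying out or citing that computation, the argument is not complete.
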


\begin{proof}
It is a general feature of Costello's formalism that the tree-level term $S_0[L]$ of the renormalized action satisfies the scale~$L$ equation
\[
\{S_0[L],S_0[L]\}_L = 0,
\]
known as the classical master equation.
Hence the first obstruction to satisfying the QME can only appear with positive powers of~$\hbar$.
We can also see quickly that no terms of~$\hbar^2$ appear:
the one-loop term $S_1[L]$ is only a function of the $c$~field, 
so 
\[
\{S_1[L],S_1[L]\}_L = 0 \quad\text{and}\quad \Delta_L S_1[L] = 0.
\]
Hence the obstruction to satisfying the QME is precisely
\[
\hbar\left( \{S_0[L],S_1[L]\} + \Delta_L S_0[L] \right).
\]
Thus we see that the obstruction is a multiple of~$\hbar$.
For simplicity, we will divide out that factor and let $Ob_V$ denote the term inside the parenthesis.

Consider the term $\{S_0[L],S_1[L]\}_L$. 
Diagrammatically, it corresponds to attaching a tree with a $b$ ``root'' to a wheel using an edge labeled by~$K_L$.
Arguments similar to Lemma 16.0.3 of \cite{wg2} carry over to account for the vanishing of this term in the $L \to 0$ limit. 

Now consider the term $\Delta_L S_0[L]$. 
Diagrammatically, it corresponds to turning a tree into a wheel by using an edge---labeled by $K_L$---to attach the root to an incoming leaf.
There are thus two kinds of wheels that appear, since there are two kinds of trees.
There are the wheels where the $K$ edge is for $bc$ fields.
Note that these wheels are the same for every choice of target $V$
as they only depend on the $bc$ fields, i.e., are independent of the $\beta\gamma$ fields.
These will contribute a term $F[L]$ to the obstruction.
On the other hand, there are the wheels where the $K$ edge is for $\beta\gamma$ fields.
These depend on $V$ but in a very simple way: 
the distribution $K$ is just the heat kernel tensored with the identity on~$V$, 
and hence the contraction amounts to taking $\dim_\CC(V)$ copies of the $V = \CC$ value.
In other words, the $\beta\gamma$ wheels contribute a term $\dim_\CC(V) G[L]$ to the obstruction,
where $G[L]$ is the value for $V = \CC$.
The last part of the proof of the theorem is a direct calculation of the functionals $F[L]$ and $G[L]$. 
So as to not diverge from our track of thought we include this calculation in Appendix \ref{sec:calculation} where we show that $F[L], G[L]$ are both independent of $L$ and satisfy $F = -13 G$, thus completing the proof.
\end{proof}

\begin{rmk} One can consider coupling the $\beta\gamma$ system to any tensor bundle on the Riemann surface. 
For instance, suppose $\gamma$ is a section of $T_\Sigma^{\tensor n}$ and hence $\beta$ is a section of $T^{* \otimes {n+1}}_\Sigma$.
In this case, one can show that the part of the obstruction with internal edges labeled by the $\beta\gamma$ propagators contributes a factor $(6n^2 + 6n + 1)G$, with $G$ the same functional above. 
\end{rmk}

\section{OPE and the string vertex algebra}

Vertex algebras are mathematical objects that axiomatize the behavior of local observables 
(i.e., point-like observables) of a chiral conformal field theory (CFT),
such as the $bc\beta\gamma$ system or the holomorphic bosonic string.
In particular, the operator product expansion (OPE) for these local observables---which is of paramount importance in understanding a chiral CFT---is encoded by the vertex operator of the vertex algebra of the CFT.
(We will not review vertex algebras here
as there are many nice expositions~\cite{FHL, BZF}.)

In this section we will explain how to extract the vertex algebra of the holomorphic bosonic string,
using machinery developed in \cite{CG1,LiVA,GGW}.
The answer we recover is precisely the chiral sector of the usual bosonic string.

\subsection{Some context}

In the BV formalism one constructs a cochain complex of observables,
for both the classical and the quantized theory, if it exists.
The cochain complexes are local on the source manifold of a theory:
on each open set $U$ in that manifold~$\Sigma$,
one can pick out the observables with support in~$U$ by asking for the observables that vanish on fields with support outside~$U$.
Furthermore, you can combine observables that have support on disjoint open sets.
It is the central result of~\cite{CG1,CG2} that the observables also satisfy a local-to-global property,
akin to the sheaf gluing axiom.
Such a structure is known as a {\em factorization algebra} on~$\Sigma$.

We will not need that general notion here.
Instead, we will use vertex algebras.
Theorem~5.2.3.1 of~\cite{CG1} explains how a factorization algebra~$F$ on $\Sigma = \CC$
yields a vertex algebra~$\Vert(F)$, under natural hypotheses on~$F$. 
It assures us that the observables of a chiral CFT determine a vertex algebra.

In particular, Section~5.3 of~\cite{CG1} examines the free $\beta\gamma$ system in great detail.
Its main result is that the well-known $\beta\gamma$ vertex algebra $\cV_{\beta\gamma}$ is recovered by the two-step process of BV quantization, which yields a factorization algebra, and then the extraction of a vertex algebra.
The exact same arguments apply to the free $bc$ system,
recovering the vertex algebra~$\cV_{bc}$;
and of course, the exact same arguments apply to the free $bc\beta\gamma$ system.

Let $V$ denote the vector space appearing in the $\beta\gamma$ contribution of the holomorphic bosonic string theory, as introduced in Section~\ref{sec:classical}.
Let $\Obs^\q_{free}$ denote the observables of this theory on $\Sigma = \CC$.
As a quantization of a free field theory, it is a factorization algebra valued in the category of $\CC [\hbar]$-modules.
In particular, the associated vertex algebra $\Vert(\Obs^\q_{free})$ is also valued in $\CC[\hbar]$-modules.
Putting the claims together, we have the following.

\begin{prop}\label{prop: bcbg vertex}
For $n = \dim_{\CC}(V)$, 
there is an isomorphism of vertex algebras
\ben
\Vert(\Obs^{\q}_{free})_{\hbar = 2 \pi i} \cong \cV_{bc} \tensor \cV_{\beta\gamma}^{\tensor n} 
\een 
where on the left-hand side we have set~$\hbar = 2\pi i$.
\end{prop}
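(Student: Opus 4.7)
The plan is to leverage the two main facts stated earlier: Theorem 5.2.3.1 of \cite{CG1} gives a functor $\Vert$ from suitable holomorphically translation-invariant factorization algebras on $\CC$ to vertex algebras, and Section 5.3 of \cite{CG1} already computes that $\Vert$ applied to the BV quantization of the free $\beta\gamma$ system with one-dimensional target is the standard $\beta\gamma$ vertex algebra $\cV_{\beta\gamma}$, with the parameter $\hbar$ identified with $2\pi i$ by matching the normalization of the propagator to the standard OPE. The strategy is to decompose the full $bc\beta\gamma$ factorization algebra as a tensor product of simpler pieces and then invoke these known identifications summand by summand.

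First I would observe that the free action $S_{free} = \int \langle \beta, \dbar \gamma\rangle_V + \int \langle b, \dbar c\rangle_T$ is the direct sum of two decoupled free BV theories on $\Sigma = \CC$: the $\beta\gamma$ system with target $V$ and the $bc$ system. Since $V \cong \CC^n$, the first piece factors further into $n$ independent copies of the $\beta\gamma$ system with target $\CC$. Because the two sectors share no fields and no quadratic coupling, a gauge-fix and propagator for $S_{free}$ can be taken to be the direct sum of gauge-fixes for each sector, and no Feynman diagram mixes the two. Consequently, as quantum observables of a sum of free theories, one obtains an isomorphism of factorization algebras on $\CC$,
\[
\Obs^{\q}_{free} \;\cong\; \Obs^{\q}_{bc} \;\otimes\; \bigl(\Obs^{\q}_{\beta\gamma,\CC}\bigr)^{\otimes n},
\]
with the tensor product taken in the symmetric monoidal category of factorization algebras valued in $\CC[\hbar]$-modules. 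This step is where the BV formalism carries most of the weight: one checks that each ingredient of the quantization (propagator, BV Laplacian, scale-$L$ bracket) respects the product decomposition.

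Next I would verify that $\Vert$ is monoidal on the subcategory of factorization algebras satisfying the hypotheses of Theorem~5.2.3.1 of \cite{CG1}: holomorphic translation invariance, a compatible $S^{1}$-action by rotation, a suitable grading, and the relevant boundedness and smoothness conditions. These hypotheses are preserved by tensor products (both sides inherit the translation and rotation actions diagonally), and the three data that reconstruct a vertex algebra---vacuum, state-field correspondence, and operator product---factor through the tensor product. Hence
\[
\Vert\bigl(\Obs^{\q}_{free}\bigr) \;\cong\; \Vert\bigl(\Obs^{\q}_{bc}\bigr) \;\otimes\; \Vert\bigl(\Obs^{\q}_{\beta\gamma,\CC}\bigr)^{\otimes n}.
\]

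Finally, I would invoke Section 5.3 of \cite{CG1} to get $\Vert(\Obs^{\q}_{\beta\gamma,\CC})_{\hbar = 2\pi i} \cong \cV_{\beta\gamma}$, and repeat the same argument almost verbatim for the $bc$ system---whose action has the same first-order Dolbeault form, only valued in $T^{1,0}_{\Sigma}$ and its dual---to conclude $\Vert(\Obs^{\q}_{bc})_{\hbar = 2\pi i} \cong \cV_{bc}$. Setting $\hbar = 2\pi i$ in the tensor decomposition above then yields the claim. The main obstacle, and the reason this is not purely a citation, is keeping careful track of the normalization: the factor $2\pi i$ arises from the residue pairing on the punctured disk that relates the analytic propagator $\tfrac{1}{2\pi i}\tfrac{\d z + \d w}{z-w}$ to the algebraic pole in the OPE, and one must check that the same normalization works simultaneously in both the $bc$ and $\beta\gamma$ factors so that the product in $\Vert(\Obs^{\q}_{free})$ coincides with the standard tensor product of vertex algebras on the right-hand side.
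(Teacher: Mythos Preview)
Your proposal is correct and follows essentially the same approach as the paper. In fact, the paper does not give a separate proof of this proposition at all: it simply observes in the paragraph preceding the statement that Section~5.3 of \cite{CG1} handles the free $\beta\gamma$ system, that ``the exact same arguments apply to the free $bc$ system,'' and that ``the exact same arguments apply to the free $bc\beta\gamma$ system,'' then records the proposition as the result of ``putting the claims together.'' Your write-up is a faithful and somewhat more detailed unpacking of exactly this reasoning---the tensor decomposition of the free theory, the monoidality of $\Vert$, and the parallel treatment of the $bc$ sector---so there is nothing to correct.
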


\subsection{A reminder on the chiral algebra of the string}\label{subsec: string vert}

We now provide a brief review of the vertex algebra for the chiral sector of the bosonic string. 
For a detailed reference we refer the reader to \cite{LZ1,LZ2}. 
The construction builds a {\em differential graded vertex algebra}, 
which is simply a vertex algebra in the category of cochain complexes. 
The underlying graded vertex algebra has a state space of the form
\ben
\cV_{\beta \gamma}^{\tensor 13} \tensor \cV_{bc},
\een
where $\cV_{\beta\gamma}$ and $\cV_{bc}$ are the $\beta\gamma$ and $bc$ vertex algebras, respectively. 
The $\beta$ and $\gamma$ generators are in grading degree zero, the $c$ generator is in grading degree~$-1$, and the $b$ is in grading degree~$1$. 
In the physics literature it is referred to as the {\em BRST} grading or {\em ghost number}.

Forgetting the cohomological (or BRST) grading, this vertex algebra is a conformal vertex algebra of central charge zero (by construction). 
In particular, this means that the vertex algebra has a stress-energy tensor. 
Explicitly, it is of the form
\ben
T_{\rm string} (z) = \left(\sum_{i = 1}^{13} \beta_i (z) \partial_z \gamma_i (z) + \partial_z \beta_i(z) \gamma_i (z) \right) + \left(b(z) \partial_z c(z) + 2 \partial_z b(z) c(z) \right) . 
\een
Note that $T_{\rm string}$ is of cohomological degree zero. 
The first parenthesis is interpreted as the stress-energy tensor of the vertex algebra $\cV_{\beta \gamma}^{\tensor 13}$ and the second term is the stress-energy tensor of $\cV_{bc}$. 

We have not yet described the differential on the graded vertex algebra. 
The BRST differential is defined to be the vertex algebra derivation obtained by taking the following residue
\be\label{brst}
Q^{BRST} = \oint c(z) T_{\rm string}(z) .
\ee
By construction this operator satisfies $(Q^{BRST})^2 = 0$. 

\begin{dfn} The {\em string vertex algebra} is the dg vertex algebra 
\ben 
\cV_{\rm string} = \left(\cV_{\beta \gamma}^{\tensor 13} \tensor \cV_{bc}, \; Q^{BRST}\right)  .
\een
\end{dfn}

There is another grading on $\cV_{\rm string}$ coming from the eigenvalues of the vertex algebra derivation $c_0$ called the {\em conformal dimension}. 
In particular, this determines a filtration and we can consider the associated graded ${\rm Gr} \; \cV_{\rm string}$. 
The conformal weight grading preserves the cohomological grading so that this object still has the structure of a dg vertex algebra. 

Note that the cohomology of a dg vertex algebra is an ordinary (graded) vertex algebra. 
The cohomology of the string vertex algebra is called the {\em BRST cohomology} of the bosonic string. 
In the remainder of this section we will show how we recover the string vertex algebra from the quantization of the holomorphic bosonic string.

\subsection{The case of the string}

The holomorphic bosonic string is a chiral CFT and so the machinery of~\cite{CG1} applies to it.
One can extract a vertex algebra directly by this method, as one does with the free $bc\beta\gamma$ theory.

But there is a slicker approach, using Li's work~\cite{LiVA},
which studies chiral deformations of {\em free} chiral BV theories such as the free $bc\beta\gamma$ system.
Recall that a deformation of a classical field theory is given by a local functional. 
We have seen that this is essentially the data of a Lagrangian density, which is a density valued multilinear functional that depends on (arbitrarily high order) jets of the fields. 
In other words, for a field $\varphi$, a Lagrangian density is of the form
\ben
\cL(\varphi) = \sum (D_{k_1} \varphi) \cdots (D_{k_m} \varphi) \cdot {\rm vol}_\Sigma
\een 
for $C^\infty(\Sigma)$-valued differential operators $D_{k_i}$.
By a {\em chiral} Lagrangian density we mean a Lagrangian for which the differential operators $D_{k_i}$ are all holomorphic. 
For instance, on $\Sigma = \CC$, we require $D_{k_i}$ to be a sum of operators of the form $f(z) \partial_z^n$ where $f(z)$ is a holomorphic function. 
On $\Sigma = \CC$ we will also require the chiral Lagrangian to be translation invariant. 
This means that all differential operators $D_{k_i}$ are of the form $\partial_z^n$. 
Thus, a {\em translation-invariant chiral deformation} is a local functional of the form
\ben
I(\varphi) = \sum \int (\partial^{k_1}_z \varphi) \cdots (\partial^{k_m} \varphi) \d^2 z .
\een
Such a deformation stays within the class of chiral CFTs.

One of Li's main results is that for a free chiral BV theory with action $S_{\rm free}$ and associated vertex algebra $\cV_{\rm free}$, one has the following:
\begin{itemize}
\item For any chiral interaction~$I$, the action $S = S_{\rm free} + I$ yields a renormalized action functional $I[L] = \lim_{\ell \to 0} W(P_\ell^L, I)$
that requires no counterterms. 
That is, the weights of all Feynman diagrams are finite (compare to Proposition \ref{prop: no counterterms}),
\item If the renormalized action $\{I[L]\}$ satisfies the quantum master equation, 
then it determines a vertex algebra derivation $D_I$ of~$\cV_{\rm free}$ of the form
\ben
D_I = \oint I^q\, \d z
\een
that is square-zero and of cohomological degree one.
Here, $I^q = \lim_{L \to 0} I[L]$, where $I[L]$ is the renormalized action functional.
Modulo~$\hbar$, it agrees with the chiral interaction $I$, but it has $\hbar$-dependent terms that provide the ``quantum corrections'' to the classical action.
\item The dg vertex algebra $\cV_I$ for such an action $\{I[L]\}$ has the same underlying graded vertex algebra $\cV_{\rm free}$ but it is equipped with the differential $\oint I^q \d z$. 
\end{itemize}
This construction significantly reduces the work of constructing the vertex algebra for the chiral deformation, as one need not analyze the factorization algebra directly.

\begin{rmk} The fact that $I$ satisfies the quantum master equation implies that one has a map, for each open set $U \subset \CC$, from the free factorization algebra evaluated on $U$ to the factorization algebra of the deformed theory evaluated on $U$:
\ben
e^{I /\hbar} : \Obs^q_{free}(U) \to \Obs^q_I (U) .
\een
This map sends an observable $O \in \Obs^q_{free}(U)$ to $O \cdot e^{I/\hbar}$. 
In fact, this map is an isomorphism with inverse given by $O \mapsto O \cdot e^{-I/\hbar}$. 
So, open by open, the factorization algebra assigns the same vector space for the deformed theory.
This isomorphism is {\em not} compatible with the factorization product, so we do get a different factorization algebra in the presence of a deformation.
\end{rmk}

The holomorphic bosonic string with target $V=\CC^{13}$ provides a concrete example of this situation.
The free theory is the $bc\beta\gamma$ system,
the holomorphic bosonic string is a chiral deformation of it, 
and we have seen that the renormalized action of the string satisfies the QME.
Hence we obtain the following.

\begin{prop} 
\label{prop: fact is vert}
Let $\Obs^\q_{\rm string}$ be the factorization algebra on $\Sigma = \CC$ of the holomorphic bosonic string with target~$V = \CC^{13}$. 
Let $\Vert(\Obs^q_{\rm string})$ be the dg vertex algebra (defined over $\CC[\hbar]$) obtained via Li's construction. 
There is an isomorphism  
\[
\cV_{\rm string} \cong \Vert(\Obs^q_{\rm string})\big|_{\hbar = 2 \pi i}
\]
of dg vertex algebras.
Moreover, this vertex algebra is isomorphic to the chiral sector of the bosonic string as in Section~\ref{subsec: string vert}.
\end{prop}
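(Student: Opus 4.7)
The plan is to combine Li's framework (reviewed immediately before the statement) with Proposition \ref{prop: bcbg vertex}. Li's theorem tells us that the underlying graded vertex algebra of $\Vert(\Obs^\q_{\rm string})$ agrees with that of the free $bc\beta\gamma$ theory, which by Proposition \ref{prop: bcbg vertex} is isomorphic at $\hbar = 2\pi i$ to $\cV_{bc} \otimes \cV_{\beta\gamma}^{\otimes 13}$. This matches the underlying graded vertex algebra of $\cV_{\rm string}$. Hence the content of the proposition reduces to identifying the dg structures: the derivation $D_{I_{\rm string}} = \oint I^q(z)\,\d z$ coming from Li's construction must coincide with $Q^{BRST} = \oint c(z) T_{\rm string}(z)\,\d z$.

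To make this comparison, I would compute $I^q = \lim_{L \to 0} I[L]$ explicitly. From the proof of Proposition \ref{prop anomaly}, the quantized action for the $d=13$ string contains only tree-level and one-loop terms. At tree level, $I^q$ restricts to the classical interaction $\int \langle \beta, [c,\gamma]\rangle_V + \int \langle b, [c,c]\rangle_T$. Unpacking the pairings and Lie brackets in a holomorphic coordinate on $\CC$ and passing to the residue formulation, this classical piece contributes the ``obvious'' part of the BRST current, essentially $c(z) \bigl( \beta_i(z) \partial_z \gamma_i(z) + b(z) \partial_z c(z) \bigr)$, modulo a careful treatment of normal ordering.

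The main obstacle is accounting for the one-loop correction to $I^q$ and showing that it supplies exactly the extra terms $(\partial_z \beta_i) \gamma_i$ and $2(\partial_z b) c$ needed to upgrade this naive current to $c \cdot T_{\rm string}$. Diagrammatically, the relevant quantum contributions come from the wheel diagrams with a single external $c$-leg (trees attached to a wheel of $\beta\gamma$- or $bc$-edges), analyzed by exactly the techniques used in the proof of Proposition \ref{prop anomaly} and in its appendix. The short-distance expansion of the mollified propagator $P^L_0$ in the limit $L \to 0$ produces the derivative of the companion field through the extra Taylor coefficient contracted into the internal vertex; one checks that with the normalization $\hbar = 2 \pi i$ set by the Green's function $P(z,w) = (2\pi i)^{-1}(\d z + \d w)/(z-w)$, the numerical coefficients agree with those in the stress-energy tensor of Section \ref{subsec: string vert}. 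A structural sanity check is that $(Q^{BRST})^2 = 0$ is equivalent to the total central charge vanishing, which in the BV formalism is precisely the QME for $I^q$; this was verified exactly when $\dim_\CC V = 13$, so compatibility of the two differentials is consistent with the obstruction calculation already done.

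The concluding sentence of the proposition is then essentially definitional: Section \ref{subsec: string vert} introduced $\cV_{\rm string}$ as the dg vertex algebra governing the chiral sector of the ordinary bosonic string, so once $\Vert(\Obs^\q_{\rm string})\big|_{\hbar = 2 \pi i}$ has been identified with $\cV_{\rm string}$, the second claim requires no further work.
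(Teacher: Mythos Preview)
Your overall strategy matches the paper's: invoke Proposition~\ref{prop: bcbg vertex} for the underlying graded vertex algebra, then compare the differentials via Li's framework. The gap is in how you execute that comparison.

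You assert that the classical interaction contributes only a partial BRST current $c\bigl(\beta_i\partial\gamma_i + b\,\partial c\bigr)$ and that the one-loop part of $I^q$ supplies the remaining terms involving $\partial\beta_i\,\gamma_i$ and $\partial b\,c$. This mechanism cannot work. As established in Section~\ref{sec: quantization}, every external leg of a wheel diagram is of $c$-type, so the one-loop functional $I_1[L]$ depends on the $c$-field alone and can never produce a term containing $\beta$, $\gamma$, or~$b$. (The ``wheel with a single external $c$-leg'' you invoke is the tadpole, which was shown explicitly to vanish.)

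The paper's argument runs in the opposite direction. One observes directly that $\oint I\,\d z$, built from the \emph{classical} interaction, already equals $Q^{BRST}$: the brackets $[c,\gamma]$ and $[c,c]$ appearing in $I$ are Lie derivatives, and as a derivation on the vertex algebra this residue acts on each field by the Lie derivative appropriate to its tensor weight---which is precisely what $\oint c\,T_{\rm string}$ does. What remains is to rule out genuine quantum corrections to $I^q$, and here the paper appeals to a structural fact rather than further diagrammatics: since the obstruction of Proposition~\ref{prop anomaly} vanishes \emph{identically} when $\dim_\CC V = 13$---not merely up to an exact term in the deformation complex---no $\hbar$-counterterm is needed and the differential receives no correction.
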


The factorization algebra $\Obs^\q_{\rm string}$ is also a quantization of the factorization algebra $\Obs^{\cl}_{\rm string}$ of classical observables.
We have noted that the classical observables of any theory has the structure of a $P_0$ factorization algebra, and the $\hbar \to 0$ limit of $\Obs^\q_{\rm string}$ is isomorphic to $\Obs^{\cl}_{\rm string}$ as $P_0$ factorization algebras.
By definition, the classical observables are simply functions on the solutions to the classical equations of motion.
The $P_0$ structure is induced from the symplectic pairing of degree -1 on the fields. 
The classical factorization algebra still has enough structure to determine a vertex algebra $\Vert(\Obs^\cl_{\rm string})$.
Moreover, the $P_0$ bracket on the classical observables determines the structure of a {\em Poisson vertex algebra} on $\Vert(\Obs^{\cl}_{\rm string})$. 

\begin{cor} In the classical limit, there is an isomorphism 
\[
\Vert(\Obs^{\cl}_{\rm string}) \cong {\rm Gr} \; \cV_{\rm string}
\]
of Poisson vertex algebras.
\end{cor}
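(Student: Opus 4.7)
The plan is to deduce this classical corollary from the quantum isomorphism of Proposition \ref{prop: fact is vert} by carefully taking the $\hbar \to 0$ limit on both sides. The key observation is that the vertex-algebra extraction of Li is an $\hbar$-linear functor, and the relevant filtrations on the two sides match up under the identification $\hbar = 2\pi i$.

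First, I would observe that $\Obs^\q_{\rm string}$ is naturally a family over $\CC[\hbar]$ whose $\hbar \to 0$ limit is the $P_0$ factorization algebra $\Obs^{\cl}_{\rm string}$. Applying Li's construction to this family produces an $\hbar$-family of vertex algebras. Because the propagator $P^L_\ell$ carries an explicit factor of $\hbar$ in its role inside the quantum OPE, the leading singular term of the quantum OPE is of order $\hbar$, and in the $\hbar \to 0$ limit descends to the $\lambda$-bracket of the Poisson vertex algebra $\Vert(\Obs^{\cl}_{\rm string})$. Thus the $\hbar$-adic filtration on $\Vert(\Obs^\q_{\rm string})$ has associated graded precisely $\Vert(\Obs^{\cl}_{\rm string})$ as a Poisson vertex algebra.

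Next, I would match this $\hbar$-adic filtration with the conformal dimension filtration on $\cV_{\rm string}$ under the isomorphism of Proposition \ref{prop: fact is vert}. This matching is already visible for the free $bc\beta\gamma$ system from Proposition \ref{prop: bcbg vertex}: the generators $\beta, \gamma, b, c$ carry definite conformal weight, and the power of $\hbar$ appearing in the OPE of two generators matches the polar order of the corresponding OPE after setting $\hbar = 2\pi i$. Because the chiral interaction $I$ of the holomorphic string is built from local functionals of the same conformal weight as the free action (its quantization $I^q$ remains so by the explicit Feynman diagram analysis of Section~\ref{sec: quantization}), the derivation $\oint I^q\, \d z$ respects the filtration, so the matching propagates from the free vertex algebra to $\cV_{\rm string}$. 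Taking associated graded on both sides therefore produces matching graded vertex algebras.

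Finally, I would check that the isomorphism intertwines the Poisson vertex algebra structures. On the classical factorization algebra side the $\lambda$-bracket is induced by the $(-1)$-shifted symplectic pairing on fields via the BV formalism, while on the ${\rm Gr}\;\cV_{\rm string}$ side it is the leading singular part of the OPE in $\cV_{\rm string}$. These agree because the contraction against the propagator $P$ that governs the singular OPE is, at leading order in $\hbar$, precisely the BV bracket determined by the symplectic pairing. The principal obstacle is the bookkeeping in the second step: one must carefully align the normalization $\hbar = 2\pi i$ with the conformal dimension filtration so that Li's vertex-extraction functor commutes with passage to the associated graded. Once this is established for the free $bc\beta\gamma$ theory the rest is mechanical, and the corollary follows.
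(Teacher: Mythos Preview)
The paper does not give a separate proof of this corollary; it is stated immediately after Proposition~\ref{prop: fact is vert} as a direct consequence, and the text then passes straight to the proof of the proposition. So there is no detailed argument in the paper to compare against, and your overall strategy---deduce the classical statement from the quantum isomorphism by passing to an associated graded---is the intended one.

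That said, your second step contains a genuine gap. You assert that the $\hbar$-adic filtration on $\Vert(\Obs^\q_{\rm string})$ matches the conformal-dimension filtration on $\cV_{\rm string}$, and justify this by the claim that ``the power of $\hbar$ appearing in the OPE of two generators matches the polar order of the corresponding OPE.'' This is false already in the free $bc\beta\gamma$ system: for instance
\[
\partial\beta(z)\,\gamma(w)\;\sim\;\frac{\hbar}{(z-w)^{2}}
\]
carries a single power of $\hbar$ but a second-order pole. More generally, the $\hbar$-degree counts the number of Wick contractions, whereas the pole order (equivalently, the drop in conformal weight, via $a_{(n)}b\in V_{\mathrm{wt}(a)+\mathrm{wt}(b)-n-1}$) is governed by how many derivatives sit on the propagators; these quantities are independent. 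Hence the $\hbar$-adic filtration does \emph{not} coincide with the conformal-dimension filtration, and your matching argument does not go through as written. To close the gap you would need to identify precisely which filtration on $\cV_{\rm string}$ corresponds to the $\hbar$-adic one (for free-field vertex algebras this is a version of Li's canonical filtration, tracking the number of contractions rather than pole order) and then argue that its associated graded agrees, as a Poisson vertex algebra, with what the paper calls ${\rm Gr}\,\cV_{\rm string}$. Alternatively, one can bypass filtrations and compare the commutative product and $\lambda$-bracket on each side directly, which is closer in spirit to how the paper treats the corollary as immediate.
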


\begin{proof}[Proof of Proposition \ref{prop: fact is vert}] By Proposition \ref{prop: bcbg vertex} we know that the vertex algebra of the associated free theory is identified with the $bc\beta\gamma$ vertex algebra. 
The thing we need to check is that the differential induced from the quantization of the holomorphic string agrees with the differential of the string vertex algebra. 
In fact, we observe that the induced differential $\oint I \,\d z$ from the classical interaction of the holomorphic bosonic string agrees with the BRST charge in Equation (\ref{brst}). 
To see that this persists at the quantum level we need to check that there are no quantum corrections. 
Indeed, this follows from the fact that the quantum master equation holds identically (as opposed to holding up to an exact term in the deformation complex) provided $\dim_\CC V = 13$. 
\end{proof}

\subsection{The $E_2$ algebra and descent}

In this section we highlight a remarkable feature of the vertex algebra associated to the bosonic string. 
At first glance, the theory we have constructed is far from being topological.
Indeed, the classical theory depends delicately on the complex structure of the two-dimensional source. 
Nevertheless, the local observables of the bosonic string behave like the observables of a {\em topological} field theory (TFT). 
In particular, as noted perhaps first by \cite{Getzler}, the observables of a 2-dimensional TFT have the structure of a {\em Gerstenhaber algebra}.
In this section we provide two equivalent methods for extracting this algebra.
The first is intuitive from the point of view of factorization algebras, but has the disadvantage of not giving a concrete description of the algebra. 
The second approach gives an explicit formula for the bracket and is based on the formalism of ``descent" for local operators. 

\subsubsection{The $E_2$ algebra}

We continue to consider the theory on the Riemann surface $\Sigma = \CC$. 
In this section we show how to produce, from the point of view of factorization algebras, the structure of a Gerstenhaber algebra on the BRST cohomology of the bosonic string. 

Recall that a Gerstenhaber algebra is equivalent to an algebra over the operad given by the homology of the little 2-disk operad.
Hence, our approach is to see why the factorization algebra naturally exhibits the structure of a algebra of little 2-disks.
Here we use an important result of Lurie (namely Theorem 5.4.5.9 of~\cite{Lurie}): 
a {\em locally constant} factorization algebra on $\RR^n$ is equivalent to an algebra over the little $n$-disks operad, i.e., an $E_n$-algebra. 

\begin{prop} 
\label{prop: obs is e2}
The factorization algebra $\Obs^\q_{\rm string}$ is locally constant, 
and hence it determines an $E_2$ algebra.
\end{prop}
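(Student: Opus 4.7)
By Lurie's Theorem 5.4.5.9 of~\cite{Lurie}, it suffices to establish that $\Obs^\q_{\rm string}$ is locally constant, i.e., that for any inclusion of disks $D \subset D'$ in $\CC$, the structure map $\Obs^\q_{\rm string}(D) \to \Obs^\q_{\rm string}(D')$ is a quasi-isomorphism. My plan is to deduce this from translation invariance of the theory together with the fact that the infinitesimal generators of translation on $\CC$ act nullhomotopically on the observables.

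First I would note that the action functional~(\ref{bosaction}) is manifestly translation-invariant, so $\Obs^\q_{\rm string}$ carries an action of the translation group of $\CC = \RR^2$ and hence of the Lie algebra generated by $\partial_z$ and $\partial_{\bar z}$. Any translation-invariant factorization algebra on $\RR^n$ is locally constant as soon as both infinitesimal translation operators are homotopic to zero, because one can then interpolate between any two nested disks through the translation action, with the interpolation producing a chain homotopy between the identity and the structure map. So the task reduces to producing explicit nullhomotopies for $\partial_z$ and $\partial_{\bar z}$ acting on $\Obs^\q_{\rm string}$.

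Passing through the equivalence with the vertex algebra $\cV_{\rm string}$ furnished by Proposition~\ref{prop: fact is vert}, the generators of translation are identified with the modes $L_{-1}$ and $\bar L_{-1}$ of the stress-energy tensor. Antiholomorphic translations are nullhomotopic for a structural reason visible already at the level of the free theory: the kinetic operator $\dbar$ appears as part of the BRST differential on observables, so $\partial_{\bar z}$ is literally a component of the differential up to sign. Concretely, the field--state correspondence combined with the $\dbar$-exactness of $\bar L_{-1}$ shows this direction is trivial in cohomology, and indeed one has a cochain-level homotopy since $\dbar^{-1}$ appears in the propagator. The work is therefore concentrated in the holomorphic direction.

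For the holomorphic direction, the central input is the identity $T_{\rm string} = \{Q^{BRST}, b\}$ inside $\cV_{\rm string}$, where $Q^{BRST} = \oint c(z) T_{\rm string}(z)$ is the BRST charge of~(\ref{brst}). I would verify this identity by a direct OPE computation: one expands $c(z) T_{\rm string}(z) b(w)$ using the OPEs of the free $bc\beta\gamma$ vertex algebra (the conformal weights of $b$ and $c$ being $2$ and $-1$, respectively) and collects the simple pole in $(z - w)$, which yields precisely $T_{\rm string}(w)$ once one checks that the would-be central term vanishes at the critical dimension $\dim_\CC V = 13$. Taking residues of this equation of fields then gives $L_{-1} = [Q^{BRST}, b_{-2}]$, so $L_{-1}$ is a BRST-commutator and therefore nullhomotopic as an operator on the cochain complex underlying $\cV_{\rm string}$. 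The main obstacle in this step is bookkeeping: one must verify the identity at the cochain level (not merely in BRST cohomology) in such a way that the nullhomotopy lifts to an operation on the factorization algebra rather than on its cohomology. Here I would appeal to Li's construction (used in the proof of Proposition~\ref{prop: fact is vert}), which realizes the BRST differential as the bracket with a specific local functional and hence makes the operator $b_{-2}$ visible as a natural operation on $\Obs^\q_{\rm string}$ whose BRST commutator is exactly the translation derivation. Combining the two nullhomotopies with the translation-invariance argument of the first paragraph then yields local constancy, and Lurie's theorem produces the desired $E_2$ algebra structure.
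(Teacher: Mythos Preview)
Your approach differs from the paper's and has a genuine gap at the step where you pass from nullhomotopic translations to local constancy.

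The paper argues directly. It identifies the classical observables on a disk $D$ with a Chevalley--Eilenberg complex for holomorphic vector fields on $D$ acting on a module built from the $\beta\gamma b$ fields, compares this (via restriction and Taylor expansion at the center) with the analogous complexes for polynomial vector fields ${\rm W}_1^{\rm poly}$ and formal vector fields ${\rm W}_1$, and then uses the contracting homotopy $\iota_{L_0}$ to reduce to the conformal-dimension-zero subcomplex, which is manifestly independent of the radius of the disk. A spectral sequence filtered by symmetric polynomial degree then upgrades the classical statement to the quantum one.

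Your ingredients---the homotopies trivializing $\partial_{\bar z}$ and $\partial_z$---are correct, and the paper in fact establishes exactly these (as the operators $\bar\eta$ and $\eta$) in the subsection immediately following the proposition, where they are used for the descent construction rather than for local constancy. But the implication you draw from them is not justified. Translations move disks without changing their radii, so they cannot ``interpolate between any two nested disks'' as you write; homotopy-triviality of $\partial_z$ and $\partial_{\bar z}$ only tells you that the various structure maps $\Obs^\q_{\rm string}(D_r(x)) \to \Obs^\q_{\rm string}(D_R(0))$, as the center $x$ varies, are mutually homotopic---not that any one of them is a quasi-isomorphism. To control the radius one would need the dilation generator $L_0$ to act nullhomotopically as well; this does follow from $T_{\rm string} = \{Q^{BRST}, b\}$ via $L_0 = [Q^{BRST}, b_{-1}]$, but you do not invoke it, and even granting it one still owes an argument linking that operator-level statement on each $\Obs^\q_{\rm string}(D)$ to the actual structure maps between different disks. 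That radial step is precisely where the paper's $\iota_{L_0}$ argument and the comparison with ${\rm W}_1$ do real work.
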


In particular, the cohomology $H^*(\Obs^\q_{\rm string})$ is an algebra over the cohomology of the $E_2$ operad and hence a Gerstenhaber algebra.

\begin{rmk}
When a topological field theory arises from an action functional (e.g., Chern-Simons theories),
the factorization algebra is locally constant.
Hence such a TFT in $n$ real dimensions produces an $E_n$-algebra, by Lurie's result. 
(This claim holds true, at least, for all the examples with which we are familiar.)
In this sense, holomorphic bosonic string theory is a 2-dimensional topological field theory. 
Moreover, by work of Scheimbauer \cite{Scheim},
every $E_n$ algebra determines a fully-extended framed n-dimensional TFT in the functorial sense, albeit with values in an unusual target $(\infty,n)$-category.
In this sense, at least, the holomorphic bosonic string determines a functorial 2-dimensional TFT.
\end{rmk}

\begin{proof} 
We need to show that for any inclusion of open disks $D \hookrightarrow D'$, the natural map
\ben
\Obs^\q_{\rm string}(D) \to \Obs^\q_{\rm string}(D')
\een
is a quasi-isomorphism. 

We first show that the classical observables are locally constant. 
We have already mentioned that the classical observables are the commutative algebra of functions on the space of solutions to the classical equations of motion. 
This space of solutions forms a sheaf on $\Sigma$, 
since satisfying a PDE is a local condition.
We find it convenient to encode the equations of motion as the Maurer-Cartan equation of a sheaf  of dg Lie algebras:
\ben
\Omega^{0,*}(\Sigma ; \cT_\Sigma) \ltimes \left(\Omega^{0,*}(\Sigma; V)[-1] \oplus \Omega^{1,*}(\Sigma;V^*)[-1] \oplus \Omega^{1,*}(\Sigma ; \cT_\Sigma^*)[-2] \right) . 
\een
(Note that the underlying graded space is simply the fields shifted up by one degree,
which is a generic phenomenon in the BV formalism.)
The dg Lie algebra $\Omega^{0,*}(\Sigma ; \cT_\Sigma)$ is simply a sheaf-theoretic resolution of holomorphic vector fields, with the usual Lie bracket.
Our large dg Lie algebra is a square-zero extension of $\Omega^{0,*}(\Sigma ; \cT_\Sigma)$, 
by the dg module inside the parentheses.
The vector fields act by the Lie derivative on the space
\[
\Omega^{0,*}(\Sigma; V)[-1] \oplus \Omega^{1,*}(\Sigma;V^*)[-1] \oplus \Omega^{1,*}(\Sigma ; \cT_\Sigma^*)[-2],
\]
which is simply a copy of the $\beta\gamma$ system with target vector space~$V$, plus the $b$-field part of the classical theory.
 
For simplicity, we write $\cL = \Omega^{0,*}(\Sigma ; \cT_\Sigma)$ and write $\cM$ for the module inside the parentheses.
In this language, the space of classical observables supported on an open set $U \subset \Sigma$ is the Chevalley-Eilenberg cochain complex
\ben
\Obs^{\cl}_{\rm string}(U) = \clie^*\left(\cL(U) \ltimes \cM(U)\right) = \clie^*\left(\cL(U) ; \; \Sym(\cM(U)^*[-1])\right),
\een
where $\cM(U)^*$ denotes the continuous linear dual of~$\cM(U)$. 

Consider now the case that the open set is a disk $U = D$, which we can assume is centered at zero. 
By the $\dbar$-Poincar\'{e} lemma 
there is a quasi-isomorphism of dg Lie algebras $\cT^{hol}(D) \hookrightarrow \cL(D)$ where $\cT^{\rm hol}(D)$ is the vector space of holomorphic vector fields on $D$. 
Thus, we have a quasi-isomorphism
\[
\clie^*\left(\cT_{hol}(D) ; \; \Sym(\cM(D)^*[-1])\right) \simeq \Obs^{\cl}_{\rm string}(D).
\]
This quasi-isomorphism clearly holds for any disks (and is compatible with inclusions of disks), so it suffices to check that the left-hand side is a quasi-isomorphism for an inclusion of disks.

Consider the composition of Lie algebras
\ben
{\rm W}_1^{\rm poly} \hookrightarrow \cT_{hol} (D) \to {\rm W}_1
\een
where ${\rm W}_1^{\rm poly}$ are the holomorphic vector fields with {\em polynomial} coefficients, and ${\rm W}_1$ is the Lie algebra with {\em power series} coefficients (i.e., formal vector fields).
The second map is the power series expansion, at zero, of a holomorphic vector field. 
We will compare Lie algebra cohomology using these different Lie algebras.

Let $\cA(D)$ denote $\Sym(\cM(D)^*[-1])$.
It determines a module over ${\rm W}_1^{\rm poly}$ by restriction,
which we will abusively denote $\cA(D)$ as well.
Likewise, if $j_0^\infty \cM$ denotes the infinite jet of the sheaf $\cM$ at the origin of the disk $D$,
then it determines a natural module over ${\rm W}_1$.
Then $\Sym(\cM(D)^*[-1])$ determines a ${\rm W}_1$-module that we will also abusively denote by~$\cA(D)$.
 
The inclusion $D \hookrightarrow D'$ then yields a commutative diagram
\ben
\xymatrix{
\clie^*\left({\rm W}_1^{\rm poly} ; \cA(D)\right) \ar[d] & \ar[l] \clie^*\left(\cT_{\rm hol} (D) ; \cA(D))\right) \ar[d] & \ar[l]  \clie^*\left({\rm W}_1 ; \cA(D)\right) \ar[d] \\
\clie^*\left({\rm W}_1^{\rm poly} ; \cA(D')\right) & \ar[l] \clie^*\left(\cT_{\rm hol} (D') ; \cA(D')\right) & \ar[l] \clie^*\left({\rm W}_1 ; \cA(D')\right) .
}
\een
By Lemma \ref{lem: gf} (and an analogous result for polynomial vector fields),
these complexes $\clie^*({\rm W}_1 ; \cM)$ and $\clie^*({\rm W}^{\rm poly}_1; \cM)$ are quasi-isomorphic to the subcomplex consisting of conformal dimension zero elements, 
i.e., to the constants. 
As the conformal dimension zero subcomplex does not depend on the size of the disk, we conclude that vertical arrows on the outside of the commutative diagram are quasi-isomorphisms. 
It follows that the middle vertical arrow is as well, 
thus showing that $\Obs^{\cl}_{\rm string}(D) \to \Obs^{\cl}_{\rm string}(D')$ is a quasi-isomorphism, as desired. 

To finish the proof, we need to prove the quasi-isomorphism for {\em quantum} observables.
Consider the spectral sequence induced from the filtration of the module~$\Sym \;\cM(D)$ by symmetric polynomial degree. 
The $E_1$ page of this spectral sequence is the classical observables above, 
and it converges to the cohomology of the quantum observables. 
As the map of factorization algebras induced by the inclusion $D \hookrightarrow D'$ preserves this filtration, 
we obtain a map of spectral sequences,
which is quasi-isomorphism on the first page.
Hence, $\Obs^{\q}_{\rm string}(D) \to \Obs^\q_{\rm string}(D')$ is also a quasi-isomorphism. 
\end{proof}

\subsubsection{The stress-energy tensor}

In \cite{WittenTop}, where the notion of a TFT was introduced,
Witten characterized a topological field theory
as a theory whose stress-energy tensor is (homotopy) trivial. 
We now verify that property of the holomorphic bosonic string.
That is, we want to show that the translation symmetries of the holomorphic bosonic string act trivially on the cohomology of the observables.

As a first step, consider the action of the differential operators $\frac{\d}{\d z}$ and $\frac{\d}{\d \zbar}$ on the Dolbeault complex $\Omega^{0,*}(\CC)$. 
This action extends to an action on the fields of the holomorphic bosonic string, and hence to their classical observables as well. 
By Noether's theorem any symmetry of a theory determines classical observables: 
for these symmetries, these are simply the $zz$ and $\zbar \zbar$ components of the stress-energy tensor $T_{zz}$, $T_{\zbar \zbar}$. 
In the case of the bosonic string, we will now show that the stress-energy tensor is cohomologically trivial on the quantum observables.
(Similar but simpler arguments apply to the classical case.)

For each open $U \subset \CC$, the differential operators lift to cochain maps on the quantum level
\ben
\frac{\d}{\d z} , \frac{\d}{\d \zbar} : \Obs^\q_{\rm string} (U) \to \Obs^\q_{\rm string}(U) 
\een 
because the BV Laplacian is translation-invariant.
These cochain maps intertwine with the structure maps of the factorization algebra 
in the sense that they define {\em derivations} of the factorization algebra. 
(See Definition 7.3.2 of \cite{CG1} for a discussion of this notion.)
Note that these operators preserve the cohomological degree. 

Consider now the operator 
\ben
\Bar{\eta} = \frac{\partial}{\partial (\d \zbar)} 
\een 
acting on Dolbeault forms. 
This operator $\Bar{\eta}$ extends to a derivation of degree $-1$ on the factorization algebra $\Obs^\q_{\rm string}$. 
It satisfies the relation
\be\label{d/dzbar}
[\dbar + \hbar \Delta + \{I^\q, -\} , \Bar{\eta}] =  \frac{\d}{\d \zbar}
\ee
as endomorphisms of the factorization algebra, as we now explain.
One observes first that $[\dbar, \Bar{\eta}] =  \frac{\d}{\d \zbar}$. 
Moreover, since $I^\q$ is a chiral deformation, we also have $\Bar{\eta} \cdot I^\q = 0$. 
Finally, since the pairing defining the $-1$-shifted symplectic structure is holomorphic, 
we see that $\Bar{\eta}$ also commutes with the BV Laplacian $[\Bar{\eta}, \Delta] = 0$. 
Hence we have shown the following, by relation~(\ref{d/dzbar}).

\begin{lem}
The operator $\frac{\d}{\d \zbar}$ acts homotopically trivial on $\Obs^\q_{\rm string}$. 
\end{lem}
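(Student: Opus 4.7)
The plan is to exhibit $\bar\eta = \partial/\partial(\d\bar z)$ as an explicit chain homotopy realizing the commutator identity (\ref{d/dzbar}), so that $\partial/\partial\bar z$ is exact in the endomorphism complex of $\Obs^\q_{\rm string}$. The full differential on quantum observables has the form $d_q = \bar\partial + \hbar\Delta + \{I^\q,-\}$ (once we fix a length scale, as in the previous section), so the lemma will follow immediately from (\ref{d/dzbar}) because $[d_q,\bar\eta] = \partial/\partial \bar z$ witnesses the null-homotopy. Hence the content of the proof is verifying the three commutator identities asserted in the paragraph preceding the lemma.

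First I would check that $[\bar\partial, \bar\eta] = \partial/\partial\bar z$ as operators on $\Omega^{0,*}(\CC)$ (and hence on all the Dolbeault complexes valued in our bundles). This is a direct computation: on a $(0,0)$-form $f$, we have $\bar\eta \bar\partial f = \bar\eta (\partial_{\bar z}f\, \d\bar z) = \partial_{\bar z} f$ while $\bar\partial \bar\eta f = 0$, and on a $(0,1)$-form $g\,\d\bar z$ the roles reverse with a sign, giving the graded commutator identity. This extends verbatim to the fields with coefficients in the trivial bundle $V$, in $T^{1,0}_\CC$, and in $T^{1,0*}_\CC$, since $\bar\eta$ and $\partial_{\bar z}$ act only on the Dolbeault factor.

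Next I would verify $\bar\eta\cdot I^\q = 0$. The classical interaction $I_{\rm string}$ is built from the integrals $\int\langle\beta,[c,\gamma]\rangle_V$ and $\int\langle b,[c,c]\rangle_T$, which are chiral in the sense of Section~4: they involve only holomorphic derivatives of the jets of the fields, and the integrand couples the $\d\bar z$ forms exactly once with a $\d z$ form to produce a top form, leaving no room for an extra $\d\bar z$ to be stripped off by $\bar\eta$. The same chirality property is preserved under the renormalization group flow that produces $I^\q$, since the propagator $P_\ell^L$ is built out of $\bar\partial^*$ applied to the heat kernel and the heat kernel itself is translation-invariant; concretely, each Feynman diagram weight is still a translation-invariant chiral local functional in the sense of Li's theorem cited above. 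I would invoke this structural fact rather than recompute each graph.

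Finally, $[\bar\eta, \Delta] = 0$ because the BV Laplacian at scale $L$ is contraction with the heat kernel $K_L$, and $K_L$ is a product of a holomorphic factor in $\d z, \d w$ and a Gaussian factor depending only on $z,w,\bar z,\bar w$ with no free $\d\bar z,\d\bar w$ — more precisely the shifted symplectic pairing that produces $\Delta$ is defined via the holomorphic pairings $\langle-,-\rangle_V$ and $\langle-,-\rangle_T$, so the kernel carries all antiholomorphic form degree in the combination $\d\bar z - \d\bar w$ which $\bar\eta$ annihilates after symmetrization (or, equivalently, the pairing's form-type is invariant under the derivation $\bar\eta$). Combining the three identities gives (\ref{d/dzbar}) and hence the lemma. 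The main obstacle I anticipate is bookkeeping: one must check compatibility of $\bar\eta$ with the structure maps of the factorization algebra (i.e.\ that $\bar\eta$ really is a derivation, analogous to the argument for $\partial/\partial\bar z$), which requires care because $\bar\eta$ has cohomological degree $-1$ and acts across all the inputs of a polynomial observable via the Leibniz rule; however, this is formally the same verification as for translation derivations in \S7.3 of \cite{CG1}.
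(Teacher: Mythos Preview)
Your proposal is correct and follows essentially the same approach as the paper: the paper's argument (which appears in the paragraph immediately preceding the lemma) consists precisely of verifying the three commutator identities $[\dbar,\bar\eta]=\partial/\partial\zbar$, $\bar\eta\cdot I^\q=0$ (from chirality), and $[\bar\eta,\Delta]=0$ (from holomorphicity of the pairing), and then combining them to obtain~(\ref{d/dzbar}). You have supplied more detail than the paper does, particularly in the explicit verification on $(0,0)$- and $(0,1)$-forms and in flagging the need to check that $\bar\eta$ is a derivation of the factorization structure, but the logical skeleton is identical.
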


This fact ensures that the stress-energy tensor vanishes in the $\zbar\zbar$ direction.

We now turn to~$\d/\d z$.
View this vector field $\frac{\d}{\d z}$ as a constant $c$-field. 
Consider the linear local functional of cohomological degree $-2$:
\ben
O_{\frac{\d}{\d z}}(\beta,\gamma,b,c) = \int \<b, \frac{\d}{\d z}\>,
\een
It only depends on the $b$-field.
Note that for this integral to be nonzero, 
the field $b$ must live in $\Omega^{1,1}(\Sigma , T_\Sigma^{1,0*})$.
(In fact, $b$ must also be compactly supported for the integral to be well-defined.) 
Using the BV bracket, we obtain a derivation of the factorization algebra 
\ben
\eta = \{O_{\frac{\d}{\d z}}, -\}
\een
of cohomological degree~$-1$. 
It might help to draw this bracket diagrammatically, 
so one can see that it is a derivation that acts linearly on the generators
(i.e., linear functionals on the fields).

\begin{lem} 
The derivation $\eta$ satisfies 
\be\label{d/dz}
[\dbar + \hbar \Delta + \{I^\q, -\}, \eta] = \frac{\d}{\d z}.
\ee
\end{lem}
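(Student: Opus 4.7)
The strategy is to exploit the fact that $Q := \dbar + \hbar \Delta + \{I^\q, -\}$ is an odd derivation of the shifted Poisson bracket $\{-,-\}$. This is a consequence of the quantum master equation together with the standard BV identities: $\{-,-\}$ is a graded biderivation of the commutative product and satisfies the graded Jacobi identity, while $\Delta$ is a derivation of $\{-,-\}$ (up to sign). Granting this, one has the general identity
\[
[Q, \eta] \;=\; [Q, \{O_{\partial/\partial z}, -\}] \;=\; \{Q O_{\partial/\partial z}, -\},
\]
so the problem reduces to computing $Q O_{\partial/\partial z}$ and recognizing it as the Noether current generating $z$-translation.

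Each piece of $Q O_{\partial/\partial z}$ is easy to evaluate. First, $\Delta O_{\partial/\partial z} = 0$ because $O_{\partial/\partial z}$ is linear in $b$ while $\Delta$ is a second-order differential operator with constant coefficients. Second, viewed dually, $\dbar$ acts on the coefficient $\partial/\partial z$ appearing in the integrand, and $\partial/\partial z$ is a holomorphic vector field, so $\dbar O_{\partial/\partial z} = O_{\dbar(\partial/\partial z)} = 0$. Third, $\{I^\q, O_{\partial/\partial z}\}$ picks up only terms of $I^\q$ that contain the field $c$, since $\delta O_{\partial/\partial z}/\delta c = 0$ and $\delta O_{\partial/\partial z}/\delta b = \partial/\partial z$. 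A direct calculation using the Lie brackets $[\partial/\partial z, c] = \partial_z c$ and $[\partial/\partial z, \gamma] = \partial_z \gamma$ shows that the two classical cubic terms contribute
\[
\{I, O_{\partial/\partial z}\} \;=\; \int_\CC \langle b, \partial_z c\rangle_T \;+\; \int_\CC \langle \beta, \partial_z \gamma\rangle_V,
\]
up to an overall sign. The one-loop quantum correction to the renormalized action depends only on $c$ and $\gamma$ (as was used in the proof of Proposition~\ref{prop anomaly}), so no $\hbar$-corrections arise in $Q O_{\partial/\partial z}$.

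Finally, we recognize the right-hand side as the Noether current for the translation $\partial/\partial z$, i.e.\ the relevant component of the stress-energy tensor $T_{zz}$ of the theory. It remains to verify by a direct computation that the Hamiltonian vector field of this current with respect to the shifted Poisson structure acts on observables as $\partial/\partial z$: pairing $\delta T_{zz}/\delta \gamma$ against $\delta F/\delta \beta$ and $\delta T_{zz}/\delta c$ against $\delta F/\delta b$ (and the reverse pairings) reproduces the total $z$-derivative of an arbitrary polynomial observable $F$. Combining these steps yields $[Q,\eta] = \{T_{zz}, -\} = \partial/\partial z$, as claimed.

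The main technical obstacles are the careful bookkeeping of signs in the $-1$-shifted symplectic pairing (especially between the $b$-$c$ and $\beta$-$\gamma$ sectors of different cohomological degrees), so that $+\partial/\partial z$ and not $-\partial/\partial z$ appears, and the confirmation that no one-loop quantum correction contains a $b$-leg; any such term would contribute an $\hbar$-correction to $Q O_{\partial/\partial z}$ that would spoil the clean identification with the classical Noether current.
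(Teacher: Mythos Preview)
Your proposal is correct and follows essentially the same route as the paper's proof: both reduce $[Q,\eta]$ to $\{\{I^\q,O_{\partial/\partial z}\},-\}$ by showing that the $\dbar$ and $\Delta$ pieces contribute nothing, then compute that bracket explicitly and identify it with the translation operator. One small inaccuracy: the one-loop term $S_1[L]$ depends only on the $c$~field (not on $c$ and $\gamma$), though this does not affect your argument since the relevant point is the absence of a $b$-leg.
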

\begin{proof}
The derivation $\eta$ commutes with both $\dbar$ and $\Delta$. 
Thus, the left-hand side reduces~to
\[
[\{I, -\}, \eta] = \{ \{I, O_{\frac{\d}{\d z}}\}, -\}.
\]
The only part of the interaction that contributes is $\int \<\beta, c \cdot \gamma\> + \int \<b, [c,c]\>$, and one computes that
\ben
\{I, O_{\frac{\d}{\d z}}\} = \int \<\beta, \partial_z \gamma\> + \int\<b, [\partial_z, c]\> .
\een
Bracketing with this local functional encodes applying $\frac{\d}{\d z}$ to the inputs, as desired.
(In diagrammatic terms, this feature is almost immediately visible.)
\end{proof}

Together these two lemmas ensure that translations act trivially on the cohomological observables.

\subsubsection{Descent for local operators}

We will now sketch an important consequence of the work above.
As we will see, it gives both an approach to the method of descent 
(expositions of this method, as related to two-dimensional gravity, can be found in \cite{WittenDescent,Dijk})
as well as another explicit description of the $E_2$ algebra associated to the quantum observables of the bosonic string.

The key role here is played by observables that are local,
in the sense discussed in Section~\ref{sec: moduli},
where they appeared in our description of the deformation complex, 
but we revisit now the main idea in a more useful form for our current purposes.
We will focus first on the classical theory, 
where the constructions manifestly make sense,
before discussing what needs to be modified in the quantum setting.

Let $J E_{\rm string}$ denote the $\infty$-jet bundle of the classical fields of the holomorphic bosonic string. 
Concretely, a fiber of $J E_{\rm string}$ at a point $x$ corresponds to all the possible Taylor series at $x$ of fields of the bosonic string.
In consequence, the $\infty$-jet of a $\gamma$ field determines a section of $J E_{\rm string}$, 
as does the $\infty$-jet of any other field in the theory.
This bundle $J E_{\rm string}$ is equipped with a canonical flat connection $\nabla^{jet}$ 
such that horizontal sections are precisely the $\infty$-jets of classical fields;
and so $J E_{\rm string}$ is a left $D$-module on the Riemann surface, 
where $D$ means the sheaf of smooth differential operators.
(See the appendix of \cite{GGLA} for expository background oriented toward the approach here.)

Lagrangians can be expressed naturally in terms of $J E_{\rm string}$, 
as sections of the bundle
\[
\Sym(J E_{\rm string}^\vee) = \bigoplus_{k \geq 0}\Sym^k(J E_{\rm string}^\vee),
\]
where $J E_{\rm string}^\vee$ denotes the appropriate dual vector bundle. 
(Some care is required here because $J E_{\rm string}$ is a pro-finite rank vector bundle.)
To unpack this assertion a little, note that a smooth section $\lambda$ of $J E_{\rm string}^\vee$ can be evaluated on the $\infty$-jet of a field to obtain a smooth function on the Riemann surface;
it thus determines a linear functional of fields with values in functions on the surface.
Similarly, a polynomial in such $\lambda$ determines a nonlinear functional on fields with values in smooth functions.
In other words, it is a Lagrangian.
If we multiply it against a density, then we obtain a Lagrangian density and hence a local functional.

Note that sections of $\Sym(J E_{\rm string}^\vee)$ naturally form a graded-commutative algebra, 
since polynomials can be multiplied.
We denote it by~$\sO(J E_{\rm string})$.
The shifted pairing on fields determines a shifted Poisson bracket on $\sO(J E_{\rm string})$,
which we will denote $\{-,-\}$,
since the construction is parallel to the BV bracket.

We will restrict our attention from hereon to $\Sigma = \CC$,
on which $\d^2 z$ determines a natural volume form.
The classical action functional $S$ thus determines a Lagrangian (simply divide by this volume form) that we will abusively denote $S$ as well.
The operator $\{S,-\}$, known as the BRST operator in physics, is square-zero by construction.
Hence, we obtain a commutative dg algebra 
\ben
\left(\sO(J E_{\rm string}), \{S,-\}\right)
\een
in left $D$-modules,
where a flat connection is inherited by the dual bundle and hence by the symmetric powers.

Elements of $\left(\sO(J E_{\rm string}), \{S,-\}\right)$ are not observables of the classical theory,
since they are just Lagrangians.
There is, however, a natural way to produce observables from Lagrangians.
Essentially, a Lagrangian can be multiplied by a de Rham current;
for instance, evaluating a delta function with a Lagrangian produces a pointlike observable
(or local operator in the terminology of physics).

We choose to encode this idea in the following way.
Consider the de Rham complex $\Omega^*(\CC , \sO(J E_{\rm string}))$ of our $D$-module,
equipped with the total differential $\nabla^{jet} + \{S, -\}$. 
It consists of smooth de Rham forms with values in Lagrangians.
These determine observables supported on closed submanifolds, as follows.
For a closed submanifold $C \subset \CC$, 
fix a tubular neighborhood $N_C$.
Integration along $C$ then determines a map 
\begin{equation}
\label{eqn: int over C}
\int_C : \Omega^*(N_C , \sO(J E_{\rm string})) \to \Obs^{\cl}_{\rm string}(N_C) 
\end{equation}
of cochain complexes.

Many of the most familiar observables arise in this fashion.
For instance, take $C$ to be the point at the origin and consider
\[
F(\gamma,\beta,b,c) = \frac{1}{n!}(\partial_z^n \gamma)\big|_{0},
\]
which returns the coefficient of $z^n$ in the Taylor expansion of $\gamma$ around the origin.
(Note that evaluation at $0$ denotes integration at the origin.)
It is easy here to factor $F$ into a Lagrangian term and a de Rham form term: 
the Lagrangian is the linear functional that returns the function $\frac{1}{n!}\partial_z^n \gamma$, 
and the form is the constant function~1.

As a closely related example, take $C$ to be the unit circle and consider
\[
F(\gamma,\beta,b,c) = \frac{1}{2\pi i}\int_C \gamma(z) \frac{\d z}{z^{n+1}}.
\]
For ``on-shell'' $\gamma$ fields---i.e., when $\gamma \in C^\infty(\CC)$ is holomorphic)---this observable $F$ returns the coefficient of $z^n$ in the Taylor expansion around the origin.
It is easy here to factor $F$ into a Lagrangian term and a de Rham form term: 
the Lagrangian is the linear functional that simply returns the function $\gamma$, 
and the form is $(1/2\pi i)\d z/z^{n+1}$, which is well-defined on any small annulus around~$C$.

These concrete examples exhibit a compelling virtue of this process for producing observables: 
it encompasses the observables typically discussed in physics, 
particularly in conformal field theories.
In our situation it is straightforward to show that the entire vertex algebra $\Gr \cV_{\rm string}$ is realized by the image of the map~(\ref{eqn: int over C}).
(Such an argument is given in Part III of \cite{GGW} for the free $\beta\gamma$ system.)

After this lengthy build-up of notions and notations, 
we now finally turn to describing descent.
Our interpretation is that it is a process for promoting pointlike operators to more general observables.
We will soon apply it to give an explanation for the Gerstenhaber structure on $H^* \cV_{\rm string}$,
identified by Lian-Zuckerman.

\begin{dfn}
A {\em pointlike operator} is an element of $\Sym(J E_{\rm string}^\vee)_0$, 
the fiber at the origin $0 \in \CC$ of the bundle~$\Sym(J E_{\rm string}^\vee)$.
\end{dfn}

Equivalently, it is an element of the algebra $\Sym((J_0E_{\rm string})^*)$ of polynomial functions on the fiber at $0$ of $J E_{\rm string}$.
Since our theory is translation-invariant, 
any point in the plane would serve as well as the origin.
Note that this definition is equivalent to our earlier, heuristic notion.

\begin{construction}[Method of "descent"] 
\label{constr: descent}
Any pointlike operator $O$ {\em descends} to an element 
\[
\Tilde{O} = \Tilde{O}^{0} + \Tilde{O}^1 + \Tilde{O}^2 
\] 
in $\Omega^*(\CC, \sO(JE_{\rm string}))$.
We construct it as follows. 
First, because our theory is translation-invariant,
there is a natural trivialization of the bundle $\Sym(J E_{\rm string}^\vee)$,
and hence there is a canonical element $\Tilde{O}^0 \in C^\infty(\CC, \sO(JE_{\rm string}))$,
given by a constant section whose value at the origin is $O$.
In formulas, we write 
\ben
\Tilde{O}^{0} (z) = \tau_z O,
\een
where $\tau_z: \CC \to \CC$ is the translation sending a point $w$ to $w+z$.
(This operator acts on fields by pullback, and so on the jets of fields as well.)
Using the homotopies $\eta,\Bar{\eta}$, 
we define the 1-form part as
\ben
\Tilde{O}^{1} = \d z \; \Tilde{(\eta O)}^0 + \d \zbar \; \Tilde{(\Bar\eta O)}^0,
\een
where $\eta O$ denotes the image of $O$ under the map $\eta$.
Similarly, the 2-form part is 
\[
\Tilde{O}^2 = \d z \,\d \zbar\, \Tilde{(\eta \Bar{\eta} O)}^0.
\] 
By construction, 
if $O$ is closed for the classical differential $\{S,-\}$,
then 
\[
(\d_{dR} + \{S,-\}) \Tilde{O} = 0,
\]
so the total form $\Tilde{O}$ is closed as well.
\end{construction}

\begin{rmk}
The terminology ``descent" is due to Witten \cite{WittenTop}.
Our construction above is a method of solving what he refers to as the topological descent equations. 
Mathematically, we are performing a zig-zag in the double complex given by the de Rham complex of the flat vector bundle of Lagrangians. 
In the horizontal direction, the Lagrangians are equipped with the BRST operator $\{S,-\}$.
In the vertical direction there is the differential induced from the flat connection. 
\end{rmk}

Combining the construction with the map~(\ref{eqn: int over C}), 
we find that a pointlike operator $O$ and a closed submanifold $C$ determine an observable
\ben
\int_{C} \Tilde{O} \in \Obs^\cl_{\rm string} (N_C) .
\een
Note that if $O$ has cohomological degree $k$ and $C$ is of dimension $l$, 
then $\int_C \Tilde{O}$ has degree~$k - l$. 
We remark that  every element of $\Gr \cV_{\rm string}$ can be realized by applying descent to the pointlike operators and then evaluating at the origin.

Extending this whole package to quantum observables is nontrivial. 
The map~(\ref{eqn: int over C}) makes sense at the level of graded vector spaces,
but it is not easy to equip $\sO(J E_{\rm string})$ with a BV Laplacian 
in such a way that the map~(\ref{eqn: int over C}) intertwines with the differential on the quantum observables.
For linear observables, however, no such issues arise
({\em cf.} Part III of \cite{GGW}), 
and those are sufficient to identify the Gerstenhaber bracket,
the problem to which we now turn.

\subsubsection{Formula for the Gerstenhaber bracket}

A Gerstenhaber algebra is a graded commutative algebra with a Lie bracket of cohomological degree $-1$ that is a graded biderivation for the product. 
In this section we show how to explicitly write down the product and bracket on the local observables (i.e., the observables on any disk) and compare our answer to the work of Lian-Zuckerman~\cite{LZ1}.

As explained just before Proposition~\ref{prop: obs is e2}, 
the Gerstenhaber operad $Gerst$ is the operad $H_*(E_2)$ arising by taking homology of the $E_2$ operad.
Recall that $E_2(2)$ parametrizes the space of binary operations as the configuration space of disjoint two disks in the unit disk in $\RR^2$.
This space deformation retracts onto $S^1$.
Hence 
\[
Gerst(2) = H_*(E_2(2)) \cong H_*(S^1). 
\]
(Note that we view homology of spaces as concentrated in nonpositive degrees,
since it is viewed as the linear dual to cohomology.)
The degree zero operation---corresponding to a commutative product---matches with a zero-dimensional cycle of $S^1$,
and the degree -1 operation---corresponding to the shifted Poisson bracket---matches with a one-dimensional cycle of~$S^1$. 

Thus, to obtain the commutative product on $H^*\Obs^\q$, 
we need only pick an embedding of two disjoint disks inside a larger disk,
which is precisely such a zero-cycle in $E_2(2)$.
Then the factorization product
\ben
\Obs^\q(D) \tensor \Obs^\q(D') \to \Obs^\q(D'') .
\een 
induces the commutative product
\[
\cdot: H^*\Obs^\q \otimes H^*\Obs^\q \to H^*\Obs^\q.
\]
Since this configuration space $E_2(2)$ is connected, 
we could use any other choice of embeddings and get the same answer at the level of cohomology.
In particular, we could have put $D'$ on the opposite side of $D$,
which is why the product must be commutative.
(A topologist would call this the Eckmann-Hilton argument,
as it is the same argument one uses to show that the homotopy group $\pi_2(X)$ is always abelian.)

To construct the shifted Poisson (i.e., Gerstenhaber) bracket, 
we need to pick a one-cycle in the configuration space $E_2(2)$.
To describe the associated binary operation, 
we use descent along this one-cycle
in conjunction with the fact---s remarked just after Construction~\ref{constr: descent}---that
the underlying vector space of the vertex algebra $\cV_{\rm string}$ is generated by pointlike operators.

Let $O$ and $O'$ be two pointlike operators supported at $0 \in \CC$. 
Fix a disk $D$ centered at zero.
We can view $O$ as an element in~$\Obs^\q(D)$.

Now fix a loop $C$ that wraps around $D$ but is disjoint.
Let $N_{C}$ be a tubular neighborhood of $C$ that does not intersect $D$. 
Via descent we have the observable $\int_C \Tilde{O}'$ in~$\Obs^\q(N_C)$. 
Pick a bigger disk $D'$ containing $D$ and $N_C$.
Then $\int_C \Tilde{O}'$ also determines an observable in~$\Obs^\q(D')$. 

We suppose that these observables are cocycles, 
which lets us identify them with elements $[O]$ and $[O']$ of the vertex algebra.

Consider now the factorization product
\[
\mu : \Obs^\q(D) \tensor \Obs^\q (N_{C}) \to \Obs^\q(D')
\] 
corresponding to $D \sqcup N_C \hookrightarrow D'$. 
We define a bracket by 
\ben
\{[O],[O']\}_{\rm Ger} := \mu \left(O , \int_{C} \Tilde{O}' \right) .
\een 
Note that if ${\rm deg}(O) = k$ and ${\rm deg}(O') = k'$, then ${\rm deg}(\{[O],[O']\}_{\rm Ger}) = k+k' -1$, 
so we obtain a bracket of the correct degree to define a Gerstenhaber structure. 

\begin{rmk}
This construction manifestly involves picking a 1-cycle, here $C$, to exhibit the bracket, 
and it should be clear geometrically how we could relate to any other choice~$C'$.
If $C$ and $C'$ do not intersect, they bound an annulus and hence determine cohomologous observables.
(One may have to shrink $D$ in the construction, but that is no issue by local constancy.)
If they do intersect, one can choose a $C''$ that does not intersect either, and then one has a pair of cohomologous terms.
As the terms are cohomologous, they induce the same brackets at the level of cohomology.
\end{rmk}

We now connect our constructions with well-known approaches.

\begin{prop} 
The bracket $\{-,-\}_{\rm Ger}$ together with the product $\cdot$ determine the structure of a Gerstenhaber algebra on $H^* \cV_{\rm string}$, 
the cohomology of the dg vertex algebra $\cV_{\rm string}$. 
This Gerstenhaber structure is isomorphic to the one found by Lian-Zuckerman~\cite{LZ1}.
\end{prop}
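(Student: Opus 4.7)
The plan is to leverage the $E_2$-algebra structure already established in Proposition \ref{prop: obs is e2} and then identify the induced operations on cohomology with those defined via the factorization product and descent. Since $H_*(E_2) \cong Gerst$ as operads, any $E_2$-algebra automatically determines a Gerstenhaber algebra on cohomology. The real content of the proposition is therefore a two-part identification: first, that the Gerstenhaber operations inherited from $\Obs^{\q}_{\rm string}$ agree with the product and bracket described in terms of disks and descent; second, that these agree with the operations constructed by Lian-Zuckerman in \cite{LZ1}.

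First I would verify the product. The commutative product on $H^* \cV_{\rm string}$ is, by the $E_2$-structure, induced by any point in the connected space $E_2(2)$, i.e., by any factorization structure map $\Obs^\q(D) \otimes \Obs^\q(D') \to \Obs^\q(D'')$ for disjoint disks $D, D' \subset D''$. Under the identification $H^* \cV_{\rm string} \cong H^*\Vert(\Obs^\q_{\rm string})|_{\hbar = 2\pi i}$ of Proposition \ref{prop: fact is vert}, this is (by the construction in Theorem 5.2.3.1 of \cite{CG1}) precisely the normally-ordered product on the vertex algebra cohomology, which is the commutative product used in \cite{LZ1}.

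Next I would verify the bracket. The Gerstenhaber bracket from the $E_2$-structure is given by a fundamental class of $S^1 \simeq E_2(2)$, and the one-cycle given by the loop $C$ wrapping $D$ in the construction above is such a fundamental class. The crucial computation is then to show that $\mu(O, \int_C \widetilde{O}')$ equals the residue formula $\operatorname{Res}_{z=0} O'(z) \cdot O$ that Lian-Zuckerman use to define their bracket. This follows from the descent construction: $\widetilde{O}'{}^1 = \d z\, \widetilde{(\eta O')}^0 + \d\bar z\, \widetilde{(\bar\eta O')}^0$, and since $\bar\eta$ acts trivially modulo exact terms on cohomology by the $\bar\eta$-lemma preceding the lemma on $\partial/\partial\bar z$, integration of $\widetilde{O}'$ over $C$ reduces modulo the differential to a holomorphic contour integral which, by the usual state--operator correspondence for chiral CFTs, is the residue extracting the first-order pole in the OPE of $O'$ with $O$. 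The identification of the pointlike operators with vertex operators identifies this with Lian-Zuckerman's bracket on the nose.

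The main obstacle will be the second step, specifically the residue identification. One delicate point is that Lian-Zuckerman's bracket is built from the BRST current $c(z) T_{\rm string}(z)$ and the chain homotopy trivializing $T_{\rm string}$; one must check that the degree $-1$ operator $\eta = \{O_{\partial_z}, -\}$ constructed from the $b$-field pairing is precisely the chain homotopy implementing multiplication by $b_{-1}$ (i.e., the operator usually denoted $G_{-1}$ or $b_{-1}$ in the BRST/string literature) used to define the Lian-Zuckerman bracket. This is plausible because equation (\ref{d/dz}) shows $[Q, \eta] = \partial_z$, which is exactly the defining relation $[Q, b_{-1}] = L_{-1}$ of the string $b$-ghost trivialization of translations; together with the fact that the linear observables realize the full state space (as in Part III of \cite{GGW}), this should give the desired identification. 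Once the two brackets agree on generators and both are biderivations for the (identified) commutative product, agreement on all of $H^* \cV_{\rm string}$ follows formally.
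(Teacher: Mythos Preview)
Your proposal is correct and follows essentially the same line as the paper's own proof: identify the commutative product with the factorization/operator product (hence with Lian--Zuckerman's normally-ordered product), and identify the bracket by recognizing that the homotopy $\eta$ trivializing $\partial/\partial z$ coincides with Lian--Zuckerman's trivialization (i.e., the $b_{-1}$ operator satisfying $[Q,b_{-1}]=L_{-1}$). The only minor difference is that you invoke the $E_2$ structure of Proposition~\ref{prop: obs is e2} to obtain the Gerstenhaber axioms for free, whereas the paper instead deduces them by transporting the structure from Lian--Zuckerman's result once the operations are matched; both routes are valid and the paper's proof is in fact somewhat terser than yours on the identification steps.
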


\begin{proof}
The vertex algebra construction of \cite{CG1} extracts $\cV_{\rm string}$ as the direct sum of the weight spaces of $\Obs^\q_{\rm string}(D)$, 
where $D$ is a disk centered at the origin and we take weight space for the rotation action of $S^1$ on $\CC$.
The bracket and product restrict to this subspace of $\Obs^\q(D)$,
manifestly playing nicely with this eigenspace decomposition. 
Hence they descend to the cohomology of~$\cV_{\rm string}$.

Let $V_{LZ}$ be the Gerstenhaber algebra considered by Lian-Zuckerman.
As vector spaces, both $H^* \cV_{\rm string}$ and $V_{LZ}$ are isomorphic to the state space of the $\beta\gamma$ vertex algebra.

According to the construction of a vertex algebra from a holomorphic factorization algebra in Chapter 6 of \cite{CG1}, the factorization product of two disks is what defines the operator product map $Y(-,z) : V \tensor V \to V ((z))$ of a vertex algebra.
It is this operator product that Lian-Zuckerman use to define the commutative product.
Thus, as commutative algebras, the algebras coincide. 

The brackets coincide by noting that the derivation $\eta$ trivializing $\d / \d z$ agrees with Lian-Zuckerman's trivialization.
\end{proof}

\section{The holomorphic string on closed Riemann surfaces} 
\label{sec: conformalblock}

Thus far we have discussed the local behavior of the holomorphic string,
such as its quantization on a disk and the concomitant vertex algebra.
Now we turn to its global behavior, 
particularly the observables on a closed Riemann surface,
and the relationship with certain natural holomorphic vector bundles on the moduli space of Riemann surfaces.
This local-to-global transition is where the BV/factorization package really shines.
On the one hand, the theory of factorization algebras provides a conceptual characterization of the local-to-global relationship,
much like the understanding of sheaf cohomology as the derived functor of global sections.
On the other hand, the examples from BV quantization provide computable, convenient models for the global sections,
much as the de Rham or Dolbeault complexes do for the cohomology of sheaves that arise naturally in differential or complex geometry.

As we will explain, the answers we recover for the holomorphic string can be related quite cleanly to natural determinant lines on the moduli of Riemann surfaces,
hence providing a bridge from the Feynman diagrammatic anomaly computations to the index-theoretic computations.

\subsection{The free case}

Before jumping to the holomorphic string, 
we will work out the global observables in the simpler case of the $bc\beta\gamma$ system,
introduced in Remark \ref{rmk:bcbg}. 
The global {\it classical}\/ observables on a Riemann surface $\Sigma$ are given by the symmetric algebra on the continuous linear dual to the fields,
\[
\Sym\left(\Omega^{0,*}(\Sigma,V)^\vee \oplus \Omega^{1,*}(\Sigma,V^\vee)^\vee \oplus \Omega^{0,*}(\Sigma,T[1])^\vee \oplus \Omega^{1,*}(\Sigma,T^*_\Sigma[-2])^\vee \right),
\]
with the differential $\dbar$ extended as a derivation.
Hence the cohomology is
\[
\Sym\left(H^*(\Sigma,V)^\vee \oplus H^*(\Sigma,\omega \otimes V^\vee)^\vee \oplus H^*(\Sigma,T[1])^\vee \oplus H^*(\Sigma,\omega^{\otimes 2}[-2])^\vee\right),
\]
where $\omega$ denotes the canonical bundle.
Although this expression might look complicated, 
it can be readily unpacked in the setting of algebraic geometry, 
particularly when $\Sigma$ is closed.
In that case, this graded commutative algebra is a symmetric algebra on a finite-dimensional graded vector space,
which encodes the derived tangent space of the moduli of Riemann surfaces at $\Sigma$ and of holomorphic functions to~$V$.

As this theory is free, it admits a canonical BV quantization.
Denote by $\Obs^{\q}_{free}$ be the corresponding factorization algebra.
One can compute its global sections on $\Sigma$ by using a spectral sequence whose first page is the global classical observables.
The result of Theorem 8.1.4.1 of \cite{CG1} states that the cohomology of this free theory along a closed Riemann surface with values in {\em any} line bundle is one-dimensional concentrated in a certain cohomological degree. 
In our case, it the calculation implies that we get a shifted determinant of the cohomology of the fields:
\ben
H^*\left(\Obs^\q_{free}(\Sigma)\right) \cong \det \left(H^*(\Sigma ; \sO_\Sigma) \right)^{\tensor \dim(V)} \tensor \det \left(H^*(\Sigma ; T_\Sigma^{1,0})\right)^{-1} [d(\Sigma)] 
\een
where 
\ben
d(\Sigma) = \dim (V)  \left(\dim H^0(\Sigma ; \sO_\Sigma) + \dim H^1(\Sigma ; \sO_\Sigma)\right) + \dim(H^0(\Sigma ; T_\Sigma^{1,0})) - \dim(H^1(\Sigma ; T_\Sigma^{1,0})).
\een

\begin{rmk}
The shift $d(\Sigma)$ here likely looks funny.
In this case at least, the meaning can be unpacked pretty straightforwardly. 
The BV complex for an ordinary finite-dimensional vector space is equivalent to the de Rham complex shifted down by the dimension of the vector space, 
so that the top forms are in degree 0.
(Abstracting this situation is one way to ``invent'' the BV formalism.)
For the $\sigma$-model, the global solutions to the equations of motion are $H^0(\Sigma,\sO) \otimes V$ for the $\gamma$ fields and $H^0(\Sigma,\omega) \otimes V^\vee$ for the $\beta$ fields.
For $\Sigma$ closed, these are finite-dimensional, and thus we get the shift
\[
 \dim (V)  \left(\dim H^0(\Sigma ; \sO_\Sigma) + \dim H^1(\Sigma ; \sO_\Sigma)\right).
\]
For the ghost system (the $bc$ fields), 
the BV complex recovers the Euler characteristic 
\[
\dim(H^0(\Sigma ; T_\Sigma^{1,0})) - \dim(H^1(\Sigma ; T_\Sigma^{1,0}))
\]
as it encodes the de Rham complex on the formal quotient stack $B\fg = \ast/\fg$ for the Lie algebra of symmetries~$\fg$.
\end{rmk}

The computation here works for any Riemann surface $\Sigma$ and, indeed, for any family of Riemann surfaces.
Hence it implies that the global observables of the free $bc\beta\gamma$ system determine a determinant line bundle on the moduli $\cM$ of Riemann surfaces.

We can work out the first Chern class of this determinant line bundle using the Grothendieck-Riemann-Roch (GRR) theorem as follows.
Consider the universal Riemann surface $\pi \colon C \to \cM$ over the moduli space, 
and consider the bundles $\sO_C \otimes V$ and $\sT_\pi = \sT_{C/\cM}$,
which one can view the universal $\gamma$ fields and $c$ fields, respectively.
The first Chern class of the derived pushforward $R\pi_*(\sO_C \otimes V)$ is given by the first Chern class of $\det(H^*(\sO_C \otimes V)) \cong \det(\sO_C)^{\otimes \dim V}$, 
since the first Chern class of a vector bundle is the first Chern class of its determinant bundle.
The Grothendieck-Riemann-Roch theorem states that for a complex of coherent sheafs $\cF = \cF^*$ on $C$, 
the Chern character $\ch(R\pi_* \cF)$ of its derived pushforward $R\pi_* \cF$  is given by 
\def\Td{{\rm Td}}
\begin{align*}
\pi_*( \ch(&\cF) \Td(T_\pi)) = \pi_*\left( \left(\sum_{i} (-1)^i \ch (\cF^i) \right) \Td(T_\pi)\right) \\
&= \pi_*\left( \left(\sum_i (-1)^i ({\rm rk}(\cF^i) + c_1(\cF^i) + \frac{1}{2}(c_1(\cF^i)^2) + \cdots)\right) (1 +\frac{1}{2}c_1(T_\pi) + \frac{1}{12} c_1(T_\pi)^2)\right)
\end{align*}
where $T_\pi$ denotes the relative tangent bundle along $\pi$,
which is here just the tangent line bundle of a Riemann surface.
The first Chern class is the component of cohomological degree 2.
For instance, when $\cF = \cF^0$ is concentrated in degree zero, the above simplifies to:
\ben
\frac{1}{12} {\rm rk}(\cF) c_1(T_\pi)^2 + \frac{1}{2} c_1(\cF) c_1(T_\pi) + \frac{1}{2} c_1(\cF)^2 .
\een  
When $\cF = T^{\tensor n}_{\pi} [1]$, the expression for the first Chern class is $-\frac{1 + 6n + 6n^2}{12} c_1(T_\pi)^2$.
When $\cF = \cO \tensor V$ we simply get $\dim(V)$. 

Hence, when $\cF = T[1] \oplus \cO \tensor V$, for the determinant line of global observables $H^*\left(\Obs^\q_{free}(C)\right)$ as a bundle over $C$ we obtain
\[
c_1\left(H^*\left(\Obs^\q_{free}(\Sigma)\right)\right) = \frac{1}{12} (\dim(V) - 13) c_1(T_\pi)^2 .
\]

It is worthwhile to point out that the above argument based on GRR for identifying the first Chern class of this determinant line bundle resonates with our computation of the anomaly of the bosonic string on the disk. Indeed, this is a manifestation of ``Virasoro uniformization.'' 
Also, notice that the above calculation assumed that there was no deformation, so that we were working with a free theory. 
However, deforming the action from free $bc\beta\gamma$ system to holomorphic bosonic string doesn't affect the line bundles, since those are continuous parameters and Chern classes are discrete.

\subsection{The anomaly and moduli of quantizations on an arbitrary Riemann surface}

We have already seen that the holomorphic string {\it on a disk} admits a BV quantization if and only if the target is a complex vector space of dimension 13.
Here we will explain why this anomaly calculation is actually enough to show the existence of a quantization on an {\it arbitrary} Riemann surface. 
An argument using the Grothendieck-Riemann-Roch theorem was given in the above section. 
In this section we give a proof using only the perspective of BV quantization.
One can view this as giving a proof of the Grothendieck-Riemann-Roch theorem using Feynman diagrams (and will be the topic of future work). 

Our diagrammatic arguments showed that only wheels with $c$ legs appear in the anomaly,
and these arguments did not depend on the choice of $\Sigma$. 
Hence the anomaly will be purely a functional on the $c$ fields.
So we restrict ourselves to the piece of the deformation complex only involving such fields.

When $\Sigma$ is a disk,
a corollary of the calculations in Section~\ref{sec: moduli} is that this deformation complex for the $c$-fields is quasi-isomorphic to $\cred^*(W_1)[2]$.
A classical calculation of Gelfand and Fuks shows that the reduced Lie algebra cohomology $H_{red}^*(W_1)$ of formal vector fields is one-dimensional concentrated in degree $3$.
Thus, the cohomology of the deformation complex is 
\[
H^*_{\rm red}(W_1)[2] = \CC[-1].
\]
The generator of this cohomology can be taken to be $\lambda_{-1} \lambda_0 \lambda_1$, where $\lambda_i$ is dual to the formal vector field $L_i = z^{i+1} \partial_z$. 
Some readers might recognize this cocycle as a manifestation of the usual cocycle for the Virasoro Lie algebra.

As shown in Proposition 5.3 of \cite{BWvir}, this situation generalizes from disks to arbitrary Riemann surfaces:
on any Riemann surface $\Sigma$, the deformation complex is equivalent to the derived global sections of $\CC_\Sigma[-1]$, the constant sheaf on $\Sigma$ with a cohomological shift.
In particular, the cohomology of the deformation complex is equal to the shift of the de Rham cohomology $H^*(\Sigma)[-1]$. 
Hence, the first cohomology group of the deformations---where the anomaly lives---is spanned by the constant functions.
When $\Sigma$ is connected, we thus know the anomaly, up to scale.
Locally, this anomaly cocycle agrees with the usual expression for the generator of $H^3(W_1)$, 
but since this description depends on the choice of a coordinate, 
the global version is somewhat subtle. 
See Section 5 of \cite{BWvir} for an extensive discussion. 

The sheaf-theoretic nature of the deformation complex works to our advantage here.
As the construction of BV quantization is manifestly {\em local-to-global} on spacetime 
anomalies inherit this property: the anomaly computed on an open set $U \subset \Sigma$ is equal to the anomaly of the theory on $\Sigma$ restricted to $U$. 
In our case, this global anomaly is a 1-cocycle for the derived global sections of the shifted constant sheaf, 
and hence is determined by a constant function on $\Sigma$.
Thus, it suffices to compute the anomaly on an arbitrary open, 
such as on a flat disk. 
But this is precisely the context in which we computed the anomaly in Section~\ref{sec: quantization}, 
so we know the anomaly is simply the dimension of the target vector space.
Thus, a quantization of the holomorphic string exists on any Riemann surface provided~$\dim_{\CC}(V) = 13$. 

Now we ask how many such quantizations are possible,
i.e., what is the moduli of theories.
By the calculation in Section \ref{sec: moduli}, 
we know that, up to BV equivalence, 
the possible one-loop terms in the quantized action functional are parametrized by
\ben
H^0(\Sigma) \tensor \Omega^1(V) \oplus H^1(\Sigma) \tensor \Omega^2_{cl}(V).
\een 
(That is, these vector spaces are the first cohomology group of the relevant deformation complex.)
This space of deformations corresponds to continuous parameters we can vary in the action functional.
As the isomorphism classes of line bundles form a discrete set, 
varying these continuous parameters will not change the class of the line bundle of global observables. 
In conclusion, no matter what one-loop quantization we choose, 
the cohomology of the global observables will be the same.

\subsection{Global observables for the holomorphic string}

Now, let us consider the global observables of the holomorphic string $\Obs^\q(\Sigma)$. 
Consider the filtration on the quantum observables induced by the polynomial degree of the functional. 
There is a spectral sequence abutting to the cohomology of the global observables $H^*\Obs^\q(\Sigma)$ with $E_1$ page given by the cohomology of the global observables of the free $bc\beta \gamma$ system which we have already computed:
\bestar
E_2 & \cong & \det\left(H^*(\Sigma ; T_\Sigma[1])\right) \tensor \det \left(H^*(\Sigma ; \cO_\Sigma)^{\oplus 13}\right) \\
& \cong & \det \left(H^1(\Sigma ; T_\Sigma) \right) \tensor \det \left(H^0(\Sigma ; T_\Sigma)\right)^{-1} \tensor \det \left(H^0(\Sigma ; K_{\Sigma}) \right)^{-13}
\eestar
where we have used the fact that $H^0(\Sigma ; \cO) \cong \CC$ for any $\Sigma$. 
Since this page is concentrated in a single line, we see that the spectral sequence degenerates at this page.

Let $\Sigma_{g}$ be a surface of genus $g$. Then for $g=1$ the above simplifies to
\ben
\det \left(H^1(\Sigma_1 ; T_{\Sigma_1})\right) \tensor \det \left(H^0(\Sigma_1 ; K) \right)^{-14} .
\een 
If $g \geq 2$ one has
\ben
\det \left(H^1(\Sigma_1 ; T_{\Sigma_1})\right) \tensor \det \left(H^0(\Sigma_1 ; K) \right)^{-13} .
\een
Thus the above expressions give the global observables for the holomorphic string for genus $g =1$ and $g \geq 2$, respectively. 
Compare these formulas to Witten's analysis of the bosonic string in \cite{WitString}.

\section{Looking ahead: curved targets}
\label{sec:curved}

In this section we briefly advertise our future work, which is to provide a complete analysis of the bosonic string with a complex manifold as the target. 
Our approach is a modification of our treatment of the curved $\beta\gamma$ system given in \cite{GGW}.
The main idea there was to consider the $\beta\gamma$ system with target a formal disk $\hD^n$. 
Then, in the style of Gelfand and Kazhdan's treatment of formal geometry, we show how working equivariantly for formal automorphisms allows us to globalize this theory to a complex manifold. 
In general, we find an obstruction to doing this, which is measured by the second component of the Chern character of the tangent bundle of the complex manifold. 
The appearance of the characteristic class is expected from the theory of chiral differential operators.
In fact, we show that the factorization algebra of observables descends to the sheaf of chiral differential operators on the target manifold. 

We will give a similar argument for the bosonic string. 
The key difference to the $\beta\gamma$ system is that even in the case of a flat target, the bosonic string is an interacting theory.
Nevertheless, the theory of BV quantization that is equivariant for formal automorphisms can still be applied and we arrive at the following result. 

\begin{thm}[\cite{GWcurved}] 
Consider the holomorphic bosonic string with target a complex manifold $X$. 
There exists a one-loop exact quantization if and only if
\begin{itemize}
\item[(1)] $\dim_\CC X = 13$,

\item[(2)] $\ch_2(T_X) = 0$, and

\item[(3)] $c_1(T_X) = 0$.
\end{itemize}
Moreover, if the conditions above hold, the space of all quantizations is a torsor for the abelian group
\ben
H^1(X , \Omega^2_{cl}) \oplus H^0(X, \Omega^1) .
\een
\end{thm}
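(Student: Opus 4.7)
The approach is to adapt the formal-geometry / Gelfand-Kazhdan descent strategy used in \cite{GGW} for the curved $\beta\gamma$ system. First, one works with the target taken to be the formal disk $\hat{D}^n$ and builds a one-loop BV quantization of the holomorphic bosonic string that is equivariant for the natural action of the Harish-Chandra pair $(W_n, GL_n)$, where $W_n$ is the Lie algebra of formal vector fields. The Gelfand-Kazhdan machinery, applied to the frame bundle of $T_X$, then glues these formal quantizations into a BV theory on any $n$-dimensional complex manifold $X$. The problem thus reduces to three subtasks: (i) show that an equivariant quantization on $\hat{D}^n$ requires $n=13$, (ii) identify the remaining descent obstructions as characteristic classes of $T_X$, and (iii) compute the moduli of quantizations once the obstructions vanish.

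For (i), the scalar piece of the anomaly is exactly the one computed in Section \ref{sec: quantization}: it is independent of the $W_n$-action, represents a multiple of the Gelfand-Fuks class generating $H^3_{\rm red}(W_1)$ inside the equivariant deformation complex, and is proportional to $\dim_\CC X - 13$. Forcing its vanishing at the cocycle level (not merely cohomologically) yields the first condition $\dim_\CC X = 13$. For (ii), the remaining components of the obstruction live in the $GL_n$-basic, $W_n$-relative Gelfand-Fuks complex with coefficients in jets of the matter fields. By the standard identification of this complex with differential forms on $BGL_n$, any cocycle is expressible in Chern classes of $T_X$. The ghost-matter wheels produce two such classes: the $\beta\gamma$-only wheels produce a cocycle that, exactly as in the curved $\beta\gamma$ analysis of \cite{GGW}, represents $\ch_2(T_X)$, while the wheels mixing the $bc$ and $\beta\gamma$ propagators produce a cocycle sensitive to the canonical bundle $K_X$ and hence represent $c_1(T_X)$.

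For (iii), once the obstruction vanishes the set of quantizations is a torsor for $H^1$ of the weight-zero deformation complex, globalized. The local computation of Lemma \ref{lem: def complex wt zero} identifies the degree-zero classes over a formal chart with $\Omega^1(V) \oplus \Omega^2_{cl}(V)$; Gelfand-Kazhdan descent promotes these to the global sheaf cohomology groups $H^0(X, \Omega^1) \oplus H^1(X, \Omega^2_{cl})$. A spectral-sequence argument using the filtration by symmetric polynomial degree confirms that no higher-order corrections appear and that one-loop exactness is preserved by the descent.

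The main technical obstacle will be step (ii): rigorously matching the Feynman-diagrammatic cocycles produced by the anomaly calculation with the Chern-Weil representatives of $\ch_2(T_X)$ and $c_1(T_X)$. This requires a Fedosov-type resolution of the holomorphic string on $X$ together with an explicit cocycle comparison between the jet-bundle model that naturally hosts the anomaly and a Dolbeault model for the characteristic classes. The $bc$-ghost sector in particular makes this harder than the pure $\beta\gamma$ case treated in \cite{GGW}, since one must carefully track how the mixed ghost-matter wheels probe both $T_X$ and the Dolbeault complex of $X$ to isolate the $c_1(T_X)$ contribution.
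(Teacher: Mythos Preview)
The paper does not actually contain a proof of this theorem. It appears in Section~\ref{sec:curved}, which is explicitly titled ``Looking ahead: curved targets'' and opens with ``In this section we briefly advertise our future work.'' The theorem is stated with citation \cite{GWcurved} and no argument beyond a one-paragraph sketch: ``Our approach is a modification of our treatment of the curved $\beta\gamma$ system given in \cite{GGW}\ldots We will give a similar argument for the bosonic string.'' So there is no paper-proof to compare against; the statement is an announcement.

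Your proposal is in fact a fleshed-out version of exactly the strategy the paper advertises: equivariant quantization on $\hat{D}^n$ for the pair $(W_n, GL_n)$, followed by Gelfand--Kazhdan descent, with the obstructions read off as characteristic classes and the moduli identified via the local deformation computation of Lemma~\ref{lem: def complex wt zero}. In that sense your outline is aligned with the intended approach and is a reasonable roadmap. Two cautions, though. First, you should be clear that what you have written is a proof \emph{sketch}, not a proof: each of steps (i)--(iii) hides substantial work (the equivariant renormalization, the explicit cocycle identification, the descent spectral sequence), and the paper itself defers all of this to \cite{GWcurved}. Second, your attribution in (ii) of the $c_1(T_X)$ obstruction specifically to ``mixed $bc$/$\beta\gamma$ wheels'' is plausible heuristics but not something established anywhere in the present paper; isolating which diagrams produce which characteristic class is precisely the hard computation you flag at the end, and you should not present it as already known.
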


There are two further directions we intend to address in our future work:
\begin{itemize}
\item[(1)] We have seen that the local observables for the case of a flat target return the semi-infinite BRST cohomology of the $\beta\gamma$ vertex algebra. 
We expect that the local observables in the case of a curved target produce a sheafy refinement of semi-infinite cohomology. 
This should produce a sensitive invariant of the target manifold and gives a variant of quantum sheaf cohomology. 
\item[(2)] The partition function of the curved $\beta\gamma$ system on elliptic curves is known to produce the Witten genus of the target manifold \cite{wg2}. 
For flat space, the partition function of the string is given by the Mumford form \cite{BM}. 
We propose that the partition function for the curved string produces an invariant of the target manifold analogous to the Witten genus. 
\end{itemize}


\appendix

\section{Calculation of anomaly} \label{sec:calculation}

In this section we compute the functional $F[L]$ and $G[L]$ mentioned in the proof of Proposition~\ref{prop anomaly}, hence completing the calculation of the anomaly. 

We have reduced the calculation to the weight of two wheel diagrams: A) with internal edges labeled by the $bc$ heat kernel and propagator, respectively. B) with internal edges labeled by the $\beta\gamma$ heat kernel and propagator, respectively.
The weight of A gives the functional we called $F[L]$, and the weight of B gives the functional we called~$\dim_\CC(V) G[L]$. 

We will utilize the following version of Wick expansion to evaluate the integrals below. 

\begin{lem}\label{lem wick} Let $\Phi$ be a smooth compactly supported function on $\CC$ and suppose $\tau > 0$. 
Then
\ben
\int_{\xi \in \CC} \Phi(\xi) e^{-\tau |\xi|^2/4} \d^2 \xi  = 4 \pi \cdot \tau^{-1} \left(\exp\left(\tau^{-1} \frac{\partial}{\partial \xi} \frac{\partial}{\partial \xi} \Phi\right)_{\xi = 0}\right) .
\een
\end{lem}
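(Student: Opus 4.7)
The plan is to prove this identity by the standard Wick/Gaussian-moment calculation, in three stages.

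First, I would compute the generating integral
\[
Z(s,\bar s) := \int_\CC e^{-\tau|\xi|^2/4 + s\bar\xi + \bar s \xi}\, d^2\xi
\]
by completing the square via the substitution $\xi \mapsto \xi + (4/\tau)\, s$. The shifted Gaussian integrates to the standard Gaussian normalization, and one picks up an exponential factor coming from the quadratic completion, giving
\[
Z(s,\bar s) \;=\; \frac{4\pi}{\tau}\,\exp\!\Bigl(\tfrac{4}{\tau}\,|s|^2\Bigr).
\]
Differentiating $Z$ in $s$ and $\bar s$ at the origin then extracts the Gaussian moments; rotational invariance $\xi \mapsto e^{i\theta}\xi$ forces unbalanced moments to vanish, while the balanced moments are
\[
\int_\CC \xi^m \bar\xi^n\, e^{-\tau|\xi|^2/4}\, d^2\xi \;=\; \delta_{m,n}\cdot \frac{4\pi}{\tau}\cdot n!\Bigl(\frac{4}{\tau}\Bigr)^{\!n}.
\]

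Second, I would Taylor expand $\Phi$ at the origin, integrate term by term against the Gaussian, and assemble the resulting series:
\[
\int_\CC \Phi(\xi)\, e^{-\tau|\xi|^2/4}\, d^2\xi \;=\; \frac{4\pi}{\tau}\sum_{n\ge 0}\frac{1}{n!}\Bigl(\frac{4}{\tau}\Bigr)^{\!n} \bigl(\partial_\xi^n \partial_{\bar\xi}^n \Phi\bigr)(0),
\]
which is the claimed identity, once the differential operator inside the exponential is read as $\partial_\xi \partial_{\bar\xi}$ acting on $\Phi$ at the origin.

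The one technical point worth flagging is that a general compactly supported smooth $\Phi$ is not analytic, so its Taylor series need not literally converge to $\Phi$. The identity is therefore best interpreted either as a formal power series in $\tau^{-1}$, or equivalently as an asymptotic expansion as $\tau \to \infty$, with remainder controlled by bounding the residual Taylor polynomial of $\Phi$ against the tails of the Gaussian (each additional order in the Taylor expansion contributes an extra power of $\tau^{-1}$, so truncation errors are suppressed in the relevant regime). In the actual invocation of the lemma in the anomaly computation of Proposition~\ref{prop anomaly}, the function $\Phi$ arises from explicit heat-kernel factors and their derivatives and is polynomial in $\xi$ and $\bar\xi$ once the Gaussian is factored off; the Taylor series therefore terminates and the formula holds as a genuine finite-sum identity. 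Beyond this interpretive subtlety the argument is a routine Gaussian moment computation, so I do not expect any substantial obstacle.
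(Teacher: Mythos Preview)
The paper does not actually prove this lemma; it is simply stated as a ``version of Wick expansion'' and then invoked. Your argument is the standard Gaussian-moment derivation and is correct. You rightly observe that the operator inside the exponential must be read as $\partial_\xi\partial_{\bar\xi}$ rather than the printed $\partial_\xi\partial_\xi$; your own series $\frac{4\pi}{\tau}\sum_n \frac{1}{n!}(4/\tau)^n(\partial_\xi^n\partial_{\bar\xi}^n\Phi)(0)$ also reveals that the coefficient should be $4\tau^{-1}$ rather than $\tau^{-1}$, so the statement as printed carries two typographical slips that you have effectively corrected. Your closing remark about the asymptotic versus formal interpretation, and the observation that in the actual anomaly computation the relevant $\Phi$ is polynomial so the sum terminates, is precisely the right caveat and matches how the lemma is used in the appendix.
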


We start with the weight of diagram~A. 
Use coordinates $z,w$ to denote the coordinates at each of the vertices.
Denote the inputs of the weight by the compactly supported vector fields $f(z) \partial_z$ and $g(w) \d \wbar \partial_w$.
(Note that the diagram is only nonzero if the total degree of the elements is $+1$.)
If $c(z) \partial_z$ is another vector field, the action by $f(z) \partial_z$ is given by 
\ben
[f(z) \partial_z, c(z) \partial_z] = f(z) \partial_z c(z) \partial_z - c(z) \partial_z f(z) \partial_z .
\een 
Thus, the weight of diagram $A$ can be written as the $\ell \to 0$ limit of
\be
\begin{array}{ccc}
\displaystyle
& & \int_{z,w} f(z) \partial_z P_{\ell}^L(z,w) g(w) \partial_w K_\ell(z,w) \\
&-& \int_{z,w} \partial_z f(z) P_{\ell}^L(z,w) g(w) \partial_w K_\ell (z,w) \\
&-& \int_{z,w} f(z) \partial_z P_\ell^L(z,w) \partial_w g(w) K_\ell (z,w) \\
&+& \int_{z,w} \partial_z f(z) P_\ell^L(z,w) \partial_w g(w) K_\ell (z,w) .
\end{array}
\ee
We label the integrals in each line above as I,II, III, IV, respectively. 

Using the form of the propagator in (\ref{propagator}) we see that line I is given by
\ben
{\rm I} = \frac{1}{(4 \pi)^2} \int_{(z,w) \in \CC \times \CC} \int_{t = \ell}^L f(z) g(w) \frac{1}{\epsilon^2} \frac{1}{t^3} \frac{(\zbar - \wbar)^3}{8} \exp \left(-\frac{1}{4}\left(\frac{1}{\ell} + \frac{1}{t}\right)|z-w|^2 \right)
\een
(we are omitting volume factors for simplicity). 
To evaluate this integral we change variables and apply the Wick expansion, 
Lemma~\ref{lem wick}, to one of the variables of integration. 
Indeed, introduce $\xi = z -w$, and notice that the integral simplifies to
\ben
{\rm I} = \frac{1}{(4 \pi)^2} \int_{w \in \CC} \int_{\xi \in \CC} \int_{t = \ell}^L f(\xi + w) g(w) \frac{1}{\epsilon^2} \frac{1}{t^3} \frac{\Bar{\xi}^3}{8} \exp \exp \left(-\frac{1}{4} \left(\frac{1}{\ell} + \frac{1}{t}\right)| \xi |^2 \right) .
\een
Applying Lemma \ref{lem wick} to the $\xi$-integral we see that this simplifies to
\ben
{\rm I} = \frac{1}{4 \pi} \int_{w \in \CC} \partial^3_w f(w) g(w) \int_{t = \ell}^L \frac{\ell^2 t}{(\ell + t)^4} + O(\ell)
\een
where the terms $O(\ell)$ are of order $\ell$ so are zero in the limit $\ell \to 0$. 
On the other hand, we can evaluate the remaining $t$-integral and see that in the limit $\ell \to 0$,
Line I becomes
\ben
\lim_{\ell \to 0} \; {\rm I} = \frac{1}{4 \pi} \frac{1}{12} \int_{w\in \CC} \partial^3_w f(w) g(w) \d^2 w .
\een 
We evaluate II, III, and IV in a similar fashion.

After changing coordinates and performing the Wick type integral, 
we obtain
\ben
{\rm II} = \frac{1}{4 \pi} \int_{w \in \CC} \partial^3_w f(w) g(w) \d^2 w \int_{t = \ell}^L \frac{\ell t}{(\ell + t)^3} \d t+ O(\ell) .
\een
Evaluating the remaining $t$ integral and taking $\ell \to 0$, 
this becomes 
\ben
\lim_{\ell \to 0} {\rm II} = \frac{1}{4 \pi} \frac{3}{8} \int_{w\in \CC} \partial^3_w f(w) g(w) \d^2 w .
\een 

Integral III is given by 
\ben
\frac{1}{4\pi} \int_{w \in \CC} \partial_w^3 f(w) g(w) \d^2 w \int_{t= \ell}^L \frac{\epsilon^2}{(\epsilon + t)^3} \d t + O(\ell) .
\een
In the limit $\ell \to 0$ we obtain
\ben
\lim_{\ell \to 0} \; {\rm III} = \frac{1}{4\pi} \frac{1}{8} \int_{w \in \CC}  \partial^3 f(w) g(w) \d^2 w .
\een

Finally, integral IV is
\ben
\frac{1}{4\pi} \int_{w \in \CC} \partial_w^3 f(w) g(w) \d^2 w \int_{t= \ell}^L \frac{\epsilon}{(\epsilon + t)^2} \d t + O(\ell) .
\een
In the limit $\ell \to 0$, we obtain
\ben
\lim_{\ell \to 0} \; {\rm IV} = \frac{1}{4\pi} \frac{1}{2} \int_{w \in \CC}  \partial^3_w f(w) g(w) \d^2 w .
\een

In total, the functional $F[L]$ applied to $(f(z) \partial_z, g(w) \d \wbar \partial_w)$ is given by
\ben
F[L] (f(z) \partial_z, g(w) \d \wbar \partial_w) = - \frac{1}{4 \pi} \frac{13}{12} \int_{w \in \CC}  \partial^3_w f(w) g(w) \d^2 w .
\een
Note that this functional is independent of~$L$. 

Diagram B is similar to A, except the internal edges are labeled by the $\beta\gamma$ propagator. 
Applied to the input vector fields $(f(z) \partial_z, g(w) \d \wbar \partial_w)$ the weight is given by the dimension of $V$ times the integral we computed in~$I$.
Thus
\ben
G[L](f(z) \partial_z, g(w) \d \wbar \partial_w) = \frac{1}{4 \pi} \frac{1}{12} \int_{w\in \CC} \partial^3_w f(w) g(w) \d^2 w .
\een
The proposition follows.

\bibliographystyle{amsalpha}

\bibliography{string}

\end{document}